\providecommand{\U}[1]{\protect\rule{.1in}{.1in}}
\let\pa\partial
\newtheorem{theorem}{Theorem}
\newtheorem{lemma}[theorem]{Lemma}
\newtheorem{proposition}[theorem]{Proposition}
\let\ga=\gamma
\let\de=\delta
\let\la=\lambda
\let\pa=\partial
\let\om=\omega
\begin{document}
\title[Fokker-Planck equation]{Explicit Structure of the Fokker-Planck Equation with potential}
\author{Yu-Chu Lin}
\address{Yu-Chu Lin, Department of Mathematics, National Cheng Kung University, Tainan, Taiwan}
\email{yuchu@mail.ncku.edu.tw }
\author{Haitao Wang}
\address{Haitao Wang, Institute of Natural Sciences and School of Mathematical
Sciences, Shanghai Jiao Tong University, Shanghai, China}
\email{haitallica@sjtu.edu.cn}
\author{Kung-Chien Wu}
\address{Kung-Chien Wu, Department of Mathematics, National Cheng Kung University,
Tainan, Taiwan and National Center for Theoretical Sciences, National Taiwan
University, Taipei, Taiwan}
\email{kungchienwu@gmail.com}
\thanks{The first author is supported by the Ministry of Science and Technology under
the grant MOST 105-2115-M-006-002-. The second author is sponsored by Shanghai
Sailing Program(18YF1411800) and Shanghai Jiao Tong University(WF220441907).
The third author is supported by the Ministry of Science and Technology under
the grant 104-2628-M-006-003-MY4 and National Center for Theoretical Sciences.}

\begin{abstract}
We study the pointwise (in the space and time variables) behavior of the
Fokker-Planck Equation with potential. An explicit description of the
solution is given, including the large time behavior, initial layer and
spatially asymptotic behavior. Moreover, it is shown that the structure of the
solution sensitively depends on the potential function.

\end{abstract}
\keywords{Fokker-Planck; fluid-like waves; kinetic-like waves; pointwise estimate; regularization estimate.}
\subjclass[2010]{35Q84; 82C40.}
\maketitle





\section{Introduction}

\subsection{The Models}

The Fokker-Planck equations arise in many areas of sciences, including
probability, statistical physics, plasma physics, gas and stellar dynamics.
The term \textquotedblleft Fokker-Planck\textquotedblright\ is widely used to
represent various diffusion processes (Brownian motion).

In this paper, we study the kinetic Fokker-Planck equation with potential in $\mathbb{R}^{3}$. It reads
\begin{equation}
\left\{
\begin{array}
[c]{l}%
\displaystyle\pa_{t}F+v\cdot\nabla_{x}F=\nabla_{v}\cdot\left[  \nabla
_{v}F+(\nabla_{v}\Phi)F\right]  \,,\quad x,v\in\mathbb{R}^{3},\ t>0,\\
\\
\displaystyle F(0,x,v)=F_{0}(x,v)\,,
\end{array}
\right.  \label{in.1.a}%
\end{equation}
where the potential $\Phi(v)$ is of the form
\[
\Phi=\frac{1}{\ga}\left\langle v\right\rangle ^{\ga}+\Phi_{0}\,,\ \ga>0,
\]
for some constant $\Phi_{0}.$ We define
\[
\mathcal{M}(v)=e^{-\Phi(v)}\,,
\]
with $\Phi_{0}\in\mathbb{R}\ $such that $\mathcal{M}$ is a probability
measure. It is easy to see that $\mathcal{M}$ is a steady state to the
Fokker-Planck equation (\ref{in.1.a}). Thus it is natural to study the
fluctuation of the Fokker-Planck equation (\ref{in.1.a}) around $\mathcal{M}%
(v)$, with the standard perturbation $f(t,x,v)$ to $\mathcal{M}$ as
\[
F=\mathcal{M}+\mathcal{M}^{1/2}f\,.
\]
The Fokker-Planck equation for $f(t,x,v)=\mathbb{G}^{t}f_{0}$ now takes the
form
\begin{equation}
\left\{
\begin{array}
[c]{l}%
\displaystyle\pa_{t}f+v\cdot\nabla_{x}f=\Delta_{v}f-\frac{1}{4}|v|^{2}%
\left\langle v\right\rangle ^{2\ga-4}f+\left(  \frac{3}{2}\left\langle
v\right\rangle ^{\ga-2}+\frac{\ga-2}{2}|v|^{2}\left\langle v\right\rangle
^{\ga-4}\right)  f=Lf\,,\\
\\
f(0,x,v)=f_{0}(x,v)\,,\ \ \ \left(  x,v\right)  \in\mathbb{R}^{3}%
\times\mathbb{R}^{3}\,.
\end{array}
\right.  \label{in.1.c}%
\end{equation}
Here $\mathbb{G}^{t}$ is the solution operator of the Fokker-Planck equation
(\ref{in.1.c}). It is obvious that $L$ is a non-positive self-adjoint operator
on $L_{v}^{2}.$ More precisely, its Dirichlet form is given by
\[
\left\langle Lf,f\right\rangle _{v}=-\int_{\mathbb{R}^{3}}\left\vert
\nabla_{v}f+\frac{\nabla\Phi}{2}f\right\vert ^{2}dv=-\int_{\mathbb{R}^{3}%
}\left\vert \nabla_{v}\left(  \frac{f}{\sqrt{\mathcal{M}}}\right)  \right\vert
^{2}\mathcal{M}dv.
\]
Therefore, the null space of $L$ is given by
\[
Ker(L)=\hbox{span}\left\{  E_{D}\right\}  \,,
\]
where $E_{D}=\sqrt{\mathcal{M}}$. Based on this property, we can introduce the
macro-micro decomposition as follows: the macro projection $\mathrm{P}_{0}$ is
the orthogonal projection with respect to the $L_{v}^{2}$ inner product onto
$\mathrm{Ker}(L)$, and the micro projection $\mathrm{P}_{1}\equiv
\mathrm{Id}-\mathrm{P}_{0}$.

\subsection{Main theorem}

Before the presentation of the main theorem, let us define some notation in
this paper. We denote $\left\langle v\right\rangle ^{s}=(1+|v|^{2})^{s/2}$,
$s\in{\mathbb{R}}$. For the microscopic variable $v$, we denote
\[
|f|_{L_{v}^{2}}=\Big(\int_{{\mathbb{R}}^{3}}|f|^{2}dv\Big)^{1/2},
\]
and the weighted norms $|\cdot|_{L_{v}^{2}(m)}$ and $|\cdot|_{L_{\theta}^{2}}$
can be defined by
\[
|f|_{L^{2}(m)}=\Big(\int_{\mathbb{R}^{3}}|f|^{2}mdv\Big)^{1/2}\,,\quad
|f|_{L_{\theta}^{2}}=\Big(\int_{\mathbb{R}^{3}}\left\langle v\right\rangle
^{2\theta}|f|^{2}dv\Big)^{1/2},
\]
respectively, where $m=m\left(  t,x,v\right)  $ is a weight function. The
$L_{v}^{2}$ inner product in ${\mathbb{R}}^{3}$ will be denoted by
$\left\langle \cdot,\cdot\right\rangle _{v}$,
\[
\left\langle f,g\right\rangle _{v}=\int_{{\mathbb{R}}^{3}}f(v)\overline
{g(v)}dv.
\]
For the space variable $x$, we have the similar notation. In fact, $L_{x}^{2}$
is the classical Hilbert space with norm
\[
|f|_{L_{x}^{2}}=\Big(\int_{\mathbb{R}^{3}}|f|^{2}dx\Big)^{1/2}\,.
\]
We denote the supremum norm as
\[
|f|_{L_{x}^{\infty}}=\sup_{x\in{\mathbb{R}^{3}}}|f(x)|\,.
\]
The standard inner product in $\mathbb{R}^{3}$ will be denoted by
$(\cdot,\cdot)$. For the Fokker-Planck equation, the natural space in the $v$
variable is equipped with the norm $|\cdot|_{L_{\sigma}^{2}}$, which is
defined as
\[
|f|_{L_{\sigma}^{2}}^{2}=|\big<v\big>^{\ga-1}f|_{L_{v}^{2}}^{2}+|\nabla
_{v}f|_{L_{v}^{2}}^{2}\,,
\]
and the corresponding weighted norms are defined as
\[
|f|_{L_{\sigma}^{2}(m)}^{2}=|\big<v\big>^{\ga-1}f|_{L_{v}^{2}(m)}^{2}%
+|\nabla_{v}f|_{L_{v}^{2}(m)}^{2}\,,\quad|f|_{L_{\sigma,\theta}^{2}}%
^{2}=|\big<v\big>^{\ga-1}f|_{L_{\theta}^{2}}^{2}+|\nabla_{v}f|_{L_{\theta}%
^{2}}^{2}\,.
\]
Moreover, we define
\[
\Vert f\Vert_{L^{2}}^{2}=\int_{{\mathbb{R}}^{3}}|f|_{L_{v}^{2}}^{2}%
dx\,,\quad\Vert f\Vert_{L_{\sigma}^{2}}^{2}=\int_{{\mathbb{R}}^{3}%
}|f|_{L_{\sigma}^{2}}^{2}dx\,,
\]
and
\[
\Vert f\Vert_{L_{x}^{\infty}L_{v}^{2}}=\sup_{x\in{\mathbb{R}^{3}}}%
|f|_{L_{v}^{2}}\,,\quad\Vert f\Vert_{L_{x}^{1}L_{v}^{2}}=\int_{{\mathbb{R}%
^{3}}}|f|_{L_{v}^{2}}dx\,.
\]
Finally, we define the high order Sobolev norm in $x$ variable: let
$k\in{\mathbb{N}}$ and let $\alpha$ be any multi-index,
\[
\left\Vert f\right\Vert _{H_{x}^{k}L_{v}^{2}}:=\sum_{|\alpha|\leq k}\left\Vert
\partial_{x}^{\alpha}f\right\Vert _{L^{2}}\,.
\]
The weighted spaces in the $(x,v)$-variable can be defined in a similar way.

For multi-indices $\alpha$, $\beta_{j}(j=1,\dots,s)\in\mathbb{N}_{0}^{3}$ with
$\alpha=\sum\limits_{j=1}^{s}\beta_{j}$, we denote the multinomial
coefficients by
\[
\binom{\alpha}{\beta_{1}\,\beta_{2}\,\dots\,\beta_{s}}=\prod_{k=1}^{3}%
\frac{\alpha_{k}!}{\prod_{j=1}^{s}(\beta_{j})_{k}!}\,.
\]

The domain decomposition plays an essential role in our analysis, hence we
define a cut-off function $\chi:{\mathbb{R}}\rightarrow{\mathbb{R}}$, which is
a smooth non-increasing function, $\chi(s)=1$ for $s\leq1$, $\chi(s)=0$ for
$s\geq2$ and $0\leq\chi\leq1$. Moreover, we define $\chi_{R}(s)=\chi(s/R)$.

For simplicity of notation, hereafter, we abbreviate \textquotedblleft{\ $\leq
C$} \textquotedblright\ to \textquotedblleft{\ $\lesssim$ }\textquotedblright,
where $C$ is a positive constant depending only upon fixed numbers.

Here is the precise description of our main results (combining theorem \ref{time-like region for gamma larger or equal than 1}, theorem \ref{time-like region for gamma less than 1}, theorem \ref{space-like region for gamma greater or equal to 3/2} and theorem \ref{space-like region for 0<gamma<3/2}):

\begin{theorem}
Let $f$ be a solution to the Fokker-Planck equation \eqref{in.1.c} with
initial data compactly supported in the $x$ variable and bounded in $L_{v}%
^{2}$ (we need some exponential weight for $0<\gamma<3/2$) space
\[
f_{0}(x,v)\equiv0\text{ for }\left\vert x\right\vert \geq1.
\]
There exists a positive constant $M$ such that the following hold:

\begin{enumerate}
\item As $\gamma\geq3/2$, there exists a positive constant $C$ such that the
solution $f$ satisfies

\begin{enumerate}
\item For $\left\langle x\right\rangle \leq2Mt$,
\[
\left\vert f(t,x)\right\vert _{L_{v}^{2}}\lesssim\left[  (1+t^{-9/4})e^{-Ct}+
(1+t)^{-3/2}e^{-C\frac{|x|^{2}}{t+1}}\right]  \Vert f_{0}\Vert_{L_{x}^{\infty
}L_{v}^{2}}\,.
\]

\item For $\left\langle x\right\rangle \geq2Mt$,
\[
\left\vert f(t,x)\right\vert _{L_{v}^{2}}\lesssim(1+t^{-9/4})e^{-C\left(
\left\langle x\right\rangle +t\right)  } \Vert f_{0}\Vert_{L_{x}^{\infty}%
L_{v}^{2}}\,.
\]

\end{enumerate}

\item As $1\leq\gamma<3/2$, for any given positive integer $N$ and any
sufficiently small $\alpha>0$, there exists a positive constant $C$ such that
the solution $f$ satisfies

\begin{enumerate}
\item For $\left\langle x\right\rangle \leq2Mt$,
\[
\left\vert f(t,x)\right\vert _{L_{v}^{2}}\lesssim\left[  (1+t^{-9/4}%
)e^{-Ct}+(1+t)^{-3/2}\Big(1+\frac{|x|^{2}}{1+t}\Big)^{-N} \right]  \Vert
f_{0}\Vert_{L_{x}^{\infty}L_{v}^{2}}\,.
\]

\item For $\left\langle x\right\rangle \geq2Mt$,
\[
\left\vert f(t,x)\right\vert _{L_{v}^{2}}(1+t^{-9/4})e^{-C(\left\langle
x\right\rangle +t)^{\frac{\gamma}{3-\gamma}}}\Vert f_{0}\Vert_{L^{2}%
(e^{4\alpha\left\langle v\right\rangle ^{\gamma}})}\,.
\]

\end{enumerate}

\item As $0<\gamma<1$, for any sufficiently small $\alpha>0$, there exists a
positive constant $C$ such that the solution $f$ satisfies

\begin{enumerate}
\item For $\left\langle x\right\rangle \leq2Mt$,
\[
\left\vert f(t,x)\right\vert _{L_{v}^{2}}\lesssim\left[  (1+t^{-9/4}%
)e^{-Ct^{\frac{\gamma}{2-\gamma}}}+(1+t)^{-3/2}\right]  \left\Vert
f_{0}\right\Vert _{L^{2}(e^{4\alpha\left\langle v\right\rangle ^{\gamma}}%
)}\,.
\]

\item For $\left\langle x\right\rangle \geq2Mt$,
\[
\left\vert f(t,x)\right\vert _{L_{v}^{2}}\lesssim(1+t^{-9/4}%
)e^{-C(\left\langle x\right\rangle +t)^{\frac{\gamma}{3-\gamma}}}\Vert
f_{0}\Vert_{L^{2}(e^{4\alpha\left\langle v\right\rangle ^{\gamma}})}\,.
\]

\end{enumerate}
\end{enumerate}
\end{theorem}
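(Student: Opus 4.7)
The plan is to establish the pointwise bounds by decomposing the Green's operator $\mathbb{G}^{t}$ into a kinetic-like part, extracted by a finite Picard iteration around the damped transport operator, and a fluid-like part, extracted by spectral analysis near zero frequency. I would first split $L = -\nu(v) + K$, with $\nu(v) \sim \langle v\rangle^{2(\gamma-1)}$ at infinity and $K$ a lower-order remainder, so that the damped transport $\mathbb{S}^{t}$ has the explicit kernel $e^{-\int_{0}^{t}\nu(v)\,ds}\delta(x - vt - y)$. The $N$-fold Duhamel iteration $\mathbb{G}^{t} = \sum_{n=0}^{N} \mathbb{G}_{n}^{t} + \mathbb{R}^{N+1}$, where $\mathbb{G}_{n}^{t}$ contains $n$ nested applications of $K$ between copies of $\mathbb{S}^{\cdot}$, produces terms of increasing regularity in $v$; these iterated terms carry the singular short-time prefactor $t^{-9/4}$ from regularization estimates in $v$.

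For the space-like region $\langle x\rangle \geq 2Mt$, the compact-support assumption $f_{0}(x,v)\equiv 0$ for $|x|\geq 1$ forces any particle contributing at $(t,x)$ to have velocity $|v| \gtrsim M$, so the damping factor $e^{-c\langle v\rangle^{2(\gamma-1)} t}$ controls the entire iteration. Optimizing a velocity cutoff against the required travel then yields honest exponential decay $e^{-C(\langle x\rangle + t)}$ when $\gamma \geq 3/2$, since $2(\gamma-1) \geq 1$, and the stretched-exponential rate $e^{-C(\langle x\rangle + t)^{\gamma/(3-\gamma)}}$ for $\gamma < 3/2$. When $\gamma < 1$ one loses raw $L_{v}^{2}$ control of the high-velocity contribution, and the weight $e^{4\alpha\langle v\rangle^{\gamma}}$ on the initial data is introduced precisely to restore it before the cutoff argument.

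For the time-like region $\langle x\rangle \leq 2Mt$, I would take $N$ large enough that $\mathbb{R}^{N+1}$ is smooth in $(x,v)$, Fourier-transform in $x$, and analyze the symbol $-iv\cdot\xi + L$. The long-wave part $|\xi|\ll 1$ admits a five-mode spectral decomposition generating a heat-like kernel; contour deformation against the imaginary axis delivers the Gaussian $(1+t)^{-3/2} e^{-C|x|^{2}/(1+t)}$ when $\gamma \geq 3/2$ and the polynomial analogue $(1+|x|^{2}/(1+t))^{-N}$ when $1 \leq \gamma < 3/2$, the loss of Gaussian tails reflecting a loss of analyticity of the eigenprojections in $\xi$. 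The short-wave part is absorbed into exponential time decay via the spectral gap when $\gamma \geq 1$, or into the stretched-exponential $e^{-Ct^{\gamma/(2-\gamma)}}$ via hypocoercivity with a modified norm when $\gamma < 1$.

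The main obstacle is the absence of a uniform spectral gap when $\gamma < 1$: the essential spectrum of $L$ reaches the origin, only sub-exponential time decay is available, and this loss must propagate consistently through the Picard iteration, through the weighted estimates carrying $e^{4\alpha\langle v\rangle^{\gamma}}$, and through the spectral contour deformations. Matching the two exponents $\gamma/(3-\gamma)$ and $\gamma/(2-\gamma)$ along the crossover $\langle x\rangle \sim t$ is enforced by the saddle-point relation between the streaming cost $\langle v\rangle^{2(\gamma-1)} t$ and the displacement cost $|x|/|v|$; this bookkeeping, together with the stability of the wave/particle decomposition under the chosen weights, is exactly what the four theorems cited by the statement combine to deliver.
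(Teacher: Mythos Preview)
Your proposal imports the architecture used for the cutoff Boltzmann equation, and that architecture does not apply here. The Fokker--Planck operator is $L=\Delta_v - \frac14|\nabla_v\Phi|^2 + \frac12\Delta_v\Phi$; it cannot be written as $-\nu(v)+K$ with $\nu$ a multiplication operator and $K$ lower order. Consequently there is no damped transport semigroup with an explicit kernel $e^{-\int_0^t\nu(v)\,ds}\delta(x-vt-y)$: the ``free'' piece still contains $\Delta_v$, particles diffuse in velocity, and your characteristic argument for the space-like region (``any particle contributing at $(t,x)$ must have $|v|\gtrsim M$'') simply fails. The paper's splitting is $L=-\Lambda+K$ with $K=\varpi\chi_R(|v|)$ a \emph{bounded cutoff multiplier} and $\Lambda=-\Delta_v+\text{potential}+\varpi\chi_R$; the iteration semigroup $e^{t\mathcal{L}}$, $\mathcal{L}=-v\cdot\nabla_x-\Lambda$, is analyzed not by an explicit formula but by a hypocoercivity-type functional (Lemma~\ref{regularization}), which produces the $t^{-3/2}$ smoothing in $x$ and hence the $t^{-9/4}$ short-time singularity after Sobolev embedding. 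The space-like estimates are obtained by weighted energy estimates with carefully designed time-dependent weights $w(t,x,v)$ (Propositions~\ref{weig_1} and~\ref{weig_2}); for $0<\gamma<3/2$ the weight involves a nontrivial space--velocity cutoff $\rho(t,x,v)$ that balances $\langle v\rangle^{2\gamma-3}$ against $(\langle x\rangle-Mt)^{\gamma/(3-\gamma)}$, and this is where the exponent $\gamma/(3-\gamma)$ actually comes from---not from a saddle-point optimization on a characteristic kernel.

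A second error: the kernel of $L$ is one-dimensional, $\mathrm{Ker}(L)=\mathrm{span}\{\sqrt{\mathcal{M}}\}$, so the long-wave spectral decomposition of $-iv\cdot\eta+L$ has a \emph{single} eigenvalue branch $\lambda(\eta)=-a_\gamma|\eta|^2+O(|\eta|^4)$, not five. The distinction between the Gaussian tail for $\gamma\ge 3/2$ and the polynomial tail for $1\le\gamma<3/2$ does come from analyticity versus mere smoothness of $\lambda(\eta)$ and $e_D(\eta)$ in $\eta$---you have that right---and the paper verifies analyticity precisely when $\gamma\ge3/2$ by checking that $vf$ is $L$-bounded (Kato--Rellich). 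For $0<\gamma<1$ there is no usable spectral picture at all; the time-like decay $(1+t)^{-3/2}$ and the stretched exponential $e^{-Ct^{\gamma/(2-\gamma)}}$ are obtained instead by a Kawashima-type Lyapunov functional in the Fourier variable combined with a Caflisch/Strain--Guo interpolation against the weight $e^{\alpha\langle v\rangle^\gamma}$.
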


\subsection{Review of previous works and significant points of the paper}

The study of the Fokker-Planck equation can be traced back to 1930's. When the
potential $\Phi=0$, the equation (\ref{in.1.a}) is known as the
Kolmogorov-Fokker-Planck equation.
In 1934 Kolmogorov \cite{[Kolmogorov]} derived the Green function for the
whole space problem. The explicit formula surprisingly showed that the
solution becomes smooth in the $t,x,v$ variables when $t>0$ immediately.

Later the regularization effect has been investigated further and been
recovered by some more general and robust methods. For example, it is known
that the Fokker-Planck operator $-v\cdot\nabla_{x}+\Delta_{v}$ is a
hypoelliptic operator. So one can apply H\"{o}rmander's commutator
\cite{[Hormander]} to the linear Fokker-Planck operator to obtain that
diffusion in $v$ together with the transport term $v\cdot\nabla_{x}$ has a
regularizing effect on solutions not only in $v$ but also in $t$ and $x$. It
can also be obtained through the functional method, see \cite{[Herau],Villani}%
. On the other hand, the Fokker-Planck operator is also known as a
hypocoercive operator, which concerns the rate of convergence to equilibrium.
Indeed, the trend to equilibria with a certain rate has been investigated in
many papers (cf. \cite{[Desvillettes],[Duan],[Herau],[Herau-Nier], [MisMou],
mouNeu}) for the close to Maxwellian regime in the whole space or in the
periodic box.

Let us point out the recent important results constructed by Mouhot and
Mischler \cite{[MisMou]}. They developed an abstract method for deriving decay
estimates of the semigroup associated to non-symmetric operators in Banach
spaces. Applying this method to the kinetic Fokker-Planck equation in the
torus with potential in the close to equilibrium setting, they obtained
spectral gap estimates for the associated semigroup in various norms,
including Lebesgue norms, negative Sobolev norms, and the
Monge-Kantorovich-Wasserstein distance $W_{1}$.

In this paper, we study the Fokker-Planck equation with potential
in the close to equilibrium setting. In the literature, this kind of problem
basically focuses on the rate of convergence to equilibrium (see the reference
listed above). Instead, in this paper we supply an explicit description of the
solution in the sense of pointwise estimate. It turns out the structure of the
solution sensitively depends on the potential function. Let us illustrate the
novelties of the paper:

\begin{itemize}
\item We obtain the global picture of the solution, which consists of three
parts: the time-like region (large time behavior), the space-like region
(spatially asymptotic behavior) and the small time region (the evolution of
initial singularity).

\begin{enumerate}
\item In the time-like region, we have distinctly different descriptions
according to potential functions. For $\gamma\geq1$, thanks to the spectrum
analysis,
we have a pointwise fluid structure, which is more precise than previous
results. The leading term of the wave propagation has been recognized. More
specifically, for $\gamma\geq3/2$ the leading term is a diffusion wave with
heat kernel type, while for $1\leq\gamma<3/2$ the diffusion wave is of
algebraic type. By contrast, the spectral information is missing for
$0<\gamma<1$ due to the weak damping for large velocity, which leads to the
unavailability of pointwise structure. Nevertheless, we can apply Kawashima's
argument \cite{[Kawashima],[Strain]} to get a uniform time decay rate.


\item Concerning the space-like region, we have exponential decay for
$\gamma\geq3/2$ and sub-exponential decay for $0<\gamma<3/2$. The results are
consistent with the wave behaviors inside the time-like region for different
$\gamma$'s respectively. To our knowledge, this is the first result for the
asymptotic behavior of the Fokker-Planck equation with potential.

\item Owing to the regularization effect, the initial singularity is
eliminated instantaneously.


\end{enumerate}

\medskip

\item The regularization estimate is a key ingredient of this paper (see Lemma
\ref{regularization} and Lemma \ref{second-der}), which enables us to obtain
the pointwise estimate without regularity assumptions on the initial
condition.
In the literature, the regularization estimates for the kinetic Fokker-Planck
equation and Landau equation have been proved for various purposes, see for
instance \cite{[Herau]}, \cite{[MisMou]}, \cite{Villani} (Appendix A.21.2) for
the Fokker-Planck case and \cite{[CTK]} for the Landau case. {The
above-mentioned regularization estimates are sufficient for studying the time
decay of the solution. However, to gain understanding of the spatially
asymptotic behavior, one needs to analyze the solution in some appropriate
weighted spaces. Taking this into account, we construct the regularization
estimates in suitable weighted spaces. The calculation of the estimates is
interesting and more sophisticated than before. Moreover, this type of
regularization estimate is itself new.} \medskip

\item {The pointwise estimate of the solution in the space-like region is
constructed by the weighted energy estimate. The time-dependent weight
functions are chosen according to different confinement potentials. For
$\gamma\geq3/2$, from estimate in the time-like region, the solution decays
exponentially along the wave cone, i.e., $|x|=Mt$, suggesting the exponential
decay at the spatial infinity. It turns out that a simple weight function is
satisfactory (see Proposition \ref{weig_1}). However, when $0<\gamma<3/2$, we
notice that in \eqref{in.1.c} the exponent of damping coefficient($\sim
\left\langle v\right\rangle ^{2(\gamma-1)}$) is less than 1. From the scaling
of transport equation, we cannot expect exponential decay in the spatial
variable. In fact, motivating by the transport equation with weak damping, we
devise appropriate weight functions, introduce a refined space-velocity domain
decomposition and eventually show the sub-exponential decay for $0<\gamma<3/2$
(Proposition \ref{weig_2}).} \medskip

\item We believe that our idea in this paper can have potential applications
in other important kinetic equations, such as the Landau equation or Boltzmann
equation without angular cutoff. In fact, these projects are in progress.
\end{itemize}

{To the best of our knowledge, the first pointwise result of the kinetic type
equation is the Boltzmann equation for hard sphere
\cite{[LiuYu],[LiuYu2],[LiuYu1]}; the authors have established important
results regarding the pointwise behavior of the Green function and completed
the nonlinear problem. Later, the result was generalized to the Boltzmann
equation with cutoff hard potentials \cite{[LeeLiuYu]}.
Very recently, the authors of the current paper extend the pointwise result to
more general potentials, the range $-2<\gamma<1$, and obtain an explicit
relation between the decay rate and velocity weight assumption
\cite{[LinWangWu]}. Let us point out some similarities and differences between
the Fokker-Planck equation with potential and the Boltzmann equation
with hard sphere or cutoff hard potentials.}

\begin{itemize}
\item The solutions of both equations in large time are dominated by the fluid
part. For the Fokker-Planck with $\gamma\geq1$ and for the Boltzmann with hard
sphere or hard potentials with cutoff, the fluid parts are characterized by
diffusion waves. To extract them, both need the long wave-short wave
decomposition. However the wave structures of them are quite different. For
the Boltzmann equation, there are diffusion waves propagating with different
speeds: one with the background speed of the global Maxwellian while the other
with the superposed speed of the background speed and the sound speed. In
comparison, there is only one diffusion wave for the Fokker-Planck equation.
The fluid behavior can be seen formally from the Chapman-Enskog expansion,
which indicates that the macroscopic part (the fluid part) of the solution
satisfies the viscous system of conservation laws\textbf{.} For the Boltzmann
equation there are conservation laws of mass, momentum and energy, while the
Fokker-Planck equation only preserves the mass, explaining the difference of
their wave structures.
\medskip

\item Since the leading term of the solution in large time is the fluid part
and it essentially has finite propagation speed, the solution in the
space-like region,\ compared to the leading part, should be much smaller. In
fact it is shown that the asymptotic behaviors exponentially or
sub-exponentially decay. This is similar to the solution of the Boltzmann
equation outside the finite Mach number region. \medskip

\item The regularization mechanism of the Fokker-Planck equation is distinct
from that of the Boltzmann equation. For the Boltzmann equation, the initial
singularity will be preserved (although decays in time very fast), one has to
single them out. Since the singular waves satisfy a damped transport equation,
there is an explicit solution formula, from which the pointwise structure can
be deduced. Then the regularity of the resulting remainder part comes from the
compact part of the collision operator (see the Mixture Lemma in
\cite{[LeeLiuYu]}, \cite{[LiuYu]} and \cite{[LiuYu2]}). By contrast, for the
Fokker-Planck equation the regularity comes from the combined effect of
ellipticity in the velocity variable $v$ and the transport term (see Lemma
\ref{regularization} and Lemma \ref{second-der}). The initial singularities
have been identified. However, there is no explicit formula for singular
waves. Instead, they are accurately estimated by suitable weighted energy estimates.
\end{itemize}

\subsection{Method of proof and plan of the paper}

The main idea of this paper is to combine the long wave-short wave
decomposition, the wave-remainder decomposition, the weighted energy estimate
and the regularization estimate together to analyze the solution. The long
wave-short wave decomposition, based on the Fourier transform, gives the fluid
structure of the solution. The wave-remainder decomposition is used for
extracting the initial singularity. The weighted energy estimate is used for
the pointwise estimate of solution inside the space-like region, where the
regularization estimate is also used. We explain the idea in more detail as below.

In the time-like region (inside the region $|x|\leq Mt$ for some $M$), the
solution is dominated by the fluid part, which is contained in the long wave
part. In order to obtain its estimate, we devise different methods for
$\gamma\geq1$ and $0<\gamma<1$ respectively. For $\gamma\geq1$, taking
advantage of the spectrum information of the Fokker-Planck operator
\cite{[LuoYu]} (in fact, the paper \cite{[LuoYu]} only studies the case
$\gamma=2$ and we can extend it to the case $\gamma\geq1$), the complex
analytic or Fourier multiplier techniques can be applied to obtain pointwise
structure of the fluid part. However, for $0<\gamma<1$, the spectrum
information is missing due to the weak damping for large velocity. Instead, we
use Kawashima's argument \cite{[Kawashima]} to get the optimal decay only in
time. It is shown that the $L^{2}$ norm of the short wave exponentially decays
in time for $\gamma\geq1$ essentially due to the spectrum gap, while it decays
only algebraically for $0<\gamma<1$ if imposing certain velocity weight on
initial data.

We use the wave-remainder decomposition to extract the possible initial
singularity in the short wave. This decomposition is based on a Picard-type
iteration. The first several terms in the iteration contain the most singular
part of the solution, and they are the so-called wave part.
By functional methods, we prove the iteration equation has a regularization
effect, which enables us to show the remainder becomes more regular. Noticing
the singularity will disappear after initial time, the regularization estimate
together with $L^{2}$ decay of the short wave yields the $L^{\infty}$ decay of
the short wave. Combing this with the long wave, we finish the pointwise
structure inside the wave cone.

To get the global structure of the solution, we need the estimate outside the
wave cone, i.e., inside the space-like region. The weighted energy estimates
play a decisive role here. The weight functions are carefully chosen for
different $\gamma$'s. It is noted that the sufficient understanding of the
structure of the wave part obtained previously, is essential in the estimate.
Moreover, the regularization effect makes it possible to do the higher order
weighted energy estimate. Then the desired pointwise estimate follows from the
Sobolev inequality.

The rest of this paper is organized as follows: We first prepare some
important properties in Section \ref{pre} for the long wave-short wave
decomposition, the wave-remainder decomposition and regularization estimates.
Then we study the large time behavior in Section \ref{large-time}. Finally, we
study the initial layer and the asymptotic behavior in Section \ref{layer}.

\section{Preliminary}

\label{pre}

\subsection{The operator $L$}

First, we introduce a new norm $\left\vert \cdot\right\vert _{L_{\widetilde
{\sigma},\theta}^{2}}$:
\[
\left\vert g\right\vert _{L_{\widetilde{\sigma},\theta}^{2}}^{2}%
:=\int\left\langle v\right\rangle ^{2\theta}\left\vert \nabla g\right\vert
^{2}dv+\int\left\langle v\right\rangle ^{2\theta}\frac{\left\vert v\right\vert
^{2}\left\langle v\right\rangle ^{2\gamma-4}}{2}\left\vert g\right\vert
^{2}dv,\ \ \ \theta\in\mathbb{R},\ \gamma>0,
\]
which is equivalent to the natural norm $\left\vert \cdot\right\vert
_{L_{\sigma,\theta}}$. Through this equivalent norm, we can derive the
coercivity of the operator $L$ for all $\gamma>0,$ as below. The proof is
analogous to the Landau case \cite{[Guo]}.

\begin{lemma}
[Coercivity]\label{co} Let $\theta\in\mathbb{R}$, $\gamma>0$. For any $m>1,$
there is $0<C\left(  m\right)  <\infty,$ such that%
\begin{align}
&  \quad\left\vert \left\langle \left\langle v\right\rangle ^{2\theta}%
\frac{\triangle_{v}\Phi}{2}g_{1},g_{2}\right\rangle _{v}\right\vert
\label{1}\\
&  \leq\frac{C}{m^{\gamma}}\left\vert g_{1}\right\vert _{L_{\widetilde{\sigma
},\theta}^{2}}\left\vert g_{2}\right\vert _{L_{\widetilde{\sigma},\theta}^{2}%
}+C\left(  m\right)  \left(  \int_{\left\vert v\right\vert \leq m}\left\vert
\left\langle v\right\rangle ^{\theta}g_{1}\right\vert ^{2}dv\right)
^{1/2}\left(  \int_{\left\vert v\right\vert \leq m}\left\vert \left\langle
v\right\rangle ^{\theta}g_{2}\right\vert ^{2}dv\right)  ^{1/2}.\nonumber
\end{align}
Moreover, there exists $\nu_{0}>0$ such that%
\begin{equation}
\left\langle -Lg,g\right\rangle _{v}\geq\nu_{0}\left\vert \mathrm{P}%
_{1}g\right\vert _{L_{\sigma}^{2}}^{2}. \label{coercivity}%
\end{equation}

\end{lemma}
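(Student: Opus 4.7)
The plan is to tackle the two inequalities separately. For the upper bound \eqref{1}, I first compute explicitly that $\Delta_v \Phi = 3\langle v\rangle^{\gamma-2} + (\gamma-2)|v|^2\langle v\rangle^{\gamma-4}$, so $|\Delta_v \Phi| \lesssim \langle v\rangle^{\gamma-2}$. Splitting the integral at $|v|=m$, the region $\{|v|\leq m\}$ contributes the second term on the right-hand side after bounding $|\Delta_v\Phi|$ by a constant $C(m)$ and applying Cauchy--Schwarz. On the complement, I use the arithmetic identity
\[
\langle v\rangle^{\gamma-2} = \langle v\rangle^{2\gamma-2}\cdot \langle v\rangle^{-\gamma} \leq m^{-\gamma}\langle v\rangle^{2\gamma-2}\quad\text{for }|v|>m,
\]
combined with $\langle v\rangle^{2\gamma-2}\lesssim |v|^2\langle v\rangle^{2\gamma-4}$ once $m$ is chosen larger than an absolute constant. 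Cauchy--Schwarz against the damping piece of $|\cdot|_{L^2_{\widetilde\sigma,\theta}}$ then produces the $m^{-\gamma}$ factor.

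For the coercivity \eqref{coercivity}, I first expand the Dirichlet form via integration by parts (or use the given identity with $\mathcal{M}$) to obtain $\langle -Lg,g\rangle_v = \int |\nabla_v g|^2 dv + \int W(v)|g|^2 dv$, where
\[
W(v) \;=\; \tfrac14|v|^2\langle v\rangle^{2\gamma-4} \;-\; \tfrac32\langle v\rangle^{\gamma-2} \;-\; \tfrac{\gamma-2}{2}|v|^2\langle v\rangle^{\gamma-4}.
\]
The leading term $\tfrac14|v|^2\langle v\rangle^{2\gamma-4}\sim\tfrac14\langle v\rangle^{2\gamma-2}$ dominates the remaining $O(\langle v\rangle^{\gamma-2})$ contributions for $|v|$ large, so choosing $R$ sufficiently big I get a pointwise bound $W(v)\geq c_0\langle v\rangle^{2\gamma-2} - C\mathbf{1}_{\{|v|\leq R\}}$. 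This yields the ``coercivity modulo compact error''
\[
\langle -Lg,g\rangle_v \;\geq\; c_0\, |g|^2_{L^2_\sigma} \;-\; C\int_{|v|\leq R}|g|^2\,dv.
\]

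To upgrade this to full coercivity on $\mathrm{Ker}(L)^\perp$, I argue by contradiction in the style of the Landau proof in \cite{[Guo]}. If \eqref{coercivity} fails, there is a sequence $\{g_n\}\subset\mathrm{Ker}(L)^\perp$ normalized by $|g_n|_{L^2_\sigma}=1$ with $\langle -Lg_n,g_n\rangle_v\to 0$. The bound above forces $\int_{|v|\leq R}|g_n|^2 dv \geq c_0/C - o(1)$. The normalization gives a uniform $H^1$ bound on every ball, so by Rellich compactness a subsequence converges strongly in $L^2(\{|v|\leq R\})$ and weakly in $L^2_\sigma$ to some $g_\infty$, which must be nontrivial on the ball. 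On the other hand, the alternative Dirichlet identity $\langle -Lg,g\rangle_v = \int|\nabla(g/\sqrt{\mathcal M})|^2 \mathcal M\,dv$ together with lower semicontinuity forces $g_\infty/\sqrt{\mathcal M}$ to be constant, so $g_\infty = c\sqrt{\mathcal M}$; weak closedness of $\mathrm{Ker}(L)^\perp$ yields $c=0$, contradicting the nontriviality on the ball.

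The main obstacle will be executing the compactness step cleanly, in particular transferring the uniform $H^1_{\mathrm{loc}}$ bound to strong $L^2$ convergence on balls without invoking additional velocity moments, and justifying the limit identification $g_\infty=c\sqrt{\mathcal M}$ via lower semicontinuity of the Dirichlet form; the regime $0<\gamma<1$ is where this is most delicate, because the weak confinement $\langle v\rangle^{2\gamma-2}\to 0$ as $\gamma\to 0^+$ is what enables \eqref{1} to be useful (the $m^{-\gamma}$ factor survives) but also what makes the coercive term in $W(v)$ the weakest. A secondary technical point is the norm equivalence between $|\cdot|_{L^2_{\widetilde\sigma,\theta}}$ and $|\cdot|_{L^2_{\sigma,\theta}}$ stated immediately before the lemma, which is immediate for $|v|\geq 1$ and requires only a Poincaré-type adjustment near the origin to absorb the degeneracy of $|v|^2\langle v\rangle^{2\gamma-4}$ at $v=0$ into the gradient part of the norm.
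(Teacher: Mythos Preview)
Your proposal is correct and follows the Guo-style argument that the paper explicitly invokes (the paper itself gives no proof beyond the sentence ``The proof is analogous to the Landau case \cite{[Guo]}''). The only structural difference is that in the Guo framework inequality \eqref{1} is not proved for its own sake but is the key lemma feeding into \eqref{coercivity}: one writes $\langle -Lg,g\rangle_v = \int|\nabla g|^2 + \tfrac14\int|\nabla\Phi|^2 g^2 - \langle \tfrac{\Delta\Phi}{2}g,g\rangle_v$, recognizes the first two terms as $\gtrsim |g|^2_{L^2_{\widetilde\sigma}}$, and then applies \eqref{1} with $g_1=g_2=g$, $\theta=0$ and $m$ large to absorb the last term up to a compactly supported remainder. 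You instead obtain the same ``coercivity modulo compact error'' directly from the pointwise asymptotics of $W(v)$, bypassing \eqref{1}; this is slightly more elementary but yields the identical intermediate estimate, after which your contradiction/compactness step is exactly the standard one.
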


Now, let us decompose the collision operator $L=-\Lambda+K$, where
\[
\Lambda=-L+\varpi\chi_{R}\left(  |v|\right)  \,,\quad K=\varpi\chi_{R}\left(
|v|\right)  \,,
\]
here $\varpi>0$ and $R>0$ are as large as desired.

Regarding the behavior of solutions to equation (\ref{in.1.c}) in the
space-like region, the following weight functions $\mu(x,v)$ will be taken
into account:
\[
\mu(x,v)=1\quad\hbox{or}\quad\exp\left(  \left\langle x\right\rangle
/D\right)  \quad\hbox{if}\quad\ga\geq3/2\,,
\]
for $D$ large, and
\[
\mu(x,v)=1\quad\hbox{or}\quad\exp\left(  \alpha c(x,v)\right)  \quad
\hbox{if}\quad0<\ga<3/2\,,
\]
where
\begin{align*}
c(x,v)  &  =5\Big(\de\left\langle x\right\rangle \Big)^{\frac{\ga}{3-\ga}%
}\left(  1-\chi\left(  \de\left\langle x\right\rangle \left\langle
v\right\rangle ^{\ga-3}\right)  \right) \\
&  \quad+\bigg[\left(  1-\chi\left(  \de\left\langle x\right\rangle
\left\langle v\right\rangle ^{\ga-3}\right)  \right)  \de\left\langle
x\right\rangle \left\langle v\right\rangle ^{2\ga-3}+3\left\langle
v\right\rangle ^{\ga}\bigg]\chi\left(  \de\left\langle x\right\rangle
\left\langle v\right\rangle ^{\ga-3}\right)  \,,
\end{align*}
the positive constants $\delta$ and $\alpha$ being determined later.

\begin{lemma}
\label{prop1} Assuming that $\gamma>0$, we have the following properties of
the operators $\Lambda$ and $K$.

\noindent\textrm{(i)} There exists $c>0$ such that
\[
\int\left(  \Lambda g\right)  g\mu dxdv\geq c\left\Vert g\right\Vert
_{L_{\sigma}^{2}\left(  \mu\right)  }^{2}.
\]
\noindent\textrm{(ii)}
\[
\int\left(  Kg\right)  g\mu dxdv\leq\varpi\left\Vert g\right\Vert
_{L^{2}\left(  \mu\right)  }^{2}.
\]

\end{lemma}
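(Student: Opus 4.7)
The first step is (ii), which is immediate from $0\leq\chi_R(|v|)\leq 1$:
$$\int(Kg)g\mu\,dxdv=\varpi\int\chi_R(|v|)g^2\mu\,dxdv\leq\varpi\|g\|_{L^2(\mu)}^2.$$
The substance is (i). My plan is to integrate by parts twice in $v$ against the weight $\mu$. Writing $-Lg=-\Delta_v g+V(v)g$ with
$$V(v)=\tfrac14|v|^2\langle v\rangle^{2\gamma-4}-\tfrac32\langle v\rangle^{\gamma-2}-\tfrac{\gamma-2}{2}|v|^2\langle v\rangle^{\gamma-4},$$
so that $V(v)\sim\tfrac14\langle v\rangle^{2\gamma-2}$ at infinity, the identity $\Delta_v\mu=\mu(\Delta_v\log\mu+|\nabla_v\log\mu|^2)$ and two successive integrations by parts give
$$\int(-Lg)g\mu\,dxdv+\varpi\int\chi_R g^2\mu\,dxdv=\int|\nabla_v g|^2\mu\,dxdv+\int W(x,v)g^2\mu\,dxdv,$$
where
$$W(x,v):=V(v)+\varpi\chi_R(|v|)-\tfrac12\Delta_v\log\mu-\tfrac12|\nabla_v\log\mu|^2.$$
Since the $|\nabla_v g|^2\mu$ term already delivers the gradient piece of $\|g\|_{L^2_\sigma(\mu)}^2$, the whole coercivity reduces to a pointwise lower bound $W(x,v)\gtrsim\langle v\rangle^{2\gamma-2}$, uniform in $x$.

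For $\mu\equiv 1$ and for $\mu=\exp(\langle x\rangle/D)$ (the case $\gamma\geq 3/2$), $\log\mu$ is independent of $v$, so the weight contributes nothing and $W=V+\varpi\chi_R$. Since $V(v)\geq\tfrac18\langle v\rangle^{2\gamma-2}$ for $|v|$ large while the negative part of $V$ is bounded on a compact set, taking $R$ and $\varpi$ large enough lets $\varpi\chi_R$ absorb that negative part, yielding $W\gtrsim\langle v\rangle^{2\gamma-2}$ everywhere.

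The principal case is $\mu=\exp(\alpha c(x,v))$ with $0<\gamma<3/2$. I would split $(x,v)$-space into three regions according to the value of $\psi:=\chi(\delta\langle x\rangle\langle v\rangle^{\gamma-3})$. On $\{\psi=0\}$ the function $c$ reduces to the $v$-independent constant $5(\delta\langle x\rangle)^{\gamma/(3-\gamma)}$, so $\nabla_v c=0=\Delta_v c$ and the region collapses to the previous case. On $\{\psi=1\}$ one has $c=3\langle v\rangle^\gamma$; direct differentiation gives $|\nabla_v c|^2\leq 9\gamma^2\langle v\rangle^{2\gamma-2}$ and $\Delta_v c=O(\langle v\rangle^{\gamma-2})$, hence
$$W\geq\Bigl(\tfrac14-\tfrac{9\gamma^2}{2}\alpha^2\Bigr)\langle v\rangle^{2\gamma-2}+O\bigl(\alpha\langle v\rangle^{\gamma-2}\bigr),$$
which is $\gtrsim\langle v\rangle^{2\gamma-2}$ for $\alpha$ small, uniformly in $\delta$. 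On the transition zone $\{0<\psi<1\}$ the constraint $\delta\langle x\rangle\langle v\rangle^{\gamma-3}\sim 1$ forces each summand of $c$ to be of order $\langle v\rangle^\gamma$, and bounded $\chi',\chi''$ produce $\nabla_v c$ and $\Delta_v c$ of the same orders as in the outer region, so the same smallness-of-$\alpha$ argument closes the estimate. The main obstacle is precisely this bookkeeping on the transition zone, where the cross terms generated by $\chi'$ and $\chi''$ must be shown to remain subordinate after the sizes of $\delta\langle x\rangle$ and $\langle v\rangle$ are tied together; once verified, collecting the three regions and fixing $R,\varpi$ large and $\alpha$ (and, if needed, $\delta$) small concludes (i).
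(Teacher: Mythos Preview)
Your argument is correct, but it takes a somewhat different route from the paper. The paper performs only a \emph{single} integration by parts in $v$, writing
\[
\int(\Lambda g)\,g\,\mu\,dxdv=\|\nabla_v g\|_{L^2(\mu)}^2+\int\nabla_v g\cdot\nabla_v\mu\;g\,dxdv+\int\bigl(V+\varpi\chi_R\bigr)g^2\mu\,dxdv,
\]
and then controls the cross term directly by Cauchy--Schwarz: since $|\nabla_v\mu|=\alpha|\nabla_v c|\,\mu$ and $|\nabla_v c|\leq C(\gamma)\langle v\rangle^{\gamma-1}(1+|\chi'|)$ uniformly in $x$, one has
\[
\Bigl|\int\nabla_v g\cdot\nabla_v\mu\;g\,dxdv\Bigr|\leq \tfrac{\alpha Q}{2}\|g\|_{L^2_\sigma(\mu)}^2,
\]
which is absorbed for $\alpha$ small. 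Your approach instead integrates by parts a second time to reach a pure potential form $\int|\nabla_v g|^2\mu+\int W g^2\mu$ with $W=V+\varpi\chi_R-\tfrac{\alpha}{2}\Delta_v c-\tfrac{\alpha^2}{2}|\nabla_v c|^2$, and then argues a pointwise lower bound $W\gtrsim\langle v\rangle^{2\gamma-2}$ region by region.

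The trade-off: the paper's route needs only first-order information on $c$ (the bound on $\nabla_v c$), whereas yours additionally requires controlling $\Delta_v c$ on each region. Your claim that $\Delta_v c=O(\langle v\rangle^{\gamma-2})$ is correct (on the transition zone the constraint $\delta\langle x\rangle\sim\langle v\rangle^{3-\gamma}$ makes each $v$-derivative of $\chi$ contribute a factor $\langle v\rangle^{-1}$, and all summands of $c$ are $O(\langle v\rangle^{\gamma})$ there), and this second-order term is harmless since $\gamma-2\leq 2\gamma-2$. So both proofs close; the paper's is marginally more economical, while yours has the conceptual advantage of reducing everything to a single pointwise inequality on the effective potential.
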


\begin{proof}
We only prove part (i) when $\mu\left(  x,v\right)  =e^{\alpha c\left(
x,v\right)  },$ since the other cases of part (i) and part (ii) are trivial.
Notice that there is a constant $c_{1}>0$ such that%
\[
\left\Vert \nabla_{v}g\right\Vert _{L^{2}\left(  \mu\right)  }^{2}+\int\left[
\frac{\left\vert v\right\vert ^{2}\left\langle v\right\rangle ^{2\gamma-4}}%
{4}-\left(  \frac{3}{2}\left\langle v\right\rangle ^{\gamma-2}+\frac{\left(
\gamma-2\right)  }{2}\left\vert v\right\vert ^{2}\left\langle v\right\rangle
^{\gamma-4}\right)  +\varpi\chi_{R}\right]  g^{2}\mu dxdv\geq c_{1}\left\Vert
g\right\Vert _{L_{\sigma}^{2}\left(  \mu\right)  }^{2}%
\]
whenever $\varpi,$ $R>0$ are sufficiently large. On the other hand, it follows
from
\begin{equation}
\left\vert \nabla_{v}c\left(  x,v\right)  \right\vert \leq C\left(
\gamma\right)  \left\langle v\right\rangle ^{\gamma-1}\left(  1+\left\vert
\chi^{\prime}\left(  \left[  \delta\left\langle x\right\rangle \right]
\left\langle v\right\rangle ^{\gamma-3}\right)  \right\vert \right)  ,
\label{v-derivtive-c}%
\end{equation}
that
\[
\left\vert \int\nabla_{v}g\cdot\nabla_{v}\left(  \mu\right)  gdxdv\right\vert
\leq\alpha C\left(  \gamma\right)  \sup\left(  1+\left\vert \chi^{\prime
}\right\vert \right)  \int\left\langle v\right\rangle ^{\gamma-1}\left\vert
g\right\vert \left\vert \nabla_{v}g\right\vert \mu dxdv\leq\frac{\alpha Q}%
{2}\left\Vert g\right\Vert _{L_{\sigma}^{2}\left(  \mu\right)  }^{2},
\]
where $Q=\left[  C\left(  \gamma\right)  \sup\left(  1+\left\vert \chi
^{\prime}\right\vert \right)  \right]  .$ Therefore, we choose $\alpha>0$
sufficiently small with $\alpha Q<c_{1}$ and thus deduce that
\begin{align*}
\int\left(  \Lambda g\right)  g\mu dxdv  &  =\left\Vert \nabla_{v}g\right\Vert
_{L^{2}\left(  \mu\right)  }^{2}+\int\nabla_{v}g\cdot\nabla_{v}\left(
\mu\right)  gdxdv\\
&  \quad+\int\left[  \frac{\left\vert v\right\vert ^{2}\left\langle
v\right\rangle ^{2\gamma-4}}{4}-\left(  \frac{3}{2}\left\langle v\right\rangle
^{\gamma-2}+\frac{\left(  \gamma-2\right)  }{2}\left\vert v\right\vert
^{2}\left\langle v\right\rangle ^{\gamma-4}\right)  +\varpi\chi_{R}\right]
g^{2}\mu dxdv\\
&  \geq\frac{c_{1}}{2}\left\Vert g\right\Vert _{L_{\sigma}^{2}\left(
\mu\right)  }^{2}=c\left\Vert g\right\Vert _{L_{\sigma}^{2}\left(  \mu\right)
}^{2},
\end{align*}
which completes the proof of the lemma.
\end{proof}

On the other hand, in preparation for studying the time-like region, we
provide the spectrum $\mathrm{{Spec}(\eta)}$, $\eta\in{\mathbb{R}}^{3}$, of
the operator $L_{\eta}=-iv\cdot\eta+L$. In fact, we extend the results for the
case $\gamma=2$ in \cite{[LuoYu]} to the case $\gamma\geq1$.

\begin{lemma}
[Spectrum of $L_{\eta}$]\label{pr12} Assuming that $\gamma\geq1$, given
$0<\de\ll1,$

\noindent\textrm{(i)} There exists $\tau=\tau(\de)>0$ such that if
$|\eta|>\de$,
\begin{equation}
\hbox{\rm Spec}(\eta)\subset\{z\in\mathbb{C}:\mathrm{Re}(z)<-\tau\}\,.
\label{pre.ab.e}%
\end{equation}
\newline\noindent\textrm{(ii)} If $|\eta|<\de$,
\begin{equation}
\hbox{\rm Spec}(\eta)\cap\{z\in\mathbb{C}:\mathrm{Re}(z)>-\tau\}=\{\la(\eta
)\}\,, \label{pre.ab.f}%
\end{equation}
where $\lambda(\eta)$ is the eigenvalue of $L_{\eta}$ which is real and smooth
in $\eta$ only through $|\eta|^{2}$, i.e., $\lambda(\eta)=\mathscr{A}(|\eta
|^{2})$ for some real smooth function $\mathscr{A}$; the eigenfunction
$e_{D}(v,\eta)$ is smooth in $\eta$ as well. In addition, they are analytic in
$\eta$ if $\ga\geq3/2$. Their asymptotic expansions are given as below:
\begin{equation}%
\begin{array}
[c]{l}%
\label{pre.ab.h}\displaystyle\la(\eta)=-a_{\ga}|\eta|^{2}+O(|\eta|^{4})\,,\\
\\
\displaystyle e_{D}(\eta)=E_{D}+iE_{D,1}|\eta|+O(|\eta|^{2})\,,
\end{array}
\end{equation}
with $a_{\ga}>0$, $E_{D,1}=L^{-1}(v\cdot\om E_{D})$, $\om=\eta/|\eta|$. Here
$\{e_{D}(\eta)\}$ can be normalized by
\[
\big<e_{D}(-\eta),e_{D}(\eta)\big>_{v}=1\,.
\]
\newline\noindent\textrm{(iii)} Moreover, the semigroup $e^{(-i\eta\cdot
v+L)t}$ can be decomposed as
\begin{equation}
\displaystyle e^{(-i\eta\cdot v+L)t}f=e^{(-i\eta\cdot v+L)t}\Pi_{\eta}%
^{D\perp}f+\mathbf{1}_{\{|\eta|<\de\}}e^{\la(\eta)t}\big<e_{D}(-\eta
),f\big>_{v}e_{D}(\eta)\,, \label{pre.ab.g}%
\end{equation}
where $\mathbf{1}_{D}$ is the characteristic function of the domain $D$, and
there exist $a(\tau)>0$ and $\overline{a}>0$ such that $|e^{(-i\eta\cdot
v+L)t}\Pi_{\eta}^{D\perp}|_{L_{v}^{2}}\lesssim e^{-a(\tau)t}$ and
$|e^{\la(\eta)t}|\leq e^{-\overline{a}|\eta|^{2}t}$. \medskip
\end{lemma}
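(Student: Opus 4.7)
The strategy is to adapt the framework of \cite{[LuoYu]} (developed for $\gamma=2$) to the range $\gamma\geq 1$. The crucial inputs already in hand are the coercivity estimate (\ref{coercivity}) of $-L$, valid for every $\gamma>0$, and the identification of $\mathrm{Ker}(L)=\hbox{span}\{E_D\}$.

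\medskip

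For part (i), I would argue by a resolvent estimate. For an eigenpair $L_\eta f=zf$ with $|f|_{L^2_v}=1$, taking the real part of $\langle L_\eta f,f\rangle_v=z$ annihilates the skew-adjoint transport term and, combined with (\ref{coercivity}), gives
\[
|\mathrm{P}_1 f|_{L^2_\sigma}^2 \leq -\tfrac{1}{\nu_0}\,\mathrm{Re}\,z,
\]
so $\mathrm{P}_1 f$ is small whenever $\mathrm{Re}\,z$ is close to $0$. Writing $f=\alpha E_D+\mathrm{P}_1 f$ and projecting the eigenequation successively onto $E_D$ and onto the solution $\Psi$ of $L\Psi=v\cdot\omega E_D$ (solvable by Fredholm since $\langle v E_D,E_D\rangle_v=0$ by parity of $\mathcal{M}$) produces two linear relations whose combination forces $|\alpha|\,|\eta|\lesssim (1+|z|)\,|\mathrm{P}_1 f|_{L^2_\sigma}$. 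With $|\alpha|^2+|\mathrm{P}_1 f|_{L^2_v}^2=1$ and $|\eta|>\delta$, this drives $-\mathrm{Re}\,z\geq\tau=\tau(\delta)>0$, proving (\ref{pre.ab.e}).

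\medskip

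For part (ii), since $0$ is a simple isolated eigenvalue of $L_0=L$ with spectral gap $\nu_0$, I apply Kato's perturbation theory to the family $L_\eta=L-iv\cdot\eta$. For $\gamma\geq 3/2$ the multiplier $v\cdot\eta$ has relative bound $0$ with respect to $L$ (heuristically because the damping $|v|^2\langle v\rangle^{2\gamma-4}$ grows faster than $|v|$ when $\gamma>3/2$), so $L_\eta$ is an analytic family and both the eigenvalue $\lambda(\eta)$ and its one-dimensional eigenprojection are analytic in $\eta$; for $1\leq\gamma<3/2$ only smoothness survives, obtained by solving the formal Taylor expansion of the eigenequation order by order in $L^2_\sigma$. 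The dependence through $|\eta|^2$ follows from $SO(3)$-invariance of $L$ in $v$: for $R\in SO(3)$, conjugation by the pullback $(R_\sharp f)(v)=f(Rv)$ identifies $L_{R\eta}$ with $L_\eta$. To obtain (\ref{pre.ab.h}), I set $e_D(\eta)=E_D+i|\eta|E_{D,1}+|\eta|^2 E_{D,2}+\cdots$, $\lambda(\eta)=\lambda_2|\eta|^2+\cdots$, and match powers of $|\eta|$: the order-$|\eta|$ equation is $L E_{D,1}=v\cdot\omega E_D$ (solvable because $v\cdot\omega E_D\perp E_D$), and projecting the order-$|\eta|^2$ relation onto $E_D$ yields $\lambda_2=\langle v\cdot\omega E_D,L^{-1}(v\cdot\omega E_D)\rangle_v=\langle L E_{D,1},E_{D,1}\rangle_v<0$ by (\ref{coercivity}), so $a_\gamma=-\lambda_2>0$. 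The normalization $\langle e_D(-\eta),e_D(\eta)\rangle_v=1$ fixes the remaining gauge.

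\medskip

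Part (iii) follows by the standard Dunford--Riesz formalism: for $|\eta|<\delta$, parts (i)--(ii) show that $\lambda(\eta)$ is the only spectral point of $L_\eta$ inside $\{\mathrm{Re}\,z>-\tau\}$, so
\[
\Pi_\eta^D := \frac{1}{2\pi i}\oint_\Gamma (zI-L_\eta)^{-1}\,dz
\]
on a small contour $\Gamma$ around $\lambda(\eta)$ projects onto the one-dimensional eigenspace and, using the above normalization, equals $f\mapsto\langle e_D(-\eta),f\rangle_v e_D(\eta)$, giving the splitting (\ref{pre.ab.g}). The decay of $e^{tL_\eta}\Pi_\eta^{D\perp}$ follows from a uniform resolvent bound on the line $\{\mathrm{Re}\,z=-a(\tau)\}$ (via Gearhart--Pr\"uss or a direct contour shift), while $|e^{\lambda(\eta)t}|\leq e^{-\bar a|\eta|^2 t}$ is immediate from the expansion of $\lambda$. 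I expect the main obstacle to be the analyticity/smoothness step: for $1\leq\gamma<3/2$ the multiplier $v\cdot\eta$ fails to be relatively bounded with small relative bound with respect to $L$, so Kato's standard analytic perturbation theorem does not apply verbatim, and one must construct the eigenpair iteratively while carefully tracking regularity of $L^{-1}$ on $\mathrm{P}_1(L^2)$---this is precisely where extending from $\gamma=2$ in \cite{[LuoYu]} to general $\gamma\geq 1$ requires additional work.
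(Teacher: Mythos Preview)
Your overall strategy is sound and broadly parallel to the paper, but the execution differs and there are two genuine gaps.

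\textbf{Different route.} The paper does not argue part~(i) by a direct eigenpair estimate. Instead it writes $L=-\Lambda+K$ with $K=\varpi\chi_R(|v|)$ bounded (and compactly supported in $v$), views $L_\eta$ as a bounded perturbation of $\Lambda_\eta$, and then verifies the abstract hypotheses \textbf{H1}--\textbf{H4} of Yang--Yu \cite{[YangYu]}, which package all of (i)--(iii) at once. The key point singled out is \textbf{H1}: $\langle\Lambda f,f\rangle_v\geq c|f|_{L^2_\sigma}^2\geq c|f|_{L^2_v}^2$, the second inequality requiring $\gamma\geq 1$; this is exactly where the argument breaks for $0<\gamma<1$. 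For part~(ii) the paper reduces, via the $O(3)$ action, to the one-dimensional problem $(-iv_1|\eta|+L)e=\lambda e$, applies the macro--micro decomposition to rewrite this as a \emph{finite-dimensional} eigenvalue problem (\ref{eq:L6.1a})--(\ref{eigenprob_fte}), and then invokes the implicit function theorem. Your formal Taylor expansion is morally equivalent but less robust for the smoothness claim.

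\textbf{Gap 1: essential spectrum.} Your argument in (i) treats only eigenvalues. The statement (\ref{pre.ab.e}) is about the full spectrum, and nothing you wrote excludes essential spectrum from the strip $\{-\tau<\mathrm{Re}\,z\leq 0\}$. The paper handles this through the $-\Lambda+K$ splitting: since $K$ is relatively compact, Weyl's theorem pins the essential spectrum of $L_\eta$ to that of $\Lambda_\eta$, which lies in $\{\mathrm{Re}\,z\leq -c\}$ by coercivity. You need this step (or an equivalent resolvent bound) before your eigenpair computation becomes sufficient.

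\textbf{Gap 2: smoothness in $|\eta|^2$.} Your $SO(3)$ argument yields only that $\lambda$ depends on $|\eta|$; it does \emph{not} give that $\lambda(\eta)=\mathscr{A}(|\eta|^2)$ with $\mathscr{A}$ smooth, because $|\eta|$ itself is not smooth at $\eta=0$. The paper closes this by applying the reflection $\mathcal{R}:v_1\mapsto -v_1$ to show $\lambda(|\eta|)=\lambda(-|\eta|)$ (and similarly for the macroscopic coefficient $\beta$), and then invoking Whitney's theorem \cite{Whitney} on even functions to produce a smooth $\mathscr{A}$. The same reflection combined with complex conjugation also gives the reality of $\lambda$. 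This step is missing from your sketch and is not automatic.

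Your treatment of analyticity for $\gamma\geq 3/2$ via relative $L$-boundedness is exactly what the paper does (Kato--Rellich, analytic family of type~(A)), and your Dunford--Riesz route to (iii) is standard and compatible with the paper's framework.
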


\begin{proof}
Let $L=-\Lambda+K$ with%
\[
\Lambda_{\eta}f=\left(  -\Lambda-i\eta\cdot v\right)  f,\ \ \ \ \ L_{\eta
}=\left(  L-i\eta\cdot v\right)  f,
\]
Here $f\in D\left(  \Lambda_{\eta}\right)  =\left\{  f\in L_{v}^{2}%
;\Lambda_{\eta}f\in L_{v}^{2}\right\}  $ and $D\left(  \Lambda_{\eta}\right)
=D\left(  L_{\eta}\right)  .$ Since $K$ is a bounded operator in $L_{v}^{2},$
$L_{\eta}\ $is regarded as a bounded perturbation of $\Lambda_{\eta}.$ We
shall verify that such a decomposition satisfies the four hypotheses
\textbf{H1-H4} stated in \cite{[YangYu]}. Under the assumptions \textbf{H1-H4,
}using semigroup theory and linear operator perturbation theory,\textbf{
}Theorem 1.1 in\textbf{ \cite{[YangYu]} }asserts that\textbf{ }the spectrum of
$L_{\eta}$\textbf{ }has the similar structure of the Boltzmann\textbf{
}equation with cutoff hard potential. Since the null space of the linear
Fokker-Planck operator is one-dimensional, for $\left\vert \eta\right\vert $
small enough, we only obtain one smooth eigenvalue of $L_{\eta}$ while there
are five smooth eigenvalues for the Boltzmann equation with cutoff hard
potentials. As to the verification of \textbf{H1-H4, }the proof is a slight
modification of the paper \cite{[LuoYu]} and hence we omit the details. The
hypothesis \textbf{H1} is worthy of being mentioned, for $\varpi$ sufficiently
large, there exists a constant $c>0$ such that
\[
\left\langle \Lambda f,f\right\rangle _{v}\geq c\left\vert f\right\vert
_{L_{\sigma}^{2}}^{2}\geq c\left\vert f\right\vert _{L_{v}^{2}}^{2},
\]
for all $\gamma\geq1,$ the last inequality holds since $\left\vert
f\right\vert _{L_{\sigma}^{2}}$ is stronger than $\left\vert f\right\vert
_{L_{v}^{2}}\ $as $\gamma\geq1.\ $This is why we miss the spectrum structure
for the case $0<\gamma<1.$


To prove (ii), we need to explore the symmetric properties of $\lambda(\eta)$
and $e_{D}(\eta)$. Here we follow the framework of section 7.3 in
\cite{[LiuYu1]}. First we notice there is a natural three dimensional
orthogonal group $O(3)$-action on $L_{v}^{2}$: Let $a\in O(3)$, $f\in
L_{v}^{2}$,
\[
(a\circ f)(v)\equiv f(a^{-1}v).
\]
Then it is easy to check the $O(3)$-action commutes with operators $L$,
$\mathrm{P}_{0}$ and $\mathrm{P}_{1}$. Consider the eigenvalue problem
\begin{equation}
L_{\eta}e_{D}(\eta)=(-iv\cdot\eta+L)e_{D}(\eta)=\lambda(\eta)e_{D}(\eta).
\label{eigenprob}%
\end{equation}
Apply $a\in O(3)$ to \eqref{eigenprob}, by commutative properties and the fact
that $a$ preserves the vector inner product in $\mathbb{R}^{3}$,
\[
(-iv\cdot(a\eta)+L)(a\circ e_{D}(\eta))=\lambda(\eta)(a\circ e_{D})(\eta).
\]
Then $\lambda(a\eta)=\lambda(\eta)$, $e_{D}(a\eta)=a\circ e(\eta)$, which
implies that $\lambda(\eta)$ is dependent only upon $|\eta|$. Now let $a\in
O(3)$ be an orthogonal transformation that sends $\frac{\eta}{|\eta|}$ to
$(1,0,0)^{T}$. Thus the original eigenvalue problem \eqref{eigenprob} is
reduced to
\begin{equation}
(-iv_{1}|\eta|+L)e(|\eta|)=\lambda(|\eta|)e(|\eta|), \label{eigenprob1}%
\end{equation}
with $\lambda(\eta)=\lambda(|\eta|)$, $e_{D}(\eta)=a^{-1}\circ e_{D}(|\eta
|)$.
Apply the Macro-Micro decomposition to \eqref{eigenprob1} to yield
\begin{subequations}
\label{eq:L6.1}%
\begin{align}
&  -i|\eta|\mathrm{P}_{0}v_{1}\big(\mathrm{P}_{0}e+\mathrm{P}_{1}%
e\big)=\lambda\mathrm{P}_{0}e,\label{eq:L6.1a}\\
&  -i|\eta|\mathrm{P}_{1}v_{1}\mathrm{P}_{0}e-i|\eta|\mathrm{P}_{1}%
v_{1}\mathrm{P}_{1}e+L\mathrm{P}_{1}e=\lambda\mathrm{P}_{1}e. \label{eq:L6.1b}%
\end{align}
Set $\lambda(|\eta|)=i|\eta|\zeta(|\eta|).$ We can solve $\mathrm{P}_{1}e$ in
terms of $\mathrm{P}_{0}e$ from \eqref{eq:L6.1b},
\end{subequations}
\begin{equation}
\mathrm{P}_{1}e=i|\eta|\big[L-i|\eta|\mathrm{P}_{1}v_{1}-i|\eta|\zeta
(|\eta|)\big]^{-1}\mathrm{P}_{1}v_{1}\mathrm{P}_{0}e, \label{eq:L6.2}%
\end{equation}
then substitute this back to \eqref{eq:L6.1a} to get
\begin{equation}
\Big(\mathrm{P}_{0}v_{1}+i|\eta|\mathrm{P}_{0}\big[L-i|\eta|\mathrm{P}%
_{1}v_{1}-i|\eta|\zeta(|\eta|)\big]^{-1}\mathrm{P}_{1}v_{1}\Big)\mathrm{P}%
_{0}e=-\zeta\mathrm{P}_{0}e. \label{eigenprob_fte}%
\end{equation}
We notice that this is actually a finite dimensional eigenvalue problem. The
solvability of it and the asymptotic expansions of eigenvalue and
eigenfunction for $|\eta|\ll1$ are essentially due to the implicit function
theorem. The procedure is basically the same as the case $\gamma=2,$ we refer
the readers to Theorem 3.2 in \cite{[LuoYu]} for details. We obtain
$\lambda(|\eta|)$ and $\mathrm{P}_{0}e(|\eta|)=\beta(|\eta|)E_{D}$ with
$\lambda$ and $\beta$ being smooth functions. Furthermore, $\lambda(|\eta|)$
and $\beta(|\eta|)$ are not merely smooth but analytic for $\gamma\geq3/2$. To
prove this, it suffices to check that the perturbation $ivf$ is $L$-bounded,
i.e.,
\[
|vf|_{L_{v}^{2}}^{2}\leq C_{1}|Lf|_{L_{v}^{2}}^{2}+C_{2}|f|_{L_{v}^{2}}%
^{2}\,.
\]
Then the Kato-Rellich theorem guarantees the operator $B(z)=-iv_{1}z+L$ is in
the analytic family of Type (A), see \cite{[Kato]}, which in turn implies the
eigenvalue and eigenfunction associated with \eqref{eigenprob1} are analytic
in $|\eta|$, cf. \cite{[DeLe]}. Now, let us calculate $\big<\Lambda f,\Lambda
f\big>_{v}$ first. For simplicity of notation, let
\[
\psi(v)=\frac{1}{4}|v|^{2}\left\langle v\right\rangle ^{2\ga-4}-\left(
\frac{3}{2}\left\langle v\right\rangle ^{\ga-2}+\frac{\ga-2}{2}|v|^{2}%
\left\langle v\right\rangle ^{\ga-4}\right)  +\varpi\chi_{R}(|v|)\,,
\]
then
\begin{align*}
\big<\Lambda f,\Lambda f\big>_{v}  &  =|\Delta_{v}f|_{L_{v}^{2}}^{2}%
+|\psi(v)f|_{L_{v}^{2}}^{2}\\
&  \quad+2\big<\psi(v),\big(\nabla_{v}f,\nabla_{v}f\big)\big>_{v}%
+2\big<f,\big(\nabla_{v}\psi(v),\nabla_{v}f\big)\big>_{v}\,.
\end{align*}
By the Cauchy inequality, we have
\[
\big|\big<f,\big(\nabla_{v}\psi(v),\nabla_{v}f\big)\big>_{v}\big|\leq
\big<\psi(v),\big(\nabla_{v}f,\nabla_{v}f\big)\big>_{v}+\frac{1}{4}%
\Big<\frac{f^{2}}{\psi(v)},\big(\nabla_{v}\psi(v),\nabla_{v}\psi
(v)\big)\Big>_{v}\,.
\]
Let us compare $\big(\nabla_{v}\psi(v),\nabla_{v}\psi(v)\big)$ and $\psi
^{3}(v)$. For $|v|$ large, we have
\[
\big(\nabla_{v}\psi(v),\nabla_{v}\psi(v)\big)\approx|v|^{4\ga-6}%
\]
and
\[
\psi^{3}(v)\approx|v|^{6\ga-6}\,.
\]
For $|v|$ small, one can choose $\varpi$ large enough such that
\[
\big(\nabla_{v}\psi(v),\nabla_{v}\psi(v)\big)\ll\psi^{3}(v)\,.
\]
This means
\[
\big<\Lambda f,\Lambda f\big>_{v}\geq|\Delta_{v}f|_{L_{v}^{2}}^{2}+\frac{1}%
{2}|\psi(v)f|_{L_{v}^{2}}^{2}\gtrsim|\left\langle v\right\rangle
^{2\ga-2}f|_{L_{v}^{2}}^{2}\,.
\]
Hence if $\ga\geq3/2$,
\begin{align*}
|vf|_{L_{v}^{2}}^{2}\leq C\big<\Lambda f,\Lambda f\big>_{v}  &
=C\big<Lf-Kf,Lf-Kf\big>_{v}\\
&  \leq C_{1}|Lf|_{L_{v}^{2}}^{2}+C_{2}|f|_{L_{v}^{2}}^{2}\,.
\end{align*}
However, we cannot simply deduce smoothness (analyticity) in $\eta$ from
smoothness (analyticity) in $|\eta|$. Our goal is to show $\lambda(z)$ and
$\beta(z)$ are in fact even functions in $z$. If so, due to a classical
theorem of Whitney \cite{Whitney}, we have
\[
\lambda(|\eta|)=\mathscr{A}(|\eta|^{2}),\qquad\beta(|\eta|)=\mathscr{B}(|\eta
|^{2}),
\]
for some smooth or analytic functions $\mathscr{A}$ and $\mathscr{B}$ provided
$\lambda(|\eta|)$ and $\beta(|\eta|)$ are smooth or analytic respectively. To
show they are even, let us define a map $\mathcal{R}:(v_{1},v_{2}%
,v_{3})\mapsto(-v_{1},v_{2},v_{3})$, then obviously $\mathcal{R}\in O(3)$. We
apply $\mathcal{R}$ to \eqref{eigenprob1},
\[
(-iv_{1}(-|\eta|)+L)(\mathcal{R}\circ e(|\eta|))=\lambda(|\eta|)(\mathcal{R}%
\circ e(|\eta|)),
\]
which is an eigenvalue problem with $|\eta|\rightarrow-|\eta|$. This follows
that the eigenpair $\{\lambda(|\eta|),\mathcal{R}\circ e(|\eta|)\}$ coincides
with $\{(\lambda(-|\eta|),e(-|\eta|))\}$. Hence
\begin{equation}
\lambda(|\eta|)=\lambda(-|\eta|),\qquad\mathcal{R}\circ e(|\eta|)=e(-|\eta|).
\label{eq:L6.3}%
\end{equation}
In addition, use $\mathcal{R}\circ\mathrm{P}_{0}e(|\eta|)=\mathrm{P}%
_{0}\mathcal{R}\circ e(|\eta|)=\mathrm{P}_{0}e(-|\eta|)$ and $\mathcal{R}\circ
E_{D}=E_{D}$ to find $\beta(|\eta|)=\beta(-|\eta|)$, namely $\beta$ is also an
even function. We can show
\begin{equation}
\overline{\lambda(|\eta|)}=\lambda(-|\eta|),\qquad\overline{e(|\eta
|)}=e(-|\eta|), \label{eq:L6.4}%
\end{equation}
by taking the complex conjugate of \eqref{eigenprob1}. This together with
\eqref{eq:L6.3} shows $\lambda(|\eta|)$ and $\beta(|\eta|)$ are real
functions. By \eqref{eq:L6.2}, we can construct $e(|\eta|)$ from
$\mathrm{P}_{0}e(|\eta|)$,
\[
e(|\eta|)=\mathrm{P}_{0}e(|\eta|)+\mathrm{P}_{1}e(|\eta|)=\Big(1+i|\eta
|\big[L-i|\eta|\mathrm{P}_{1}v_{1}-\lambda(|\eta|)\big]^{-1}\mathrm{P}%
_{1}v_{1}\Big)\beta(|\eta|)E_{D}.
\]
The eigenfunction $e_{D}(\eta)$ to the original eigenvalue-problem
\eqref{eigenprob} can be recovered by applying $a^{-1}$, namely
\[
e_{D}(\eta)=a^{-1}\circ e(|\eta|)=\Big(1+\big[L-\mathrm{P}_{1}i\eta\cdot
v-\mathscr{A}(|\eta|^{2})\big]^{-1}\mathrm{P}_{1}i\eta\cdot
v\Big)\mathscr{B}(|\eta|^{2})E_{D}.
\]
Therefore the proof is complete.
\end{proof}

\subsection{The semigroup operator $e^{t\mathcal{L}}$}

Now, let $h$ be the solution of the equation
\begin{equation}
\left\{
\begin{array}
[c]{l}%
\pa_{t}h=\mathcal{L}h,\quad\hbox{where}\quad\mathcal{L}h=-v\cdot\nabla
_{x}h-\Lambda h\,,\\[4mm]%
h(0,x,v)=h_{0}(x,v)\,.
\end{array}
\right.
\end{equation}
In this subsection, we will study some properties of the semigroup operator
$e^{t\mathcal{L}}$.

\begin{lemma}
\label{x-energy}For any $k\in{\mathbb{N}}\cup\{0\},$

\noindent\textrm{(i)} If $\gamma\geq1$, there exists $C>0$ such that
\begin{equation}
\left\Vert e^{t\mathcal{L}}h_{0}\right\Vert _{H_{x}^{k}L_{v}^{2}(\mu)}\lesssim
e^{-Ct}\left\Vert h_{0}\right\Vert _{H_{x}^{k}L_{v}^{2}(\mu)}\,.
\label{energy1}%
\end{equation}
\noindent\textrm{(ii)} If $0<\gamma<1$, we have
\begin{equation}
\left\Vert e^{t\mathcal{L}}h_{0}\right\Vert _{H_{x}^{k}L_{v}^{2}(\mu)}%
\lesssim\left\Vert h_{0}\right\Vert _{H_{x}^{k}L_{v}^{2}(\mu)}\,.
\label{energy2}%
\end{equation}

\end{lemma}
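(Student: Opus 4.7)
The plan is to run a weighted $L^2_{x,v}$ energy estimate on the equation $\partial_t h + v\cdot\nabla_x h + \Lambda h = 0$, using the coercivity of $\Lambda$ from Lemma \ref{prop1}(i), and then to lift the result to $H^k_x$ by commuting $\partial_x^\alpha$ through $\mathcal{L}$. First I would multiply the equation by $h\mu$ and integrate over $\mathbb{R}^3_x\times\mathbb{R}^3_v$. Since $v$ is independent of $x$, integration by parts in $x$ converts the transport contribution into a boundary-free term, giving
\[
\frac{1}{2}\frac{d}{dt}\|h\|^2_{L^2(\mu)} \;=\; \frac{1}{2}\int h^2\bigl(v\cdot\nabla_x \mu\bigr)\,dxdv \;-\; \int (\Lambda h)\,h\,\mu\,dxdv.
\]
Applying Lemma \ref{prop1}(i) bounds the dissipative term above by $-c\|h\|^2_{L^2_\sigma(\mu)}$, so everything reduces to absorbing the transport contribution into this coercive quantity.

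The absorption step proceeds case by case according to the three admissible choices of $\mu$. For $\mu\equiv 1$ the transport term vanishes identically. For $\mu=\exp(\langle x\rangle/D)$, which is only used when $\gamma\geq 3/2$, a direct computation yields $|v\cdot\nabla_x \mu|\leq (|v|/D)\mu$; since $|v|\leq\langle v\rangle\leq \langle v\rangle^{2(\gamma-1)}$ in this regime, choosing $D$ sufficiently large subsumes this into $\tfrac{c}{2}\|h\|^2_{L^2_\sigma(\mu)}$. For $\mu=\exp(\alpha c(x,v))$ with $0<\gamma<3/2$, the piecewise definition of $c(x,v)$ is precisely engineered so that $|v\cdot\nabla_x c(x,v)|\lesssim \delta\langle v\rangle^{2(\gamma-1)}$; consequently $\alpha\int h^2|v\cdot\nabla_x c|\mu\,dxdv\lesssim \alpha\delta\|h\|^2_{L^2_\sigma(\mu)}$, which is absorbed by taking $\alpha\delta$ small. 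In every case one arrives at
\[
\frac{d}{dt}\|h\|^2_{L^2(\mu)} \;\leq\; -c'\,\|h\|^2_{L^2_\sigma(\mu)}.
\]

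To close the $k=0$ case I would distinguish the two regimes for $\gamma$. If $\gamma\geq 1$ then $\langle v\rangle^{2(\gamma-1)}\geq 1$, so $\|h\|^2_{L^2_\sigma(\mu)}\geq \|h\|^2_{L^2(\mu)}$ and Gronwall's inequality produces the exponential decay in \eqref{energy1}. If $0<\gamma<1$ the dissipative term is merely nonnegative, which still gives the monotonicity, hence \eqref{energy2}. For $k\geq 1$, note that $\partial_x^\alpha$ commutes with both $v\cdot\nabla_x$ and $\Lambda$ (the latter acts only on $v$), so $\partial_x^\alpha h$ satisfies the same equation with initial data $\partial_x^\alpha h_0$. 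Applying the argument above to each $|\alpha|\leq k$ and summing yields the full $H^k_xL^2_v(\mu)$ estimate.

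The main technical obstacle is the bound $|v\cdot\nabla_x c(x,v)|\lesssim \delta\langle v\rangle^{2(\gamma-1)}$ when $\mu=\exp(\alpha c(x,v))$. One must differentiate each branch of the piecewise definition of $c$, in particular handling the cutoff $\chi(\delta\langle x\rangle\langle v\rangle^{\gamma-3})$ and the cross terms it produces via $\chi'$. The exponent $\gamma/(3-\gamma)$ appearing in the first piece of $c(x,v)$ is chosen exactly so that in the region where $\chi$ transitions one has $\delta\langle x\rangle\sim\langle v\rangle^{3-\gamma}$, which converts $\delta$-weighted powers of $\langle x\rangle$ into $\langle v\rangle^{2\gamma-3}$; pairing with $|v|$ then supplies the target power $\langle v\rangle^{2(\gamma-1)}$.
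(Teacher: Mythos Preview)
Your proposal is correct and follows essentially the same approach as the paper: a weighted $L^2$ energy estimate using the coercivity of $\Lambda$ from Lemma~\ref{prop1}(i), with the transport term $\int h^2(v\cdot\nabla_x\mu)\,dxdv$ absorbed case by case according to the choice of $\mu$, and then the commutation of $\partial_x^\beta$ with $\mathcal{L}$ to pass to $H^k_x$. The paper records the key gradient bound as $|\nabla_x c(x,v)|\leq \delta C\langle v\rangle^{2\gamma-3}$, which after multiplication by $|v|\leq\langle v\rangle$ is exactly your $|v\cdot\nabla_x c|\lesssim\delta\langle v\rangle^{2(\gamma-1)}$.
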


\begin{proof}
It suffices to show that there exists $c_{0}>0$ such that for any multi-index
$\beta,$%
\begin{equation}
\frac{d}{dt}\left\Vert \partial_{x}^{\beta}h\right\Vert _{L^{2}\left(
\mu\right)  }^{2}\leq-c_{0}\left\Vert \partial_{x}^{\beta}h\right\Vert
_{L_{\sigma}^{2}\left(  \mu\right)  }^{2}. \label{Energy-evolu}%
\end{equation}
In view of Lemma \ref{prop1}, we have
\begin{align*}
\frac{1}{2}\frac{d}{dt}\left\Vert \partial_{x}^{\beta}h\right\Vert
_{L^{2}\left(  \mu\right)  }^{2}  &  =\int-v\cdot\nabla_{x}\left(
\partial_{x}^{\beta}h\right)  \partial_{x}^{\beta}h\mu dxdv-\int\Lambda\left(
\partial_{x}^{\beta}h\right)  \partial_{x}^{\beta}h\mu dxdv\\
&  \leq\int\frac{1}{2}\left(  \partial_{x}^{\beta}h\right)  ^{2}v\cdot
\nabla_{x}\mu dxdv-c_{0}\left\Vert \partial_{x}^{\beta}h\right\Vert
_{L_{\sigma}^{2}\left(  \mu\right)  }^{2}.
\end{align*}

If $\mu\left(  x,v\right)  \equiv1,$ $\left(  \ref{Energy-evolu}\right)  $ is
obvious since $\int\frac{1}{2}\left(  \partial_{x}^{\beta}h\right)  ^{2}%
v\cdot\nabla_{x}\mu dxdv=0.$

If $\mu\left(  x,v\right)  =\exp\left(  \left\langle x\right\rangle /D\right)
$\ and $\gamma\geq3/2$, we choose $D$ sufficiently large such that
$1/D<\min\{c_{0},1\}\ $and thus obtain
\begin{align*}
\left\vert \int\frac{1}{2}\left(  \partial_{x}^{\beta}h\right)  ^{2}%
v\cdot\nabla_{x}\mu dxdv\right\vert  &  =\left\vert \int\frac{1}{2}\left(
\partial_{x}^{\beta}h\right)  ^{2}v\cdot\frac{x}{D\left\langle x\right\rangle
}\mu dxdv\right\vert \\
&  \leq\frac{1}{2D}\int\left\langle v\right\rangle ^{2\gamma-2}\left(
\partial_{x}^{\beta}h\right)  ^{2}\mu dxdv\leq\frac{c_{0}}{2}\left\Vert
\partial_{x}^{\beta}h\right\Vert _{L_{\sigma}^{2}\left(  \mu\right)  }^{2}.
\end{align*}

If $\mu\left(  x,v\right)  =e^{\alpha c\left(  x,v\right)  }$ and $0<$
$\gamma<3/2,$ since
\[
\left\vert \nabla_{x}c\left(  x,v\right)  \right\vert \leq\delta C\left\langle
v\right\rangle ^{2\gamma-3},
\]
for some constant $C>0$ depending only upon $\gamma,$ we have
\[
\left\vert \int\frac{1}{2}\left(  \partial_{x}^{\beta}h\right)  ^{2}%
v\cdot\nabla_{x}\mu dxdv\right\vert \leq\frac{\delta C\alpha}{2}%
\int\left\langle v\right\rangle ^{2\gamma-2}\left(  \partial_{x}^{\beta
}h\right)  ^{2}\mu dxdv\leq\frac{c_{0}}{2}\left\Vert \partial_{x}^{\beta
}h\right\Vert _{L_{\sigma}^{2}\left(  \mu\right)  }^{2}%
\]
by choosing $\alpha,\delta$ $>0\ $small enough with $0<$ $\delta C\alpha
<\min\{c_{0},1\}.$

Grouping the above discussions, we obtain $\left(  \ref{Energy-evolu}\right)
$ and thus deduce that for $\gamma>0,$ $k\in{\mathbb{N\cup\{}}0{\mathbb{\}}}%
$,
\[
\left\Vert e^{t\mathcal{L}}h_{0}\right\Vert _{H_{x}^{k}L_{v}^{2}(\mu)}%
\leq\left\Vert h_{0}\right\Vert _{H_{x}^{k}L_{v}^{2}(\mu)}.
\]
Moreover, if $\gamma\geq1,$ then $\left(  \ref{Energy-evolu}\right)  $
becomes
\[
\frac{1}{2}\frac{d}{dt}\left\Vert \partial_{x}^{\beta}h\right\Vert
_{L^{2}\left(  \mu\right)  }^{2}\leq-\frac{c_{0}}{2}\left\Vert \partial
_{x}^{\beta}h\right\Vert _{L_{\sigma}^{2}\left(  \mu\right)  }^{2}\leq
-\frac{c_{0}}{2}\left\Vert \partial_{x}^{\beta}h\right\Vert _{L^{2}\left(
\mu\right)  }^{2},
\]
which leads to the exponential time decay of all $x$-derivatives of the
solution $e^{t\mathcal{L}}h_{0}$ in the weighted $L^{2}$ norm.
\end{proof}

The following is the regularization estimate of the semigroup operator
$e^{t\mathcal{L}}$ in small time.

\begin{lemma}
[Regularization estimate]\label{regularization} For $\gamma>0$ and $0<t\leq1$,
we have
\[
\int|\nabla_{v}e^{t\mathcal{L}}h_{0}|^{2}\mu dxdv=O(t^{-1})\int|h_{0}|^{2}\mu
dxdv
\]
and
\[
\int|\nabla_{x}e^{t\mathcal{L}}h_{0}|^{2}\mu dxdv=O(t^{-3})\int|h_{0}|^{2}\mu
dxdv\,.
\]

\end{lemma}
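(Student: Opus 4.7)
The plan is to follow the H\'erau--Villani style hypoelliptic regularization argument, adapted to the weighted space $L^{2}(\mu)$ and the modified collision operator $\Lambda$. The essential algebraic identity is the commutator $[\nabla_{v},v\cdot\nabla_{x}]=\nabla_{x}$, through which $v$-regularity produces $x$-regularity via the transport. Setting $h(t)=e^{t\mathcal{L}}h_{0}$, so that $(\partial_{t}+v\cdot\nabla_{x}+\Lambda)h=0$, and writing $\Lambda=-\Delta_{v}+\psi(v)$ with
\[
\psi(v)=\tfrac{1}{4}|v|^{2}\langle v\rangle^{2\gamma-4}-\tfrac{3}{2}\langle v\rangle^{\gamma-2}-\tfrac{\gamma-2}{2}|v|^{2}\langle v\rangle^{\gamma-4}+\varpi\chi_{R},
\]
one finds that $\nabla_{x}h$ solves the same equation as $h$ (since $\nabla_{x}$ commutes with both $v\cdot\nabla_{x}$ and $\Lambda$), whereas
\[
(\partial_{t}+v\cdot\nabla_{x})\nabla_{v}h+\Lambda\nabla_{v}h=-\nabla_{x}h-(\nabla_{v}\psi)\,h.
\]

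I would then introduce the time-weighted functional
\[
\Phi(t):=\|h\|_{L^{2}(\mu)}^{2}+At\|\nabla_{v}h\|_{L^{2}(\mu)}^{2}+2Bt^{2}\langle\nabla_{v}h,\nabla_{x}h\rangle_{L^{2}(\mu)}+Ct^{3}\|\nabla_{x}h\|_{L^{2}(\mu)}^{2},
\]
with positive constants $A,B,C$ chosen so that $B^{2}\leq AC/4$, which by Cauchy--Schwarz gives coercivity $\Phi(t)\geq\tfrac{1}{2}(\|h\|_{L^{2}(\mu)}^{2}+At\|\nabla_{v}h\|_{L^{2}(\mu)}^{2}+Ct^{3}\|\nabla_{x}h\|_{L^{2}(\mu)}^{2})$. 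The aim is to show $\Phi'(t)\leq 0$ on $(0,1]$; then $\Phi(t)\leq\Phi(0)=\|h_{0}\|_{L^{2}(\mu)}^{2}$ yields the two claimed estimates directly. Each of the three energies contributes a dissipation $-2c\|\cdot\|_{L^{2}_{\sigma}(\mu)}^{2}$ via Lemma \ref{prop1}(i), and the weight-transport terms $|\int|\partial_{\bullet}h|^{2}v\cdot\nabla_{x}\mu\,dx\,dv|$ that arise for $\mu=\exp(\langle x\rangle/D)$ or $\mu=\exp(\alpha c(x,v))$ are absorbed into fractions of these dissipations exactly as in the proof of Lemma \ref{x-energy}, uniformly in the three admissible weights.

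The driving mechanism is the cross term: $\tfrac{d}{dt}\langle\nabla_{v}h,\nabla_{x}h\rangle_{L^{2}(\mu)}$ contains $-\|\nabla_{x}h\|_{L^{2}(\mu)}^{2}$ coming from the source $-\nabla_{x}h$ in the $\nabla_{v}h$ equation, so differentiating $2Bt^{2}\langle\nabla_{v}h,\nabla_{x}h\rangle$ yields a negative $-2Bt^{2}\|\nabla_{x}h\|_{L^{2}(\mu)}^{2}$ which, for $B\gg C$, dominates the bad $3Ct^{2}\|\nabla_{x}h\|_{L^{2}(\mu)}^{2}$ produced by differentiating $t^{3}$. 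The remaining bad term, $A\|\nabla_{v}h\|_{L^{2}(\mu)}^{2}$ from $\tfrac{d}{dt}(At\|\nabla_{v}h\|^{2})$, is killed by the $h$-energy dissipation $-2c\|h\|_{L^{2}_{\sigma}(\mu)}^{2}\leq -2c\|\nabla_{v}h\|_{L^{2}(\mu)}^{2}$ once $A\leq c$. All off-diagonal contributions ($4Bt\langle\nabla_{v}h,\nabla_{x}h\rangle$, $-2At\langle\nabla_{x}h,\nabla_{v}h\rangle$, and the Dirichlet-form cross terms) are handled by Cauchy--Schwarz with small weights, paid against the three dissipation reservoirs $\|\cdot\|_{L^{2}_{\sigma}(\mu)}^{2}$. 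Fixing $C$ small, then $B=\sqrt{AC}/3$, then $A\leq c$, gives $\Phi'\leq 0$ on $[0,1]$, as required.

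The main technical obstacle is the zeroth-order commutator term $(\nabla_{v}\psi)h$: since $|\nabla_{v}\psi|\sim\langle v\rangle^{2\gamma-3}$ has no decay, one must split
\[
\int\langle v\rangle^{2\gamma-3}|h|\,|\nabla_{v}h|\,\mu\,dx\,dv
\]
by Young's inequality so that the factor $|h|$ is absorbed into $\|h\|_{L^{2}_{\sigma}(\mu)}^{2}=\|\langle v\rangle^{\gamma-1}h\|_{L^{2}(\mu)}^{2}+\|\nabla_{v}h\|_{L^{2}(\mu)}^{2}$ via its weighted component, and $|\nabla_{v}h|$ into the corresponding piece of $\|\nabla_{v}h\|_{L^{2}_{\sigma}(\mu)}^{2}$. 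The compatibility of this splitting for all $\gamma>0$, combined with treating the extra contribution $\nabla_{v}\mu$ coming from the velocity dependence of $c(x,v)$ (whose gradient carries the factor $\langle v\rangle^{\gamma-1}$ by \eqref{v-derivtive-c}), constitutes the main bookkeeping difficulty of the argument.
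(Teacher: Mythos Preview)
Your proposal is correct and follows essentially the same approach as the paper: both construct the H\'erau--Villani time-weighted functional
\[
\mathcal{F}(t)=A\|h\|_{L^{2}(\mu)}^{2}+at\|\nabla_{v}h\|_{L^{2}(\mu)}^{2}+2ct^{2}\langle\nabla_{x}h,\nabla_{v}h\rangle_{L^{2}(\mu)}+bt^{3}\|\nabla_{x}h\|_{L^{2}(\mu)}^{2}
\]
and show $\mathcal{F}'\leq 0$ on $(0,1]$ by exploiting the commutator $[\nabla_{v},v\cdot\nabla_{x}]=\nabla_{x}$, with the same handling of the $\nabla_{v}\psi$ commutator and of $\nabla_{v}\mu$. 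The only cosmetic difference is that the paper normalizes with a large constant $A$ in front of $\|h\|^{2}$ (taking $a=\varepsilon$, $4b=c=\varepsilon^{3/2}$), whereas you set that coefficient to $1$ and take $A,B,C$ small; the paper is also more explicit about using time-dependent Cauchy--Schwarz splittings (e.g.\ with parameter $\tfrac{bc_{0}}{8c}t$) to match the powers of $t$ in the cross-term estimates, a point you should make precise when filling in the bookkeeping.
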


\begin{proof}
Recall that $\Lambda=-L+K$ with%
\[
\Lambda f=-\triangle_{v}f+\left[  \frac{1}{4}\left\vert \nabla_{v}%
\Phi\right\vert ^{2}-\frac{1}{2}\triangle_{v}\Phi\right]  f+\varpi\chi
_{R}\left(  \left\vert v\right\vert \right)  f,\ \ \ K=\varpi\chi_{R}\left(
\left\vert v\right\vert \right)  f.
\]
Here we choose $R>0$ and $\varpi>0$ sufficiently large such that%
\begin{equation}
\ \left\{
\begin{array}
[c]{l}%
\frac{1}{4}\left\vert \nabla_{v}\Phi\right\vert ^{2}-\frac{1}{2}\triangle
_{v}\Phi\leq\frac{1}{3}\left\langle v\right\rangle ^{2\gamma-2}%
,\ \ \ \left\vert \nabla_{v}\left(  \frac{1}{4}\left\vert \nabla_{v}%
\Phi\right\vert ^{2}-\frac{1}{2}\triangle_{v}\Phi\right)  \right\vert
\lesssim\left\langle v\right\rangle ^{2\gamma-3}\ \ \text{for }\left\vert
v\right\vert >R,\vspace{3mm}\\
\left\vert \frac{1}{4}\left\vert \nabla_{v}\Phi\right\vert ^{2}-\frac{1}%
{2}\triangle_{v}\Phi\right\vert ,\ \left\vert \nabla_{v}\left(  \frac{1}%
{4}\left\vert \nabla_{v}\Phi\right\vert ^{2}-\frac{1}{2}\triangle_{v}%
\Phi\right)  \right\vert <\frac{\varpi}{2}\ \ \text{for }\left\vert
v\right\vert \leq R.
\end{array}
\right.  \label{*}%
\end{equation}
Now, define the energy functional
\[
\mathcal{F}\left(  t,h_{t}\right)  :=A\left\Vert h\right\Vert _{L^{2}\left(
\mu\right)  }^{2}+at\left\Vert \nabla_{v}h\right\Vert _{L^{2}\left(
\mu\right)  }^{2}+2ct^{2}\left\langle \nabla_{x}h,\nabla_{v}h\right\rangle
_{L^{2}\left(  \mu\right)  }+bt^{3}\left\Vert \nabla_{x}h\right\Vert
_{L^{2}\left(  \mu\right)  }^{2},
\]
with $a,$ $b,$ $c>0$ and $c<\sqrt{ab}$ (positive definite) and $A>0$
sufficiently large. We shall show that $d\mathcal{F}/dt\leq0,\ t\in\left(
0,1\right)  ,$ via choosing suitable positive constants $A,$ $a,$ $b$ and $c.$

In $\left(  \ref{Energy-evolu}\right)  ,$ it has been shown that
\begin{equation}
\frac{d}{dt}\left\Vert h\right\Vert _{L^{2}(\mu)}^{2}\leq-c_{0}\left\Vert
h\right\Vert _{L_{\sigma}^{2}(\mu)}^{2},
\end{equation}%
\begin{equation}
\frac{d}{dt}\left\Vert \partial_{x_{i}}h\right\Vert _{L^{2}(\mu)}^{2}%
\leq-c_{0}\left\Vert \partial_{x_{i}}h\right\Vert _{L_{\sigma}^{2}(\mu)}^{2}.
\end{equation}
Next, we show that
\begin{equation}
\frac{1}{2}\frac{d}{dt}\left\Vert \partial_{v_{i}}h\right\Vert _{L^{2}\left(
\mu\right)  }^{2}\leq-\int\partial_{x_{i}}h\partial_{v_{i}}h\mu dxdv-\frac
{c_{0}}{2}\left\Vert \partial_{v_{i}}h\right\Vert _{L_{\sigma}^{2}\left(
\mu\right)  }^{2}+C_{\varepsilon}\left\Vert h\right\Vert _{L_{\sigma}%
^{2}\left(  \mu\right)  }^{2}+\varepsilon\left\Vert \partial_{v_{i}%
}h\right\Vert _{L_{\sigma}^{2}\left(  \mu\right)  }^{2},
\end{equation}
where $\varepsilon>0$ is arbitrarily small and $C_{\varepsilon}=O\left(
1/\varepsilon\right)  .$ Compute%
\begin{align*}
\frac{1}{2}\frac{d}{dt}\left\Vert \partial_{v_{i}}h\right\Vert _{L^{2}\left(
\mu\right)  }^{2}  &  =-\int\partial_{x_{i}}h\partial_{v_{i}}h\mu
dxdv+\int\left[  \frac{v}{2}\left(  \partial_{v_{i}}h\right)  ^{2}\right]
\cdot\nabla_{x}\mu dxdv\\
&  \quad-\int\left(  \Lambda\partial_{v_{i}}h\right)  \partial_{v_{i}}h\mu
dxdv-\int\left(  \left[  \partial_{v_{i}},\Lambda\right]  h\right)
\partial_{v_{i}}h\mu dxdv,
\end{align*}
where%
\[
\left[  \partial_{v_{i}},\Lambda\right]  h=\left[  \partial_{v_{i}}\left(
\frac{1}{4}\left\vert \nabla_{v}\Phi\right\vert ^{2}-\frac{1}{2}\triangle
_{v}\Phi\right)  \right]  h+\varpi\partial_{v_{i}}\left(  \chi_{R}\right)  h.
\]
In the course of the proof of Lemma \ref{x-energy}, one can see that
\begin{equation}
\left\vert \nabla_{x}\mu\right\vert \leq\min\{c_{0},1\}\cdot\left\langle
v\right\rangle ^{2\gamma-3}\mu\ \ \ \text{and}\ \ \ \ \ \ \left\vert
\nabla_{x}\mu\right\vert \leq\min\{c_{0},1\}\mu, \label{x-derivative-mu}%
\end{equation}
so
\[
\left\vert \int\left[  \frac{v}{2}\left(  \partial_{v_{i}}h\right)
^{2}\right]  \cdot\nabla_{x}\mu dxdv\right\vert \leq\frac{c_{0}}{2}%
\int\left\langle v\right\rangle ^{2\gamma-2}\left(  \partial_{v_{i}}h\right)
^{2}\mu dxdv\leq\frac{c_{0}}{2}\left\Vert \partial_{v_{i}}h\right\Vert
_{L_{\sigma}^{2}\left(  \mu\right)  }^{2}.
\]
Furthermore, by $\left(  \ref{*}\right)  $ we obtain
\begin{align*}
\left\vert \int\left(  \left[  \partial_{v_{i}},\Lambda\right]  h\right)
\partial_{v_{i}}h\mu dxdv\right\vert  &  \leq C\int\left\langle v\right\rangle
^{2\gamma-2}\left\vert h\partial_{v_{i}}h\right\vert \mu dxdv+\frac{\varpi}%
{2}\int\chi_{R}\left\vert h\partial_{v_{i}}h\right\vert \mu dxdv+\frac{\varpi
}{R}\int\left\vert \chi_{R}^{\prime}\right\vert \left\vert h\partial_{v_{i}%
}h\right\vert \mu dxdv\\
&  \leq C^{\prime}\int\left\langle v\right\rangle ^{2\gamma-2}\left\vert
h\partial_{v_{i}}h\right\vert \mu dxdv\leq C_{\varepsilon}\left\Vert
h\right\Vert _{L_{\sigma}^{2}\left(  \mu\right)  }^{2}+\varepsilon\left\Vert
\partial_{v_{i}}h\right\Vert _{L_{\sigma}^{2}\left(  \mu\right)  }^{2},
\end{align*}
where $\varepsilon>0$ is arbitrary small and $C_{\varepsilon}=O\left(
1/\varepsilon\right)  .$ It turns out that
\[
\frac{1}{2}\frac{d}{dt}\left\Vert \partial_{v_{i}}h\right\Vert _{L^{2}(\mu
)}^{2}\leq-\int\partial_{x_{i}}h\partial_{v_{i}}h\mu dxdv-\frac{c_{0}}%
{2}\left\Vert \partial_{v_{i}}h\right\Vert _{L_{\sigma}^{2}(\mu)}%
^{2}+C_{\varepsilon}\left\Vert h\right\Vert _{L_{\sigma}^{2}\left(
\mu\right)  }^{2}+\varepsilon\left\Vert \partial_{v_{i}}h\right\Vert
_{L_{\sigma}^{2}\left(  \mu\right)  }^{2}.
\]

Finally, direct computation gives
\begin{align*}
&  \quad\frac{d}{dt}\int\partial_{x_{i}}h\partial_{v_{i}}h\mu dxdv\\
&  =-\int\left\vert \partial_{x_{i}}h\right\vert ^{2}\mu dxdv-2\int\nabla
_{v}\left(  \partial_{x_{i}}h\right)  \cdot\nabla_{v}\left(  \partial_{v_{i}%
}h\right)  \mu dxdv-2\int\left(  \frac{\left\vert \nabla_{v}\Phi\right\vert
^{2}}{4}-\frac{\triangle_{v}\Phi}{2}+\varpi\chi_{R}\right)  \partial_{x_{i}%
}h\partial_{v_{i}}h\mu dxdv\\
&  \quad+\int\left(  v\cdot\nabla_{x}\mu\right)  \partial_{v_{i}}%
h\partial_{x_{i}}hdxdv+\frac{1}{2}\int\partial_{v_{i}}\left(  \frac{\left\vert
\nabla_{v}\Phi\right\vert ^{2}}{4}-\frac{\triangle_{v}\Phi}{2}+\varpi\chi
_{R}\right)  h^{2}\partial_{x_{i}}\mu dxdv\\
&  \quad-\int\nabla_{v}\left(  \partial_{x_{i}}h\right)  \cdot\nabla
_{v}\left(  \mu\right)  \partial_{v_{i}}hdxdv-\int\nabla_{v}\left(
\partial_{v_{i}}h\right)  \cdot\nabla_{v}\left(  \mu\right)  \partial_{x_{i}%
}hdxdv.
\end{align*}
From $\left(  \ref{v-derivtive-c}\right)  $, it follows that%
\begin{align*}
&  \quad\left\vert \int\nabla_{v}\left(  \partial_{x_{i}}h\right)  \cdot
\nabla_{v}\left(  \mu\right)  \partial_{v_{i}}hdxdv+\int\nabla_{v}\left(
\partial_{v_{i}}h\right)  \cdot\nabla_{v}\left(  \mu\right)  \partial_{x_{i}%
}hdxdv\right\vert \\
&  \leq\alpha C\left(  \gamma\right)  \sup\left(  1+\left\vert \chi^{\prime
}\right\vert \right)  \int\left\langle v\right\rangle ^{\gamma-1}\left(
\left\vert \nabla_{v}\left(  \partial_{x_{i}}h\right)  \right\vert \left\vert
\partial_{v_{i}}h\right\vert +\left\vert \nabla_{v}\left(  \partial_{v_{i}%
}h\right)  \right\vert \left\vert \partial_{x_{i}}h\right\vert \right)  \mu
dxdv\\
&  \leq\int\left(  \left\langle v\right\rangle ^{\gamma-1}\left\vert
\nabla_{v}\left(  \partial_{x_{i}}h\right)  \right\vert \left\vert
\partial_{v_{i}}h\right\vert +\left\langle v\right\rangle ^{\gamma
-1}\left\vert \nabla_{v}\left(  \partial_{v_{i}}h\right)  \right\vert
\left\vert \partial_{x_{i}}h\right\vert \right)  \mu dxdv,
\end{align*}
from $\alpha C\left(  \gamma\right)  \sup\left(  1+\left\vert \chi^{\prime
}\right\vert \right)  =\alpha Q<1.$ Note that this inequality is valid in the
cases $\mu\left(  x,v\right)  =1$ and $\mu\left(  x,v\right)  =\exp\left(
\left\langle x\right\rangle /D\right)  $ as well, since $\nabla_{v}\mu=0$ in
both cases. Therefore,%
\begin{align*}
&  \quad\frac{d}{dt}\int\partial_{x_{i}}h\partial_{v_{i}}h\mu dxdv\\
&  \leq-\int\left\vert \partial_{x_{i}}h\right\vert ^{2}\mu dxdv-2\int
\nabla_{v}\left(  \partial_{x_{i}}h\right)  \cdot\nabla_{v}\left(
\partial_{v_{i}}h\right)  \mu dxdv-2\int\left(  \frac{\left\vert \nabla
_{v}\Phi\right\vert ^{2}}{4}-\frac{\triangle_{v}\Phi}{2}+\varpi\chi
_{R}\right)  \partial_{x_{i}}h\partial_{v_{i}}h\mu dxdv\\
&  \quad+\int\left(  v\cdot\nabla_{x}\mu\right)  \partial_{v_{i}}%
h\partial_{x_{i}}hdxdv+\frac{1}{2}\int\partial_{v_{i}}\left(  \frac{\left\vert
\nabla_{v}\Phi\right\vert ^{2}}{4}-\frac{\triangle_{v}\Phi}{2}+\varpi\chi
_{R}\right)  h^{2}\partial_{x_{i}}\mu dxdv\\
&  \quad+\int\left(  \left\langle v\right\rangle ^{\gamma-1}\left\vert
\nabla_{v}\left(  \partial_{x_{i}}h\right)  \right\vert \left\vert
\partial_{v_{i}}h\right\vert +\left\langle v\right\rangle ^{\gamma
-1}\left\vert \nabla_{v}\left(  \partial_{v_{i}}h\right)  \right\vert
\left\vert \partial_{x_{i}}h\right\vert \right)  \mu dxdv.
\end{align*}

Collecting terms gives%
\begin{align*}
&  \quad\frac{d}{dt}\mathcal{F}\left(  t,h_{t}\right) \\
&  \leq-c_{0}A\left\Vert h\right\Vert _{L_{\sigma}^{2}\left(  \mu\right)
}^{2}+a\left\Vert \nabla_{v}h\right\Vert _{L^{2}\left(  \mu\right)  }%
^{2}+4ct\left\langle \nabla_{x}h,\nabla_{v}h\right\rangle _{L^{2}\left(
\mu\right)  }+3bt^{2}\left\Vert \nabla_{x}h\right\Vert _{L^{2}\left(
\mu\right)  }^{2}\\
&  \quad+2at\left[  -\sum_{i=1}^{3}\int\partial_{x_{i}}h\partial_{v_{i}}h\mu
dxdv-\frac{c_{0}}{2}\left\Vert \nabla_{v}h\right\Vert _{L_{\sigma}^{2}\left(
\mu\right)  }^{2}+3C_{\varepsilon}\left\Vert h\right\Vert _{L_{\sigma}%
^{2}\left(  \mu\right)  }^{2}+\varepsilon\left\Vert \nabla_{v}h\right\Vert
_{L_{\sigma}^{2}\left(  \mu\right)  }^{2}\right] \\
&  \quad+2ct^{2}\left[  -\int\left\vert \nabla_{x}h\right\vert ^{2}\mu
dxdv-2\sum_{i=1}^{3}\int\nabla_{v}\left(  \partial_{x_{i}}h\right)
\cdot\nabla_{v}\left(  \partial_{v_{i}}h\right)  \mu dxdv\right. \\
&  \quad-2\sum_{i=1}^{3}\int\left(  \frac{\left\vert \nabla_{v}\Phi\right\vert
^{2}}{4}-\frac{\triangle_{v}\Phi}{2}+\varpi\chi_{R}\right)  \partial_{x_{i}%
}h\partial_{v_{i}}h\mu dxdv-\sum_{i=1}^{3}\int\left(  v\cdot\nabla_{x}%
\mu\right)  \partial_{v_{i}}h\partial_{x_{i}}hdxdv\\
&  \quad+\frac{1}{2}\sum_{i=1}^{3}\int\partial_{v_{i}}\left(  \frac{\left\vert
\nabla_{v}\Phi\right\vert ^{2}}{4}-\frac{\triangle_{v}\Phi}{2}+\varpi\chi
_{R}\right)  h^{2}\partial_{x_{i}}\mu dxdv\\
&  \quad\left.  +\sum_{i=1}^{3}\int\left(  \left\langle v\right\rangle
^{\gamma-1}\left\vert \nabla_{v}\left(  \partial_{x_{i}}h\right)  \right\vert
\left\vert \partial_{v_{i}}h\right\vert +\left\langle v\right\rangle
^{\gamma-1}\left\vert \nabla_{v}\left(  \partial_{v_{i}}h\right)  \right\vert
\left\vert \partial_{x_{i}}h\right\vert \right)  \mu dxdv\right]  -bc_{0}%
t^{3}\left\Vert \nabla_{x}h\right\Vert _{L_{\sigma}^{2}\left(  \mu\right)
}^{2}.
\end{align*}
By $\left(  \ref{*}\right)  $ and $\left(  \ref{x-derivative-mu}\right)  $,
\begin{align*}
&  \quad\left\vert -2\sum_{i=1}^{3}\int\left(  \frac{\left\vert \nabla_{v}%
\Phi\right\vert ^{2}}{4}-\frac{\triangle_{v}\Phi}{2}+\varpi\chi_{R}\right)
\partial_{x_{i}}h\partial_{v_{i}}h\mu dxdv+\sum_{i=1}^{3}\int\left(
v\cdot\nabla_{x}\mu\right)  \partial_{v_{i}}h\partial_{x_{i}}hdxdv\right\vert
\\
&  \leq\sum_{i=1}^{3}2\int\left\langle v\right\rangle ^{2\gamma-2}\left\vert
\partial_{x_{i}}h\partial_{v_{i}}h\right\vert \mu dxdv+3\varpi\int\chi
_{R}\left\vert \nabla_{x}h\cdot\nabla_{v}h\right\vert \mu dxdv\\
&  \leq\sum_{i=1}^{3}\left(  \frac{bc_{0}}{8c}t\left\Vert \left\langle
v\right\rangle ^{\gamma-1}\left(  \partial_{x_{i}}h\right)  \right\Vert
_{L^{2}\left(  \mu\right)  }^{2}+\frac{8c}{bc_{0}t}\left\Vert \left\langle
v\right\rangle ^{\gamma-1}\left(  \partial_{v_{i}}h\right)  \right\Vert
_{L^{2}\left(  \mu\right)  }^{2}\right)  +3\varpi\int\chi_{R}\left\vert
\nabla_{x}h\cdot\nabla_{v}h\right\vert \mu dxdv\\
&  =\frac{bc_{0}}{8c}t\left\Vert \nabla_{x}h\right\Vert _{L_{\sigma}%
^{2}\left(  \mu\right)  }^{2}+\frac{8c}{bc_{0}t}\left\Vert \nabla
_{v}h\right\Vert _{L_{\sigma}^{2}\left(  \mu\right)  }^{2}+3\varpi\int\chi
_{R}\left\vert \nabla_{x}h\cdot\nabla_{v}h\right\vert \mu dxdv.
\end{align*}
By the Cauchy-Schwartz inequality,
\begin{align*}
\left\vert 2\sum_{i=1}^{3}\int\nabla_{v}\left(  \partial_{x_{i}}h\right)
\cdot\nabla_{v}\left(  \partial_{v_{i}}h\right)  \mu dxdv\right\vert  &
\leq\sum_{i=1}^{3}\left(  \frac{bc_{0}}{8c}t\left\Vert \nabla_{v}\left(
\partial_{x_{i}}h\right)  \right\Vert _{L^{2}\left(  \mu\right)  }^{2}%
+\frac{8c}{bc_{0}t}\left\Vert \nabla_{v}\left(  \partial_{v_{i}}h\right)
\right\Vert _{L^{2}\left(  \mu\right)  }^{2}\right) \\
&  \leq\frac{bc_{0}}{8c}t\left\Vert \nabla_{x}h\right\Vert _{L_{\sigma}%
^{2}\left(  \mu\right)  }^{2}+\frac{8c}{bc_{0}t}\left\Vert \nabla
_{v}h\right\Vert _{L_{\sigma}^{2}\left(  \mu\right)  }^{2},
\end{align*}%
\begin{align*}
&  \quad\left\vert \left(  4ct-2at\right)  \int\nabla_{x}h\cdot\nabla_{v}h\mu
dxdv\right\vert +\int6c\varpi t^{2}\chi_{R}\left\vert \nabla_{x}h\cdot
\nabla_{v}h\right\vert \mu dxdv\\
&  \leq\left(  4c+2a+6c\varpi t\right)  t\left[  \varepsilon t\left\Vert
\nabla_{x}h\right\Vert _{L^{2}\left(  \mu\right)  }^{2}+\frac{C_{\varepsilon}%
}{t}\left\Vert \nabla_{v}h\right\Vert _{L^{2}\left(  \mu\right)  }^{2}\right]
,\ \ \ C_{\varepsilon}=O\left(  \frac{1}{\varepsilon}\right)  ,
\end{align*}
and%
\begin{align*}
&  \quad\left\vert \sum_{i=1}^{3}\int\left(  \left\langle v\right\rangle
^{\gamma-1}\left\vert \nabla_{v}\left(  \partial_{x_{i}}h\right)  \right\vert
\left\vert \partial_{v_{i}}h\right\vert +\left\langle v\right\rangle
^{\gamma-1}\left\vert \nabla_{v}\left(  \partial_{v_{i}}h\right)  \right\vert
\left\vert \partial_{x_{i}}h\right\vert \right)  \mu dxdv\right\vert \\
&  \leq\sum_{i=1}^{3}\left(  \frac{2c}{bc_{0}t}\int\left\langle v\right\rangle
^{2\gamma-2}\left\vert \partial_{v_{i}}h\right\vert ^{2}\mu dxdv+\frac{bc_{0}%
}{8c}t\int\left\vert \nabla_{v}\left(  \partial_{x_{i}}h\right)  \right\vert
^{2}\mu dxdv\right) \\
&  \quad\ +\sum_{i=1}^{3}\left(  \frac{bc_{0}}{8c}t\int\left\langle
v\right\rangle ^{2\gamma-2}\left\vert \partial_{x_{i}}h\right\vert ^{2}\mu
dxdv+\frac{2c}{bc_{0}t}\int\left\vert \nabla_{v}\left(  \partial_{v_{i}%
}h\right)  \right\vert ^{2}\mu dxdv\right) \\
&  \leq\frac{2c}{bc_{0}t}\left\Vert \nabla_{v}h\right\Vert _{L_{\sigma}%
^{2}\left(  \mu\right)  }^{2}+\frac{bc_{0}}{8c}t\left\Vert \nabla
_{x}h\right\Vert _{L_{\sigma}^{2}\left(  \mu\right)  }^{2}.
\end{align*}
In view of $\left(  \ref{*}\right)  $ and $\left(  \ref{x-derivative-mu}%
\right)  $,
\begin{align*}
&  \quad\left\vert \frac{1}{2}\sum_{i=1}^{3}\int\partial_{v_{i}}\left(
\frac{\left\vert \nabla_{v}\Phi\right\vert ^{2}}{4}-\frac{\triangle_{v}\Phi
}{2}+\varpi\chi_{R}\right)  h^{2}\partial_{x_{i}}\mu dxdv\right\vert \\
&  \leq\frac{C}{2}\int\left\langle v\right\rangle ^{2\gamma-3}h^{2}\mu
dxdv+\frac{\varpi}{4}\int\left\langle v\right\rangle ^{2\gamma-3}\chi_{R}%
h^{2}\mu dxdv+\frac{\varpi}{2R}\int\left\langle v\right\rangle ^{2\gamma
-3}\left\vert \chi_{R}^{\prime}\right\vert h^{2}\mu dxdv\\
&  \leq M^{\prime}\int\left\langle v\right\rangle ^{2\gamma-2}h^{2}\mu
dxdv\leq M^{\prime}\left\Vert h\right\Vert _{L_{\sigma}^{2}\left(  \mu\right)
}^{2},
\end{align*}
where $M^{\prime}>0$ is dependent only upon $R$ and $\varpi.$ Gathering the
above estimates, we therefore deduce%
\begin{align*}
\frac{d}{dt}\mathcal{F}\left(  t,h_{t}\right)   &  \leq\left\Vert h\right\Vert
_{L_{\sigma}^{2}\left(  \mu\right)  }^{2}\left[  -c_{0}A+a+6atC_{\varepsilon
}+\left(  4c+2a+6c\varpi t\right)  C_{\varepsilon}+2cM^{\prime}t^{2}\right] \\
&  \quad+\left\Vert \nabla_{x}h\right\Vert _{L^{2}\left(  \mu\right)  }%
^{2}\left(  -2c+3b+\left(  4c+2a+6c\varpi t\right)  \varepsilon\right)
t^{2}\\
&  \quad+\left\Vert \nabla_{x}h\right\Vert _{L_{\sigma}^{2}\left(  \mu\right)
}^{2}\left(  -\frac{bc_{0}}{4}\right)  t^{3}\\
&  \quad+\left\Vert \nabla_{v}h\right\Vert _{L_{\sigma}^{2}\left(  \mu\right)
}^{2}\left(  -ac_{0}+2a\varepsilon+\frac{36c^{2}}{bc_{0}}\right)  t.
\end{align*}
Set $a=\varepsilon,$ $4b=c=\varepsilon^{3/2}.\ $After choosing $A>0$
sufficiently large and $\varepsilon>0$ sufficiently small, we obtain%
\[
\frac{d}{dt}\mathcal{F}\left(  t,h_{t}\right)  \leq0,\ \ \ t\in\left(
0,1\right)  ,
\]
which implies that
\[
\mathcal{F}\left(  t,h_{t}\right)  \leq\mathcal{F}\left(  0,h_{0}\right)
=A\left\Vert h_{0}\right\Vert _{L^{2}}^{2},\ \ \ t\in\left[  0,1\right]  .
\]
This completes the proof of the lemma.
\end{proof}

Before the end of this section, we introduce the wave-remainder decomposition,
which is the key decomposition in our paper. The strategy is to design a
Picard-type iteration, treating $Kf$ as a source term. The zero order
approximation of the Fokker-Planck equation (\ref{in.1.c}) is
\begin{equation}
\left\{
\begin{array}
[c]{l}%
\pa_{t}h^{(0)}+v\cdot\nabla_{x}h^{(0)}+\Lambda h^{(0)}=0\,,\\[4mm]%
h^{(0)}(0,x,v)=f_{0}(x,v)\,.
\end{array}
\right.  \label{bot.3.b}%
\end{equation}
Thus the difference $f-h^{(0)}$ satisfies
\[
\left\{
\begin{array}
[c]{l}%
\pa_{t}(f-h^{(0)})+v\cdot\nabla_{x}(f-h^{(0)})+\Lambda(f-h^{(0)}%
)=K(f-h^{(0)})+Kh^{(0)}\,,\\[4mm]%
(f-h^{(0)})(0,x,v)=0\,.
\end{array}
\right.
\]
Therefore the first order approximation $h^{(1)}$ can be defined by
\begin{equation}
\left\{
\begin{array}
[c]{l}%
\pa_{t}h^{(1)}+v\cdot\nabla_{x}h^{(1)}+\Lambda h^{(1)}=Kh^{(0)}\,,\\[4mm]%
h^{(1)}(0,x,v)=0\,.
\end{array}
\right.  \label{bot.3.c}%
\end{equation}
In general, we can define the $j^{\mathrm{th}}$ order approximation $h^{(j)}$,
$j\geq1$, as
\begin{equation}
\left\{
\begin{array}
[c]{l}%
\pa_{t}h^{(j)}+v\cdot\nabla_{x}h^{(j)}+\Lambda h^{(j)}=Kh^{(j-1)}\,,\\[4mm]%
h^{(j)}(0,x,v)=0\,.
\end{array}
\right.  \label{bot.3.d}%
\end{equation}

The wave part and the remainder part can be defined as follows:
\[
W^{(3)}=\sum_{j=0}^{3}h^{(j)}\,,\quad\mathcal{R}^{(3)}=f-W^{(3)}\,,
\]
$\mathcal{R}^{(3)}$ solving the equation:
\begin{equation}
\left\{
\begin{array}
[c]{l}%
\pa_{t}\mathcal{R}^{(3)}+v\cdot\nabla_{x}\mathcal{R}^{(3)}=L\mathcal{R}%
^{(3)}+Kh^{(3)}\,,\\[4mm]%
\mathcal{R}^{(3)}(0,x,v)=0\,.
\end{array}
\right.
\end{equation}
The next two lemmas are the fundamental properties of $h^{(j)},0\leq j\leq3$.

\begin{lemma}
[$L^{2}$ estimate of $h^{(j)}$, $0\leq j\leq3$]\label{initial-sing} For all
$0\leq j\leq3$ and $t>0$,

\noindent\textrm{(i)} If $\gamma\geq1$, there exists $C>0$ such that
\[
\Vert h^{(j)}\Vert_{L^{2}(\mu)}\lesssim t^{j}e^{-Ct}\Vert f_{0}\Vert
_{L^{2}(\mu)}\,,
\]
\noindent\textrm{(ii)\ }If $0<\gamma<1$, we have
\[
\Vert h^{(j)}\Vert_{L^{2}(\mu)}\lesssim t^{j}\Vert f_{0}\Vert_{L^{2}(\mu)}\,.
\]

\end{lemma}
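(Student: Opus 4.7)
The argument is by induction on $j$, using Duhamel's formula together with the semigroup bound from Lemma \ref{x-energy} and the boundedness of $K$ on $L^{2}(\mu)$.

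The base case $j=0$ is immediate: since $h^{(0)}=e^{t\mathcal{L}}f_0$ by definition \eqref{bot.3.b}, Lemma \ref{x-energy} (with $k=0$) gives the required estimate $\|h^{(0)}\|_{L^{2}(\mu)}\lesssim e^{-Ct}\|f_0\|_{L^{2}(\mu)}$ when $\gamma\ge1$ and the uniform bound $\|h^{(0)}\|_{L^{2}(\mu)}\lesssim\|f_0\|_{L^{2}(\mu)}$ when $0<\gamma<1$. The first step I would isolate is the observation that $K=\varpi\chi_R(|v|)$ is simply multiplication by a bounded compactly supported function, so
\[
\|Kg\|_{L^{2}(\mu)}\le\varpi\|g\|_{L^{2}(\mu)}
\]
for every weight $\mu$ under consideration. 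This is the reason no regularity or higher moments on $g$ are needed to control the inhomogeneous term.

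For the inductive step $j\ge1$, equations \eqref{bot.3.c}--\eqref{bot.3.d} can be solved via Duhamel:
\[
h^{(j)}(t)=\int_{0}^{t}e^{(t-s)\mathcal{L}}\bigl(Kh^{(j-1)}(s)\bigr)\,ds.
\]
Plugging in Lemma \ref{x-energy} and the $K$-boundedness above, in the case $\gamma\ge1$ (with the decay rate in Lemma \ref{x-energy} denoted $c_{1}$, and the inductive rate denoted $C$, chosen with $0<C<c_{1}$) one obtains
\[
\|h^{(j)}(t)\|_{L^{2}(\mu)}\lesssim\int_{0}^{t}e^{-c_{1}(t-s)}s^{j-1}e^{-Cs}\,ds=e^{-c_{1}t}\int_{0}^{t}s^{j-1}e^{(c_{1}-C)s}\,ds\lesssim t^{j}e^{-Ct}\|f_{0}\|_{L^{2}(\mu)}.
\]
In the case $0<\gamma<1$, Lemma \ref{x-energy} gives only a uniform bound, and a direct integration yields
\[
\|h^{(j)}(t)\|_{L^{2}(\mu)}\lesssim\int_{0}^{t}s^{j-1}\,ds\,\|f_{0}\|_{L^{2}(\mu)}\lesssim t^{j}\|f_{0}\|_{L^{2}(\mu)}.
\]

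There is no serious obstacle here; the only minor subtlety is that, to close the induction cleanly in the exponentially decaying case, one must allow the constant $C$ in the statement to be strictly smaller than the constant $c_{1}$ produced by Lemma \ref{x-energy}, so that the factor $e^{(c_{1}-C)s}$ arising from the convolution can be absorbed into the polynomial $t^{j}$ without spoiling the exponential decay. Since the statement only asserts the existence of some positive $C$, this is harmless. All three ingredients (the semigroup bound, the $L^{2}(\mu)$-boundedness of $K$, and Duhamel) apply uniformly for each admissible weight $\mu\in\{1,\exp(\langle x\rangle/D),\exp(\alpha c(x,v))\}$, so the argument covers all the relevant cases simultaneously.
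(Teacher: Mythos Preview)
Your proof is correct and matches the paper's approach: the authors simply write that the lemma is ``immediate from Lemma~\ref{x-energy}'' and omit the details, and your induction via Duhamel with the $L^2(\mu)$-boundedness of $K$ is precisely the intended argument. The only point worth noting is your care in taking the final exponential rate $C$ strictly smaller than the semigroup rate $c_1$, which is exactly what is needed and is implicit in the paper's ``$\lesssim$''.
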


This lemma is immediate from Lemma \ref{x-energy} and hence we omit the details.

\begin{lemma}
[$x$-derivative estimate of $h^{(j)}$, $0\leq j\leq3$]\label{second-der} Let
$\gamma>0$ and $k=1,2$. Then

\noindent\textrm{(i)} For $0<t\leq1$, we have
\[
\Vert\nabla_{x}^{k}h^{(j)}\Vert_{L^{2}(\mu)}\lesssim t^{j-\frac{3}{2}k}\Vert
f_{0}\Vert_{L^{2}(\mu)}\,.
\]
\noindent\textrm{(ii)} For $t>1$, we have that if $\gamma\geq1$,
\[
\Vert\nabla_{x}^{k}h^{(j)}\Vert_{L^{2}(\mu)}\lesssim t^{j}e^{-Ct}\Vert
f_{0}\Vert_{L^{2}(\mu)}\,,
\]
and if $0<\gamma<1$,
\[
\Vert\nabla_{x}^{k}h^{(j)}\Vert_{L^{2}(\mu)}\lesssim t^{j}\Vert f_{0}%
\Vert_{L^{2}(\mu)}\,.
\]

\end{lemma}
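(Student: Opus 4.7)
I would prove both parts by induction on $j\in\{0,1,2,3\}$, using Duhamel's formula and distributing the $x$-derivatives between the semigroup factor and the source. Two structural facts are used throughout: $\nabla_{x}$ commutes with the translation-invariant semigroup $e^{t\mathcal{L}}$, and $\nabla_{x}$ commutes with the $v$-multiplier $K=\varpi\chi_{R}(|v|)$. The main tools are Lemma \ref{regularization} (controlling $\nabla_{x}e^{t\mathcal{L}}$ and, via the semigroup property $e^{t\mathcal{L}}=e^{(t/2)\mathcal{L}}\circ e^{(t/2)\mathcal{L}}$, also $\nabla_{x}^{2}e^{t\mathcal{L}}$, so that $\|\nabla_{x}^{k}e^{t\mathcal{L}}g\|_{L^{2}(\mu)}\lesssim t^{-3k/2}\|g\|_{L^{2}(\mu)}$ for $0<t\leq1$), Lemma \ref{x-energy} (yielding either exponential decay for $\gamma\geq 1$ or $L^{2}(\mu)$-boundedness for $0<\gamma<1$, in both cases commuting with $\nabla_{x}^{k}$), and the $L^{2}(\mu)$ bound on $h^{(j)}$ from Lemma \ref{initial-sing}.

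\textbf{Part (i), small time $0<t\leq 1$.} The base case $j=0$ follows from the iterated regularization estimate just described. For the inductive step $j\geq 1$, starting from
\[
h^{(j)}(t)=\int_{0}^{t}e^{(t-s)\mathcal{L}}Kh^{(j-1)}(s)\,ds,
\]
I would split the integral at $s=t/2$, placing all $k$ derivatives on the semigroup on $[0,t/2]$ and all $k$ derivatives on the source on $[t/2,t]$. On $[0,t/2]$, regularization gives $(t-s)^{-3k/2}\lesssim t^{-3k/2}$, and Lemma \ref{initial-sing} gives $\|h^{(j-1)}(s)\|_{L^{2}(\mu)}\lesssim s^{j-1}\|f_{0}\|_{L^{2}(\mu)}$; integration produces $t^{j-3k/2}\|f_{0}\|_{L^{2}(\mu)}$. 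On $[t/2,t]$, $e^{(t-s)\mathcal{L}}$ is an $L^{2}(\mu)$-contraction (Lemma \ref{x-energy}) and the induction hypothesis supplies $\|\nabla_{x}^{k}h^{(j-1)}(s)\|_{L^{2}(\mu)}\lesssim s^{(j-1)-3k/2}\|f_{0}\|_{L^{2}(\mu)}$; integrating the power $s^{(j-1)-3k/2}$ over $[t/2,t]$ again yields $t^{j-3k/2}\|f_{0}\|_{L^{2}(\mu)}$ in each admissible pair $(j,k)$.

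\textbf{Part (ii), large time $t>1$.} I would again induct on $j$. For $j=0$, use the semigroup identity $h^{(0)}(t)=e^{(t-1)\mathcal{L}}h^{(0)}(1)$, commute $\nabla_{x}^{k}$ past $e^{(t-1)\mathcal{L}}$, apply Lemma \ref{x-energy} to obtain the appropriate factor ($e^{-C(t-1)}$ for $\gamma\geq 1$, or plain boundedness for $0<\gamma<1$), and control the input at $t=1$ by part (i). For $j\geq 1$, use Duhamel from time $t-1$,
\[
h^{(j)}(t)=e^{\mathcal{L}}h^{(j)}(t-1)+\int_{t-1}^{t}e^{(t-s)\mathcal{L}}Kh^{(j-1)}(s)\,ds,
\]
estimate the first term by absorbing all $\nabla_{x}^{k}$ into the length-one semigroup (regularization at $t=1$) and then invoking Lemma \ref{initial-sing}, and estimate the second by substituting $\tau=s-(t-1)\in[0,1]$ and repeating the small-time splitting of part (i), with the factors $s^{j-1}$ now replaced by the decaying expressions from Lemma \ref{initial-sing} and with the inductive large-time bound on $\|\nabla_{x}^{k}h^{(j-1)}(s)\|_{L^{2}(\mu)}$ supplying the endpoint contribution near $s=t$.

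\textbf{Main obstacle.} The delicate point is the bookkeeping of the integrable endpoint singularities in part (i): the exponents $-3k/2$ at the left endpoint of $[0,t/2]$ and $(j-1)-3k/2$ at the right endpoint of $[t/2,t]$ are borderline integrable for $k=2$ and small $j$, so the split must happen at $t/2$ for the powers on the two sides to balance to exactly $t^{j-3k/2}$. Once this is settled, part (ii) is essentially a semigroup-propagation layer on top of part (i), and the passage from $\gamma\geq 1$ to $0<\gamma<1$ consists simply in removing the exponential factor afforded by Lemma \ref{x-energy}.
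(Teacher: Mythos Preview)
Your proposal is correct and follows essentially the same route as the paper: iterate the regularization lemma via the semigroup property to control $\nabla_x^2 e^{t\mathcal{L}}$, push $\nabla_x^k$ through the Duhamel integrand for the higher $h^{(j)}$, and propagate to large time via Lemma \ref{x-energy}. The only cosmetic differences are that the paper distributes $\nabla_x$ across the Duhamel integrand using the weighting $1=\tfrac{t-s}{t}+\tfrac{s}{t}$ (so each resulting piece is integrable over all of $[0,t]$) rather than splitting at $t/2$, and restarts Duhamel from the fixed time $1$ rather than from $t-1$ for the large-time step.
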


\begin{proof}
We divide our proof into several steps: \newline Step 1: First $x$-derivative
of $h^{(j)}$, $0\leq j\leq3$ in small time. We want to show that for
$0<t\leq1$,
\[
\Vert\nabla_{x}h^{(j)}\Vert_{L^{2}(\mu)}\lesssim t^{(-3+2j)/2}\Vert f_{0}%
\Vert_{L^{2}(\mu)}\,.
\]
The estimate of $h^{(0)}$ is immediate from Lemma \ref{regularization}. Note
that
\[
h^{(1)}=\int_{0}^{t}e^{(t-s)\mathcal{L}}Ke^{s\mathcal{L}}f_{0}ds\,,
\]
hence
\begin{align*}
\nabla_{x}h^{(1)}  &  =\int_{0}^{t}\frac{(t-s)+s}{t}\nabla_{x}%
e^{(t-s)\mathcal{L}}Ke^{s\mathcal{L}}f_{0}ds\,\\
&  =\int_{0}^{t}\frac{1}{t}\left[  (t-s)\nabla_{x}e^{(t-s)\mathcal{L}%
}Ke^{s\mathcal{L}}f_{0}+e^{(t-s)\mathcal{L}}K\left(  s\nabla_{x}%
e^{s\mathcal{L}}f_{0}\right)  \right]  ds.
\end{align*}
From Lemma \ref{x-energy} and Lemma \ref{regularization}, it follows
\begin{align*}
\left\Vert \nabla_{x}h^{(1)}\right\Vert _{L^{2}(\mu)}  &  \lesssim\int_{0}%
^{t}t^{-1}\left[  (t-s)^{-1/2}+s^{-1/2}\right]  ds\Vert f_{0}\Vert_{L^{2}%
(\mu)}\\
&  \lesssim t^{-1/2}\Vert f_{0}\Vert_{L^{2}(\mu)}\,.
\end{align*}
Likewise, note that
\[
h^{(2)}=\int_{0}^{t}\int_{0}^{s_{1}}e^{(t-s_{1})\mathcal{L}}Ke^{(s_{1}%
-s_{2})\mathcal{L}}Ke^{s_{2}\mathcal{L}}f_{0}ds_{2}ds_{1}\,,
\]
and%
\[
\nabla_{x}h^{(2)}=\int_{0}^{t}\int_{0}^{s_{1}}\frac{(s_{1}-s_{2})+s_{2}}%
{s_{1}}\nabla_{x}e^{(t-s_{1})\mathcal{L}}Ke^{(s_{1}-s_{2})\mathcal{L}%
}Ke^{s_{2}\mathcal{L}}f_{0}ds_{2}ds_{1}\,,
\]
hence we have
\begin{align*}
\left\Vert \nabla_{x}h^{(2)}\right\Vert _{L^{2}(\mu)}  &  \lesssim\int_{0}%
^{t}\int_{0}^{s_{1}}s_{1}^{-1}\left[  (s_{1}-s_{2})^{-1/2}+s_{2}%
^{-1/2}\right]  ds_{2}ds_{1}\Vert f_{0}\Vert_{L^{2}(\mu)}\\
&  \lesssim t^{1/2}\Vert f_{0}\Vert_{L^{2}(\mu)}\,.
\end{align*}
The estimate of $h^{(3)}$ is analogous and hence we omit the details. \newline
Step 2: Second $x$-derivatives of $h^{(j)}$, $0\leq j\leq3$ in small time. We
want to show that for any $0<t\leq1$,
\[
\Vert\nabla_{x}^{2}h^{(j)}\Vert_{L^{2}(\mu)}\leq C_{j}t^{-3+j}\Vert f_{0}%
\Vert_{L^{2}(\mu)}\,.
\]
We only give the estimates for $h^{(0)}$ and $h^{(1)},$ and the others are
similar. For any $0<t_{0}\leq1$ and $t_{0}/2<t\leq t_{0}$, we have
\[
\nabla_{x}h^{(0)}(t,x,v)=e^{(t-t_{0}/2)\mathcal{L}}\left[  \nabla_{x}%
h^{(0)}(t_{0}/2,x,v)\right]  .
\]
By Lemma \ref{regularization},
\begin{equation}
\left\Vert \nabla_{x}^{2}h^{(0)}\right\Vert _{L^{2}(\mu)}\left(  t\right)
\lesssim(t-t_{0}/2)^{-3/2}(t_{0}/2)^{-3/2}\Vert f_{0}\Vert_{L^{2}(\mu)}\,.
\label{f0-xx}%
\end{equation}
Taking $t=t_{0}$ yields
\begin{equation}
\left\Vert \nabla_{x}^{2}h^{(0)}\right\Vert _{L^{2}(\mu)}(t_{0})\lesssim
t_{0}^{-3}\Vert f_{0}\Vert_{L^{2}(\mu)}\,. \label{f0-xx}%
\end{equation}
Since $t_{0}\in(0,1]$ is arbitrary, this completes the estimate for $h^{(0)}$.
For $0<t_{1}\leq1$ and $t_{1}/2<t\leq t_{1}$, we have
\[
\nabla_{x}h^{(1)}(t,x,v)=e^{(t-t_{1}/2)\mathcal{L}}\left[  \nabla_{x}%
h^{(1)}(t_{1}/2,x,v)\right]  +\int_{t_{1}/2}^{t}e^{(t-s)\mathcal{L}}\left[
K\nabla_{x}h^{(0)}(s,x,v)\right]  ds\,.
\]
Using Lemma \ref{regularization} and (\ref{f0-xx}) gives
\[
\left\Vert \nabla_{x}^{2}h^{(1)}\right\Vert _{L^{2}(\mu)}(t)\lesssim
(t-t_{1}/2)^{-3/2}(t_{1}/2)^{-1/2}\Vert f_{0}\Vert_{L^{2}(\mu)}+\int_{t_{1}%
/2}^{t}s^{-3}\Vert f_{0}\Vert_{L^{2}(\mu)}ds\,.
\]
Taking $t=t_{1}$, we get
\[
\left\Vert \nabla_{x}^{2}h^{(1)}\right\Vert _{L^{2}(\mu)}(t_{1})\lesssim
t_{1}^{-2}\Vert f_{0}\Vert_{L^{2}(\mu)}\,.
\]
Since $t_{1}\in(0,1]$ is arbitrary, this completes the estimate for $h^{(1)}$.

Next, we shall prove the large time behavior for $\gamma\geq1$; the estimate
for the case $0<\gamma<1$ can be obtained by employing the same argument.
\newline Step 3: First $x$-derivative of $h^{(j)}$, $0\leq j\leq3,$ in large
time for $\gamma\geq1$. We want to show that for $t>1$,
\[
\Vert\nabla_{x}h^{(j)}\Vert_{L^{2}(\mu)}\lesssim t^{j}e^{-Ct}\Vert f_{0}%
\Vert_{L^{2}(\mu)}\,,\ \ 0\leq j\leq3.
\]
In view of Lemma \ref{x-energy}, we have
\begin{equation}
\left\Vert \nabla_{x}h^{(0)}\right\Vert _{L^{2}(\mu)}(t)\leq e^{-C(t-1)}%
\left\Vert \nabla_{x}h^{(0)}\right\Vert _{L^{2}(\mu)}(1)\lesssim e^{-Ct}\Vert
f_{0}\Vert_{L^{2}(\mu)}\,,\ \ t>1. \label{h0}%
\end{equation}
For $h^{(1)}$, we have
\[
h^{(1)}(t,x,v)=e^{(t-1)\mathcal{L}}h^{(1)}(1,x,v)+\int_{1}^{t}%
e^{(t-s)\mathcal{L}}\left[  Kh^{(0)}(s,x,v)\right]  ds\,,\ \ t>1.
\]
Using Lemma \ref{x-energy} and (\ref{h0}) gives
\begin{align*}
\left\Vert \nabla_{x}h^{(1)}\right\Vert _{L^{2}(\mu)}(t)  &  \leq
e^{-C(t-1)}\left\Vert \nabla_{x}h^{(1)}\right\Vert _{L^{2}(\mu)}\left(
1\right)  +\int_{1}^{t}e^{-C(t-s)}\left\Vert \nabla_{x}h^{(0)}\right\Vert
_{L^{2}(\mu)}(s)ds\\
&  \lesssim te^{-Ct}\Vert f_{0}\Vert_{L^{2}(\mu)}\,,\ \ \ \ \ t>1,
\end{align*}
and similarly for $\nabla_{x}h^{(2)}$ and $\nabla_{x}h^{(3)}.$\newline Step 4:
Second $x$-derivatives of $h^{(j)}$, $0\leq j\leq3$ in large time for
$\gamma\geq1$. We demonstrate that for $t>1$,
\[
\Vert\nabla_{x}^{2}h^{(j)}\Vert_{L^{2}(\mu)}\lesssim t^{j}e^{-Ct}\Vert
f_{0}\Vert_{L^{2}(\mu)}\,,
\]
whose proof is similar to Step 3.
\end{proof}


\section{In the time-like region}

\label{large-time}

In this section, we will see the large time behavior of solutions to equation
(\ref{in.1.c}). In the sequel, we separate our discussion for the case
$\gamma\geq1$ and the case $0<\gamma<1$.

\subsection{The case $\gamma\geq1$}

According to Lemma \ref{second-der}, together with the Sobolev inequality
\cite[Theorem 5.8]{[Adams]} :%
\[
\left\Vert f\right\Vert _{L_{x}^{\infty}L_{v}^{2}}\lesssim\left\Vert
\nabla_{x}^{2}f\right\Vert _{L^{2}}^{3/4}\left\Vert f\right\Vert _{L^{2}%
}^{1/4},
\]
we immediate obtain the behavior of the wave part as follows.

\begin{proposition}
\label{pointwise-wave part >=1}Assume that $\gamma\geq1$. Then for $0\leq
j\leq3$ and $t>0$, there exists $C>0$ such that
\[
|h^{(j)}|_{L_{v}^{2}}\lesssim e^{-Ct}t^{j-\frac{9}{4}}\Vert f_{0}\Vert_{L^{2}%
}\,.
\]

\end{proposition}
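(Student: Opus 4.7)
The proposition is a direct consequence of combining the three earlier ingredients: the $L^2$ decay of Lemma \ref{initial-sing}(i), the second $x$-derivative estimate of Lemma \ref{second-der}, and the Sobolev interpolation $\|f\|_{L_x^\infty L_v^2}\lesssim \|\nabla_x^2 f\|_{L^2}^{3/4}\|f\|_{L^2}^{1/4}$. The plan is simply to split into a short-time and long-time regime and interpolate in each.

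First, for $0<t\le 1$, I would take $\mu\equiv 1$ and apply Lemma \ref{initial-sing}(i) to get $\|h^{(j)}\|_{L^2}\lesssim t^{j}\|f_0\|_{L^2}$, and Lemma \ref{second-der}(i) to get $\|\nabla_x^{2}h^{(j)}\|_{L^2}\lesssim t^{j-3}\|f_0\|_{L^2}$. Inserting these two bounds in the Sobolev inequality yields
\[
|h^{(j)}(t,x)|_{L_v^2}\le \|h^{(j)}(t)\|_{L_x^\infty L_v^2}\lesssim \bigl(t^{j-3}\bigr)^{3/4}\bigl(t^{j}\bigr)^{1/4}\|f_0\|_{L^2}=t^{j-9/4}\|f_0\|_{L^2},
\]
using the arithmetic $\tfrac{3}{4}(j-3)+\tfrac{1}{4}j=j-\tfrac{9}{4}$. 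Since $e^{-Ct}$ is bounded below on $[0,1]$, this is the claimed estimate on the short-time interval.

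For $t>1$, I would use Lemma \ref{initial-sing}(i) and Lemma \ref{second-der}(ii) for $\gamma\ge 1$; both give the same shape $\|h^{(j)}\|_{L^2}+\|\nabla_x^{2}h^{(j)}\|_{L^2}\lesssim t^{j}e^{-Ct}\|f_0\|_{L^2}$. Plugging into the Sobolev inequality gives $|h^{(j)}(t,x)|_{L_v^2}\lesssim t^{j}e^{-Ct}\|f_0\|_{L^2}$. Since $t^{9/4}e^{-\varepsilon t}$ is bounded for $t\ge 1$, we can trade a bit of the exponential decay for the polynomial factor $t^{j-9/4}$ by lowering $C$ slightly: $t^{j}e^{-Ct}=t^{j-9/4}\cdot t^{9/4}e^{-Ct}\lesssim t^{j-9/4}e^{-C't}\|f_0\|_{L^2}$ for any $0<C'<C$. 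Concatenating the two regimes (and relabelling the constant $C$) delivers the stated bound for all $t>0$.

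There is no real obstacle here; the proposition is essentially a bookkeeping step. The only points requiring care are (i) verifying the exponent arithmetic in the Sobolev interpolation, and (ii) matching the short-time power $t^{j-9/4}$ with the long-time exponential $e^{-Ct}$ by absorbing the polynomial loss into a slightly smaller exponential rate, which is legitimate precisely because the long-time regime is bounded away from $t=0$.
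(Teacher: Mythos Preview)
Your proof is correct and follows the same approach as the paper: combine the $L^2$ bound of Lemma~\ref{initial-sing}(i) with the second $x$-derivative bound of Lemma~\ref{second-der} via the Sobolev interpolation $\|f\|_{L_x^\infty L_v^2}\lesssim\|\nabla_x^2 f\|_{L^2}^{3/4}\|f\|_{L^2}^{1/4}$, splitting into $0<t\le 1$ and $t>1$. The paper states this in one sentence without spelling out the arithmetic or the absorption of $t^{9/4}$ into the exponential, but your version is exactly the intended argument.
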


Based on the wave-remainder decomposition, it remains to study the large time
behavior of the remainder part. By the Fourier transform with respect to the
$x$ variable, the solution of the Fokker-Planck equation (\ref{in.1.c}) can be
represented as
\begin{equation}
\displaystyle f(t,x,v)=\int_{{\mathbb{R}}^{3}}e^{i\eta\cdot x+(-iv\cdot
\eta+L)t}\widehat{f}_{0}(\eta,v)d\eta\,.
\end{equation}
We can decompose the solution $f$ into the long wave part $f_{L}$ and the
short wave part $f_{S}$ given respectively by
\begin{equation}%
\begin{array}
[c]{l}%
\label{bot.2.e}\displaystyle f_{L}=\int_{|\eta|<\de}e^{i\eta\cdot
x+(-iv\cdot\eta+L)t}\widehat{f}_{0}(\eta,v)d\eta\,,\\
\\
\displaystyle f_{S}=\int_{|\eta|>\de}e^{i\eta\cdot x+(-iv\cdot\eta
+L)t}\widehat{f}_{0}(\eta,v)d\eta\,.
\end{array}
\end{equation}
The following short wave analysis relies on spectral analysis (Lemma
\ref{pr12}).

\begin{proposition}
[Short wave $f_{S}$]Assume that $\gamma\geq1$\ and $f_{0}\in L^{2}$. Then
\begin{equation}
\Vert f_{S}\Vert_{L^{2}}\lesssim e^{-a(\tau)t}\Vert f_{0}\Vert_{L^{2}}\,.
\label{bot.2.f}%
\end{equation}

\end{proposition}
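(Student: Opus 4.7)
The plan is to reduce the claim to a pointwise-in-frequency semigroup bound supplied by the spectral analysis of Lemma \ref{pr12}, and then to recombine via Plancherel's theorem. I would first use Plancherel in the $x$-variable: since $f_S$ is defined as the inverse Fourier transform restricted to $|\eta|>\delta$ of $e^{(-iv\cdot\eta+L)t}\widehat{f}_0(\eta,v)$, we have
\[
\|f_S\|_{L^2}^2 \;=\; \int_{\mathbb{R}^3}\int_{|\eta|>\delta}\big|e^{(-iv\cdot\eta+L)t}\widehat{f}_0(\eta,v)\big|^2\,d\eta\, dv,
\]
and Tonelli's theorem allows us to swap the order of integration to obtain
\[
\|f_S\|_{L^2}^2 \;=\; \int_{|\eta|>\delta}\big|e^{(-iv\cdot\eta+L)t}\widehat{f}_0(\eta,\cdot)\big|_{L_v^2}^2\,d\eta.
\]

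Next I would invoke Lemma \ref{pr12}(i) and (iii): on the range $|\eta|>\delta$ the eigenvalue branch $\lambda(\eta)$ is absent from the spectral decomposition \eqref{pre.ab.g}, so $e^{(-iv\cdot\eta+L)t}$ acts on all of $L_v^2$ exactly as $e^{(-iv\cdot\eta+L)t}\Pi_\eta^{D\perp}$, and the semigroup bound
\[
\big|e^{(-iv\cdot\eta+L)t}\widehat{f}_0(\eta,\cdot)\big|_{L_v^2}\;\lesssim\; e^{-a(\tau)t}\,|\widehat{f}_0(\eta,\cdot)|_{L_v^2}
\]
holds uniformly in $\eta$ with $|\eta|>\delta$. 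Plugging this back in and applying Plancherel a second time,
\[
\|f_S\|_{L^2}^2 \;\lesssim\; e^{-2a(\tau)t}\int_{|\eta|>\delta}|\widehat{f}_0(\eta,\cdot)|_{L_v^2}^2\,d\eta \;\leq\; e^{-2a(\tau)t}\|f_0\|_{L^2}^2,
\]
which is the desired estimate after taking the square root.

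The only real work is verifying that the frequency-wise semigroup bound is indeed uniform on $\{|\eta|>\delta\}$, which is exactly what \eqref{pre.ab.e} in Lemma \ref{pr12} provides: the spectrum lies strictly left of $-\tau$ and the resolvent estimates used in \cite{[YangYu]} yield the $L_v^2$ operator-norm bound $e^{-a(\tau)t}$ with $a(\tau)>0$ independent of $\eta$. Since the spectral picture has already been established in the preceding lemma, the proof of this proposition is essentially a short Plancherel argument with no substantial additional obstacle; the writeup amounts to applying $\|\cdot\|_{L^2_xL^2_v}$ as an $L^2_\eta L^2_v$ norm and inserting the uniform semigroup decay under the integral sign.
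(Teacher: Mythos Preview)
Your proposal is correct and matches the paper's intended argument: the paper does not write out a proof for this proposition but simply states that it ``relies on spectral analysis (Lemma \ref{pr12}),'' and your Plancherel-in-$x$ reduction together with the uniform $L_v^2$ semigroup bound from Lemma \ref{pr12}(iii) on $\{|\eta|>\delta\}$ is exactly the computation this reference is meant to encode.
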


In order to study the long wave part $f_{L}$ for $\gamma\geq1$, we need to
further decompose the long wave part as the fluid part and the nonfluid part,
i.e., $f_{L}=f_{L;0}+f_{L;\perp}$, where
\begin{equation}%
\begin{array}
[c]{l}%
\label{bot.2.g}\displaystyle f_{L;0}=\int_{|\eta|<\de}e^{\lambda(\eta
)t}e^{i\eta\cdot x}\big<e_{D}(-\eta),\hat{f_{0}}\big>_{v}e_{D}(\eta)d\eta\,,\\
\\
\displaystyle f_{L;\perp}=\int_{|\eta|<\de}e^{i\eta\cdot x}e^{(-iv\cdot
\eta+L)t}\Pi_{\eta}^{D\perp}\hat{f}_{0}d\eta\,.
\end{array}
\end{equation}

Using Lemma \ref{pr12}, we obtain the exponential decay of the nonfluid long
wave part.

\begin{proposition}
[Non fluid long wave $f_{L;\perp}$]Assume that $\gamma\geq1$ and $f_{0}\in
L^{2}$. Then for $s>0$,
\begin{equation}
\Vert f_{L;\perp}\Vert_{H_{x}^{s}L_{v}^{2}}\lesssim e^{-a(\tau)t}\Vert
f_{0}\Vert_{L^{2}}\,. \label{bot.2.h}%
\end{equation}

\end{proposition}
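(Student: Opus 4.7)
The proposition is a direct Fourier-side consequence of the spectral decomposition provided by Lemma \ref{pr12}. My plan is to transfer the estimate to the Fourier side via Plancherel, apply the uniform operator-norm bound on the complementary semigroup, and exploit the compact frequency support $\{|\eta|<\delta\}$ to absorb the polynomial factor coming from the Sobolev weight.

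First, I would record that by Plancherel in the $x$-variable (and using the fact that $f_{L;\perp}$ is supported on frequencies $|\eta|<\delta$),
\[
\|f_{L;\perp}\|_{H^s_x L^2_v}^2 \;\simeq\; \int_{|\eta|<\delta}(1+|\eta|^2)^s \bigl|e^{(-iv\cdot\eta+L)t}\Pi_\eta^{D\perp}\widehat{f}_0(\eta,\cdot)\bigr|_{L^2_v}^2\,d\eta.
\]
Since $|\eta|<\delta$ on the domain of integration, the Sobolev weight satisfies $(1+|\eta|^2)^s\leq(1+\delta^2)^s=:C_{s,\delta}$, which takes care of any fractional/integer $s>0$ uniformly.

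Next, I would invoke part (iii) of Lemma \ref{pr12}, which gives the operator bound $|e^{(-iv\cdot\eta+L)t}\Pi_\eta^{D\perp}|_{L^2_v\to L^2_v}\lesssim e^{-a(\tau)t}$ uniformly in $\eta$. Combined with the fact that the spectral projection $\Pi_\eta^{D\perp}$ is bounded on $L^2_v$ by a constant independent of $\eta$ (uniformly for $|\eta|<\delta$, by the smoothness of $e_D(\eta)$ near $\eta=0$), we obtain
\[
\bigl|e^{(-iv\cdot\eta+L)t}\Pi_\eta^{D\perp}\widehat{f}_0(\eta,\cdot)\bigr|_{L^2_v}^2 \;\lesssim\; e^{-2a(\tau)t}\,|\widehat{f}_0(\eta,\cdot)|_{L^2_v}^2.
\]

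Plugging this pointwise (in $\eta$) bound into the Plancherel identity and using Plancherel once more in the reverse direction,
\[
\|f_{L;\perp}\|_{H^s_x L^2_v}^2 \;\lesssim\; C_{s,\delta}\,e^{-2a(\tau)t}\int_{\mathbb{R}^3}|\widehat{f}_0(\eta,\cdot)|_{L^2_v}^2\,d\eta \;=\; C_{s,\delta}\,e^{-2a(\tau)t}\,\|f_0\|_{L^2}^2,
\]
which yields the claimed estimate after taking square roots.

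There is essentially no obstacle here since all the hard work is hidden in Lemma \ref{pr12}: the proof is a three-line book-keeping of Plancherel plus the uniform spectral-gap bound on the complementary projector, with the Sobolev factor trivially controlled by $\delta^s$ on the long-wave frequency set. The only minor point to check is that $\Pi_\eta^{D\perp}$ is uniformly bounded as $\eta\to0$, which follows from the smoothness (indeed analyticity for $\gamma\geq 3/2$) of $e_D(\eta)$ together with the normalization $\langle e_D(-\eta),e_D(\eta)\rangle_v=1$ established in Lemma \ref{pr12}.
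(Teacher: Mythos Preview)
Your proposal is correct and is precisely the argument the paper has in mind: the paper simply states that the estimate follows ``Using Lemma~\ref{pr12}'' without spelling out the Plancherel/frequency-localization details you supply. One small simplification: Lemma~\ref{pr12}(iii) already gives the bound $|e^{(-i\eta\cdot v+L)t}\Pi_\eta^{D\perp}|_{L^2_v}\lesssim e^{-a(\tau)t}$ for the \emph{composed} operator, so you need not separately verify the uniform boundedness of $\Pi_\eta^{D\perp}$.
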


For the fluid part, we have the following structure:

\begin{proposition}
[Fluid long wave $f_{L;0}$]\label{fluid} For $\ga\geq3/2$ and any given $M>1,$
there exists $C>0$ such that for $|x|\leq Mt$,
\begin{equation}
\left\vert f_{L;0}(t,x,v)\right\vert _{L_{v}^{2}}\leq C\left[  (1+t)^{-3/2}%
e^{-\frac{|x|^{2}}{C(t+1)}}+e^{-t/C}\right]  \Vert f_{0}\Vert_{L_{x}^{1}%
L_{v}^{2}}\,.
\end{equation}
On the other hand, for $1\leq\ga<3/2$ and any given positive integer $N$,
there exists a positive constant $C$ depending on $N$ such that
\begin{equation}
\left\vert f_{L;0}(t,x,v)\right\vert _{L_{v}^{2}}\leq C\left[  (1+t)^{-3/2}%
\Big(1+\frac{|x|^{2}}{1+t}\Big)^{-N}+e^{-t/C}\right]  \Vert f_{0}\Vert
_{L_{x}^{1}L_{v}^{2}}\,.
\end{equation}

\end{proposition}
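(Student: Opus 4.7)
The plan is to analyze the oscillatory integral
\[
f_{L;0}(t,x,v) = \int_{|\eta|<\delta} e^{i\eta\cdot x + \lambda(\eta)t} \big\langle e_D(-\eta), \widehat{f}_0(\eta,\cdot)\big\rangle_v e_D(\eta,v)\, d\eta
\]
by exploiting the information from Lemma \ref{pr12}: $\lambda(\eta) = -a_\gamma|\eta|^2 + O(|\eta|^4)$ with $a_\gamma>0$, and $e_D(\eta) = E_D + O(|\eta|)$ smooth (analytic if $\gamma\geq 3/2$) on a fixed neighborhood of $\eta=0$. Using $|e_D(\eta)|_{L^2_v}\lesssim 1$ and $|\widehat f_0(\eta,\cdot)|_{L^2_v}\leq\|f_0\|_{L^1_xL^2_v}$, and replacing the sharp cutoff $\mathbf 1_{|\eta|<\delta}$ by a smooth cutoff $\chi_\delta$ at the cost of an $O(e^{-\bar a\delta^2 t/2})$ error (absorbed into $e^{-t/C}$), the task reduces to controlling a scalar Fourier integral of the form $\int_{\mathbb R^3} e^{-\bar a|\eta|^2 t + i\eta\cdot x}A(\eta)\chi_\delta(\eta)\,d\eta$ with $A$ smooth (or analytic).

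For $\gamma\geq 3/2$ I would carry out a contour deformation. On the sub-region $|x|\leq a_\gamma\delta t$, shift the $\eta$-contour to $\{\eta + i\omega|x|/(2a_\gamma t):\eta\in\mathbb R^3\}$ with $\omega=x/|x|$; since $|x|/(2a_\gamma t)\leq\delta/2$, the shifted contour lies in the analyticity domain (by choosing $\delta$ small enough for the Kato--Rellich disc of Lemma \ref{pr12} to contain it). The real part of the exponent becomes $-a_\gamma|\eta|^2 t - |x|^2/(4a_\gamma t)$, and the remaining Gaussian integration in $\eta$ contributes $t^{-3/2}$, yielding $t^{-3/2}e^{-|x|^2/(4a_\gamma t)}$. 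On the complementary sub-region $a_\gamma\delta t\leq|x|\leq Mt$, I would perform instead the fixed shift $\eta\mapsto\eta + i(\delta/2)\omega$: balancing $(\delta/2)|x|$ against the now-bounded quantity $a_\gamma(\delta/2)^2 t$ produces exponential decay $e^{-(\delta/4)|x|}$, which dominates $e^{-|x|^2/(Ct)}$ once $C=C(M,\delta,a_\gamma)$ is taken large, because here $|x|^2/t\leq M|x|$.

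For $1\leq\gamma<3/2$ analyticity is unavailable, so I would replace contour shifting by integration by parts in $\eta$. Writing $x^\beta e^{i\eta\cdot x}=(-i\partial_\eta)^\beta e^{i\eta\cdot x}$ and integrating $2N$ times, the derivatives fall on $e^{\lambda(\eta)t}\chi_\delta(\eta) A(\eta)$. Since $\nabla\lambda(0)=0$ and $\lambda$ has a double zero at the origin, each $\partial_\eta$ landing on $e^{\lambda(\eta)t}$ produces a factor of at most $t|\eta|$ near $\eta=0$; the worst term after $2N$ differentiations is bounded by $t^{2N}|\eta|^{2N}e^{-\bar a|\eta|^2 t}$, whose $\eta$-integral is $O(t^{N-3/2})$. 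Hence $|x|^{2N}|f_{L;0}(t,x)|_{L^2_v}\lesssim t^{N-3/2}\|f_0\|_{L^1_xL^2_v}$; combining this with the trivial estimate $|f_{L;0}|_{L^2_v}\lesssim t^{-3/2}\|f_0\|_{L^1_xL^2_v}$ and taking the minimum yields $t^{-3/2}(1+|x|^2/t)^{-N}$. Derivatives landing on $\chi_\delta$ are supported on $|\eta|\sim\delta$ and contribute $O(e^{-\bar a\delta^2 t/2})$, again absorbed into $e^{-t/C}$.

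The principal obstacle lies in the analytic case: one must verify that the analyticity radius supplied by Lemma \ref{pr12} truly accommodates the shift $\delta/2$ uniformly, and quantitatively match the near-cone (Gaussian) and outer (exponential) regimes so that a single constant $C=C(M,\delta,a_\gamma)$ makes $e^{-(\delta/4)|x|}$ dominate $e^{-|x|^2/(Ct)}$ throughout $a_\gamma\delta t\leq|x|\leq Mt$. The residual corrections from $\lambda(\eta)+a_\gamma|\eta|^2=O(|\eta|^4)$ and $e_D(\eta)-E_D=O(|\eta|)$ are of lower order: the former may be absorbed by slightly enlarging the constant inside the Gaussian weight, while the latter only improves $\eta$-integrability, so the same arguments apply verbatim to those pieces.
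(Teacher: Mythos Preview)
Your approach—contour deformation for $\gamma\geq 3/2$ and repeated integration by parts in $\eta$ for $1\leq\gamma<3/2$—is exactly what the paper does, except that the paper packages these two arguments into citations of Lemma~7.11 of \cite{[LiuYu1]} and Lemma~2.2 of \cite{[LiuWang]} respectively, and only verifies their hypotheses (analyticity and the Gaussian bound in the first case; the derivative bound $|D_\eta^\beta g|_{L_v^2}\lesssim (1+t)^{|\beta|/2}e^{-b|\eta|^2 t}$ in the second).

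One technical slip to fix: the smooth cutoff $\chi_\delta$ you introduce is exactly what is needed for the integration-by-parts argument, but it obstructs the contour shift in the analytic case, since $\chi_\delta$ is not analytic in $\eta$. For $\gamma\geq 3/2$ you should instead keep the sharp cutoff $\mathbf 1_{|\eta|<\delta}$ and shift the contour in one-dimensional slices along $\omega=x/|x|$; the boundary contributions at $|\eta|=\delta$ are $O(e^{-c\delta^2 t})$ and go into the $e^{-t/C}$ term. Also, in the smooth case, note that controlling $\partial_\eta^\alpha\widehat f_0(\eta,\cdot)$ uniformly in $\eta$ uses the compact $x$-support of $f_0$—the paper invokes this explicitly, and without it the $A(\eta)$ factor need not have bounded $\eta$-derivatives.
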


\begin{proof}
Before the proof of this proposition, we need the following two lemmas:

\begin{lemma}
[{Lemma 7.11, \cite{[LiuYu1]}}]\label{heatkernel_ana} Suppose that
$g(t,\eta,v)$ is analytic in $\eta$ for $|\eta|<\delta\ll1$ and satisfies
\[
|g(t,\eta,v)|_{L_{v}^{2}}\lesssim e^{-A|\eta|^{2}t+O(|\eta|^{4})t}\,,
\]
for some constant $A>0$. Then in the region of $|x|<(\mathfrak{M}+1)t$, where
$\mathfrak{M}$ is any given positive constant, there exists a constant $C>0$
such that the following inequality holds:
\[
\displaystyle\left\vert \int_{|\eta|<\delta}e^{ix\cdot\eta}g(t,\eta
,v)d\eta\right\vert _{L_{v}^{2}}\leq C\left[  (1+t)^{-\frac{3}{2}}%
e^{-\frac{|x|^{2}}{Ct}}+e^{-t/C}\right]  .
\]

\end{lemma}

\begin{lemma}
[{Lemma 2.2, \cite{[LiuWang]}}]\label{heatkernel_smo} Let $x,\eta
,v\in\mathbb{R}^{3}$. Suppose $g(t,\eta,v)$ is smooth and has compact support
in the variable $\eta$, and there exists a constant $b>0$ such that
$g(t,\eta,v)$ satisfies
\[
\left\vert D_{\eta}^{\beta}(g(t,\eta,v))\right\vert _{L_{v}^{2}}\leq C_{\beta
}(1+t)^{|\beta|/2}e^{-b|\eta|^{2}t},
\]
for any multi-indexes $\beta$ with $|\beta|\leq2N$. Then there exists positive
constants $C_{N}$ such that
\[
\left\vert \int_{{\mathbb{R}}^{3}}e^{ix\cdot\eta}g(t,\eta,v)d\eta\right\vert
_{L_{v}^{2}}\leq C_{N}\left[  (1+t)^{-3/2}B_{N}(|x|,t)\right]  ,
\]
where $N$ is any fixed integer and
\[
B_{N}(|x|,t)=\left(  1+\frac{|x|^{2}}{1+t}\right)  ^{-N}.
\]

\end{lemma}

We now return to the proof of this proposition. Notice that
\[
f_{L;0}(t,x,v)=\int_{|\eta|<\de}e^{i\eta\cdot x}e^{\lambda(\eta)t}%
\big<e_{D}(-\eta),\hat{f_{0}}\big>_{v}e_{D}(\eta)d\eta\,.
\]
Let
\[
g(t,\eta,v)=e^{\lambda(\eta)t}\big<e_{D}(-\eta),\hat{f_{0}}\big>_{v}e_{D}%
(\eta)\cdot\mathbf{1}_{\left\{  \left\vert \eta\right\vert <\delta\right\}
},
\]
where $\mathbf{1}_{D}$ is the characteristic function of the domain $D$. When
$\gamma\geq3/2$, the eigenvalue $\lambda(\eta)$ and eigenvector $e_{D}(\eta)$
are analytic in $\eta$. Owing to the asymptotic expansion of $\lambda(\eta)$
in \eqref{pre.ab.h},\ we have
\[
|g(t,\eta,v)|_{L_{v}^{2}}\leq e^{-a_{\ga}|\eta|^{2}t+O(|\eta|^{4})t}\left\Vert
f_{0}\right\Vert _{L_{x}^{1}L_{v}^{2}}.
\]
From Lemma \ref{heatkernel_ana} it follows
\[
\left\vert f_{L;0}(t,x,v)\right\vert _{L_{v}^{2}}\lesssim\left[
(1+t)^{-3/2}e^{-\frac{|x|^{2}}{C(t+1)}}+e^{-t/C}\right]  \left\Vert
f_{0}\right\Vert _{L_{x}^{1}L_{v}^{2}}\,.
\]
As for $1\leq\gamma<3/2$, the eigenvalue and eigenvector are only smooth in
$\eta$. In this case, one can see that
\[
\left\vert D_{\eta}^{\beta}g(t,\eta,v)\right\vert _{L_{v}^{2}}\lesssim\left(
1+t^{\left\vert \beta\right\vert /2}\right)  \left(  1+\left\vert
\eta\right\vert ^{2}t\right)  ^{\left\vert \beta\right\vert /2}e^{-a_{\gamma
}\left\vert \eta\right\vert ^{2}t/2}\left\Vert f_{0}\right\Vert _{L_{x}%
^{1}L_{v}^{2}},
\]
since $f_{0}$ has compact support in the $x$ variable. Note that the
polynomial growth $\left(  1+\left\vert \eta\right\vert ^{2}t\right)
^{\left\vert \beta\right\vert /2}$ can be absorbed by the exponential decay,
hence we can conclude that%
\[
\left\vert f_{L;0}(t,x,v)\right\vert _{L_{v}^{2}}\lesssim\left[
(1+t)^{-3/2}\Big(1+\frac{|x|^{2}}{1+t}\Big)^{-N}\right]  \Vert f_{0}%
\Vert_{L_{x}^{1}L_{v}^{2}}\,,
\]
in accordance with Lemma \ref{heatkernel_smo}.
\end{proof}

We define the fluid part as $f_{F}=f_{L,0}$ and the nonfluid part as $f_{\ast
}=f-f_{F}=f_{L;\perp}+f_{S}$. By the fluid-nonfluid decomposition and the
wave-remainder decomposition, we have
\[
f=f_{F}+f_{\ast}=W^{(3)}+\mathcal{R}^{(3)}\,.
\]
We can define the tail part as $f_{R}=\mathcal{R}^{(3)}-f_{F}$ and so $f$ can
be written as $f=W^{(3)}+f_{F}+f_{R}.$

It follows from Proposition \ref{pointwise-wave part >=1} and \ref{fluid} that
the estimates of wave part $W^{(3)}$ and the fluid part $f_{F}$ inside the
time-like region is completed. Hence, it remains to study the tail part
$f_{R}$. From Lemma \ref{second-der} and the fact that {$\mathbb{G}^{t},K$ are
bounded operators on $L^{2},$ }$\mathcal{R}^{(3)}$ has the following estimate
\begin{equation}
\Vert\mathcal{R}^{(3)}(t)\Vert_{H_{x}^{2}L_{v}^{2}}\leq\int_{0}^{t}\left\Vert
h^{(3)}(s)\right\Vert _{H_{x}^{2}L_{v}^{2}}ds\lesssim\Vert f_{0}\Vert_{L^{2}%
}\,. \label{R3}%
\end{equation}
In view of (\ref{bot.2.f}), (\ref{bot.2.h}), (\ref{R3}) and Lemma
\ref{initial-sing}, there exists $C>0$ such that
\[
\Vert f_{R}\Vert_{L^{2}}=\Vert f_{\ast}-W^{(3)}\Vert_{L^{2}}\lesssim
e^{-Ct}\Vert f_{0}\Vert_{L^{2}}\,,
\]
and%
\[
\Vert f_{R}\Vert_{H_{x}^{2}L_{v}^{2}}=\Vert\mathcal{R}^{(3)}-f_{F}\Vert
_{H_{x}^{2}L_{v}^{2}}\lesssim\Vert f_{0}\Vert_{L^{2}}\,.
\]
The Sobolev inequality \cite[Theorem 5.8]{[Adams]} implies
\begin{equation}
\left\vert f_{R}\right\vert _{L_{v}^{2}}\leq\Vert f_{R}\Vert_{L_{x}^{\infty
}L_{v}^{2}}\lesssim\Vert f_{R}\Vert_{H_{x}^{2}L_{v}^{2}}^{3/4}\Vert f_{R}%
\Vert_{L^{2}}^{1/4}\lesssim e^{-Ct}\Vert f_{0}\Vert_{L^{2}}\,, \label{fR}%
\end{equation}
for some constant $C>0$. In conclusion, we have that for the time-like region,
if $\gamma\geq3/2$, there exists a constant $C>0$ such that
\begin{equation}
\left\vert \mathcal{R}^{(3)}\right\vert _{L_{v}^{2}}\lesssim\left[
(1+t)^{-3/2}e^{-C\frac{|x|^{2}}{t+1}}+e^{-Ct}\right]  \Vert f_{0}\Vert
_{L_{x}^{\infty}L_{v}^{2}}\,; \label{R-larger than 3/2}%
\end{equation}
and if $1\leq\gamma<3/2$, any given $N>0$, there exists a constant $C>0$ such
that
\begin{equation}
\left\vert \mathcal{R}^{(3)}\right\vert _{L_{v}^{2}}\lesssim\left[
(1+t)^{-3/2}\Big(1+\frac{|x|^{2}}{1+t}\Big)^{-N}+e^{-Ct}\right]  \Vert
f_{0}\Vert_{L_{x}^{\infty}L_{v}^{2}}\,. \label{R-less than 3/2}%
\end{equation}

Combining Proposition \ref{pointwise-wave part >=1}, (\ref{fR}),
(\ref{R-larger than 3/2}) and (\ref{R-less than 3/2}), we obtain the pointwise
estimate for the solution in the time-like region.

\begin{theorem}
[Time-like region for $\gamma\geq1$]%
\label{time-like region for gamma larger or equal than 1}Let $\gamma\geq1$ and
let $f$ be the solution to equation (\ref{in.1.c}). Assume that the initial
condition $f_{0}$ has compact support in the $x$ variable and is bounded in
$L_{v}^{2}$. Then for any given $M>1$ and $|x|\leq Mt$,

\noindent\textrm{(i)} As $\gamma\geq3/2$, there exists a positive constant $C$
such that
\begin{equation}
\left\vert f\right\vert _{L_{v}^{2}}\lesssim\left[  (1+t)^{-3/2}%
e^{-C\frac{|x|^{2}}{t+1}}+(1+t^{-9/4})e^{-Ct}\right]  \Vert f_{0}\Vert
_{L_{x}^{\infty}L_{v}^{2}}\,;
\end{equation}
\newline\textrm{(ii) }As $1\leq\gamma<3/2$, any given $N>0$, there exists a
constant $C>0$ such that
\begin{equation}
\left\vert f\right\vert _{L_{v}^{2}}\lesssim\left[  (1+t)^{-3/2}%
\Big(1+\frac{|x|^{2}}{1+t}\Big)^{-N}+(1+t^{-9/4})e^{-Ct}\right]  \Vert
f_{0}\Vert_{L_{x}^{\infty}L_{v}^{2}}\,.
\end{equation}

\end{theorem}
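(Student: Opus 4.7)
\bigskip

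\noindent\textbf{Proof proposal.} The plan is to synthesize the machinery built up in Section \ref{large-time}: split the solution as $f = W^{(3)} + \mathcal{R}^{(3)}$ via the wave--remainder decomposition, and simultaneously perform the fluid--nonfluid decomposition $f = f_F + f_\ast$, where $f_F = f_{L;0}$ is the fluid long wave and $f_\ast = f_{L;\perp} + f_S$. Writing $f_R = \mathcal{R}^{(3)} - f_F = f_\ast - W^{(3)}$, we obtain the three-piece decomposition $f = W^{(3)} + f_F + f_R$. The wave part $W^{(3)} = \sum_{j=0}^{3} h^{(j)}$ contributes the $(1+t^{-9/4})e^{-Ct}$ factor directly from Proposition \ref{pointwise-wave part >=1}, while the fluid long wave $f_F$ produces the diffusive term, either the heat-kernel shape for $\ga \geq 3/2$ or the algebraic shape for $1 \leq \ga < 3/2$, according to Proposition \ref{fluid}; note here we use $\|f_0\|_{L^1_x L^2_v} \lesssim \|f_0\|_{L^\infty_x L^2_v}$ by compact support in $x$.

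The remaining task, and the only genuinely new step, is to show that the tail $f_R$ decays exponentially in the $L^\infty_x L^2_v$ norm. I would obtain this by Sobolev interpolation in $x$: we have
\[
\|f_R\|_{L^\infty_x L^2_v} \lesssim \|f_R\|_{H^2_x L^2_v}^{3/4} \|f_R\|_{L^2}^{1/4}.
\]
For the $L^2$ factor, combine the exponential short-wave bound \eqref{bot.2.f}, the exponential nonfluid long-wave bound \eqref{bot.2.h}, and the exponential $L^2$ estimate of $W^{(3)}$ from Lemma \ref{initial-sing}(i); this gives $\|f_R\|_{L^2} \lesssim e^{-Ct}\|f_0\|_{L^2}$. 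For the $H^2_x L^2_v$ factor, use that $\mathcal{R}^{(3)}$ is the Duhamel integral of $Kh^{(3)}$ under the full semigroup $\mathbb{G}^t$, so Lemma \ref{second-der} together with the $L^2$-boundedness of $\mathbb{G}^t$ and $K$ yields a time-integrable bound on $\|h^{(3)}(s)\|_{H^2_x L^2_v}$, hence $\|\mathcal{R}^{(3)}\|_{H^2_x L^2_v} \lesssim \|f_0\|_{L^2}$; the fluid piece $f_F$ is already uniformly bounded in $H^2_x L^2_v$ thanks to the cutoff $|\eta| < \de$. Interpolating gives $|f_R|_{L^2_v} \lesssim e^{-Ct}\|f_0\|_{L^2}$ (with a smaller $C$), which is absorbed into the $(1+t^{-9/4})e^{-Ct}$ term.

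The main obstacle I anticipate is actually conceptual rather than computational: one has to verify that the time-singularity $t^{-9/4}$ appearing for short times in the wave-part estimate (inherited from the second $x$-derivative bound $t^{j-3}$ of Lemma \ref{second-der}(i) combined with the Sobolev inequality $\|g\|_{L^\infty_x L^2_v} \lesssim \|\na_x^2 g\|_{L^2}^{3/4}\|g\|_{L^2}^{1/4}$) is the correct sharp exponent that controls the initial layer, while for $t \geq 1$ the exponential decay takes over. Once this is in place, the assembly $|f|_{L^2_v} \leq |W^{(3)}|_{L^2_v} + |f_F|_{L^2_v} + |f_R|_{L^2_v}$ produces exactly the stated bounds in cases (i) and (ii) for $\langle x\rangle \leq 2Mt$. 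No further estimates are required; the whole argument is a careful bookkeeping of the three components, with the Sobolev interpolation for $f_R$ being the one mechanism that converts spectral $L^2$ decay into pointwise $L^\infty_x L^2_v$ decay without any regularity assumption on $f_0$.
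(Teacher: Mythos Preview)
Your proposal is correct and follows essentially the same route as the paper: the three-piece decomposition $f=W^{(3)}+f_F+f_R$, with $W^{(3)}$ handled by Proposition~\ref{pointwise-wave part >=1}, $f_F$ by Proposition~\ref{fluid} (using compact support to pass from $L^1_xL^2_v$ to $L^\infty_xL^2_v$), and $f_R$ by the Sobolev interpolation $\|f_R\|_{L^\infty_xL^2_v}\lesssim\|f_R\|_{H^2_xL^2_v}^{3/4}\|f_R\|_{L^2}^{1/4}$ with the two factors controlled exactly as you describe via \eqref{bot.2.f}, \eqref{bot.2.h}, Lemma~\ref{initial-sing}, and the Duhamel bound \eqref{R3}. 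Your identification of the $t^{-9/4}$ short-time exponent as arising from $\|\nabla_x^2 h^{(0)}\|_{L^2}^{3/4}\|h^{(0)}\|_{L^2}^{1/4}\sim t^{-9/4}$ is also the paper's mechanism.
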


\subsection{The case $0<\gamma<1$}

First, we introduce the $L^{2}$ estimate and the pointwise estimate of the
wave part.

\begin{proposition}
\label{pointwise-wave part <1}Assume that $0<\gamma<1$. Then for $0\leq
j\leq3,$ and $t>0$, there exists $c_{\gamma}>0$ such that
\begin{equation}
\left\Vert h^{(j)}\right\Vert _{L^{2}}\lesssim t^{j}e^{-c_{\gamma}%
\alpha^{\frac{2\left(  1-\gamma\right)  }{2-\gamma}}t^{\frac{\gamma}{2-\gamma
}}}\Vert f_{0}\Vert_{L^{2}\left(  e^{\left(  j+1\right)  \alpha\left\langle
v\right\rangle ^{\gamma}}\right)  }\,, \label{L2-esti-gamma<1}%
\end{equation}
and%
\begin{equation}
\left\vert h^{(j)}\left(  t,x,v\right)  \right\vert _{L_{v}^{2}}\lesssim
t^{j-\frac{9}{4}}e^{-\frac{1}{4}c_{\gamma}\alpha^{\frac{2\left(
1-\gamma\right)  }{2-\gamma}}t^{\frac{\gamma}{2-\gamma}}}\Vert f_{0}%
\Vert_{L^{2}\left(  e^{\left(  j+1\right)  \alpha\left\langle v\right\rangle
^{\gamma}}\right)  }.\, \label{Wave-ptw-gamma<1}%
\end{equation}

\end{proposition}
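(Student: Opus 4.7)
\medskip
\noindent\textit{Proof plan.} The plan is to prove \eqref{L2-esti-gamma<1} by combining a weighted energy estimate with a velocity-splitting argument that converts the weak damping $\langle v\rangle^{2\gamma-2}$ into sub-exponential decay, then to deduce \eqref{Wave-ptw-gamma<1} by applying the Sobolev embedding $\Vert f\Vert_{L_x^\infty L_v^2}\lesssim \Vert\nabla_x^2 f\Vert_{L^2}^{3/4}\Vert f\Vert_{L^2}^{1/4}$ together with the same decay for $\nabla_x^2 h^{(j)}$. The source $Kh^{(j-1)}$ will be handled by induction on $j$, the base case $j=0$ being source-free.

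I will first obtain coercivity of $\Lambda$ against the pure velocity weight $e^{\alpha\langle v\rangle^\gamma}$ by repeating the argument of Lemma \ref{prop1}: the cross term is $O(\alpha\langle v\rangle^{\gamma-1})$ and is absorbed by the dissipation for $\alpha$ small. This will yield
\[
\tfrac{d}{dt}\Vert h^{(j)}\Vert_{L^2}^2+c_0\Vert h^{(j)}\Vert_{L_{\sigma}^{2}}^2\leq 2\bigl|\bigl\langle Kh^{(j-1)},h^{(j)}\bigr\rangle\bigr|.
\]
The key step is then the velocity splitting at a threshold $R$: on $\langle v\rangle\leq R$ use $\langle v\rangle^{2-2\gamma}\leq R^{2-2\gamma}$, and on $\langle v\rangle>R$ use the weight tail $e^{-(j+1)\alpha\langle v\rangle^\gamma}\leq e^{-(j+1)\alpha R^\gamma}$ together with Lemma \ref{initial-sing}(ii), to obtain
\[
\Vert h^{(j)}\Vert_{L^2}^2\leq R^{2-2\gamma}\Vert h^{(j)}\Vert_{L_{\sigma}^{2}}^2+C\,e^{-(j+1)\alpha R^\gamma}t^{2j}\Vert f_0\Vert_{L^2(e^{(j+1)\alpha\langle v\rangle^\gamma})}^2.
\]
Substituting into the energy identity and taking $R(t)=At^{1/(2-\gamma)}$ with $A^{2-\gamma}\sim 1/\alpha$ will produce a Gronwall-type inequality whose integrating factor is $\exp(c_\gamma\alpha^{2(1-\gamma)/(2-\gamma)}t^{\gamma/(2-\gamma)})$; the $\alpha$-exponent arises from $A^{2\gamma-2}\sim\alpha^{2(1-\gamma)/(2-\gamma)}$, and the extra factor $(j+1)$ in the weight on $f_0$ provides the margin needed for the tail to decay faster than the integrating factor, allowing $c_\gamma$ to be chosen uniformly in $j$.

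For $j\geq 1$, I will estimate $|\langle Kh^{(j-1)},h^{(j)}\rangle|$ by Cauchy--Schwarz using the boundedness and compact $v$-support of $K$ together with the inductive decay of $\Vert h^{(j-1)}\Vert_{L^2}$; after Gronwall its contribution will be a convergent integral since sub-exponential decay dominates any polynomial. Combined with the short-time bound $\Vert h^{(j)}\Vert_{L^2}\leq t^j\Vert f_0\Vert_{L^2(e^{(j+1)\alpha\langle v\rangle^\gamma})}$ of Lemma \ref{initial-sing}(ii), which dominates for $t\leq 1$, this will give \eqref{L2-esti-gamma<1}.

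For the pointwise estimate, on $0<t\leq 1$ combining Lemma \ref{second-der}(i), giving $\Vert\nabla_x^2 h^{(j)}\Vert_{L^2}\lesssim t^{j-3}\Vert f_0\Vert_{L^2}$, with Lemma \ref{initial-sing}(ii) via Sobolev will yield $\Vert h^{(j)}\Vert_{L_x^\infty L_v^2}\lesssim t^{j-9/4}\Vert f_0\Vert_{L^2}$, the sub-exponential factor being bounded below by a constant on this interval. For $t>1$, $\nabla_x^2 h^{(j)}$ satisfies the same equation as $h^{(j)}$ with source $K\nabla_x^2 h^{(j-1)}$, so the weighted energy + splitting argument will apply verbatim with initial data at $t=1$ supplied by Lemma \ref{second-der}(i), yielding sub-exponential decay of $\Vert\nabla_x^2 h^{(j)}\Vert_{L^2}$ for $t>1$; the Sobolev interpolation will then produce \eqref{Wave-ptw-gamma<1}, with the factor $1/4$ in the exponent accommodating the loss from Sobolev interpolation and the absorption of the polynomial prefactor. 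The hardest part will be the careful bookkeeping of constants---in particular the choice of $A$ scaling with $\alpha$ alone (not $(j+1)\alpha$) together with the exploitation of the $(j+1)$ margin---to arrive at exactly the rate $\alpha^{2(1-\gamma)/(2-\gamma)}t^{\gamma/(2-\gamma)}$ uniformly in $j$; a secondary technical check is that $e^{(j+1)\alpha\langle v\rangle^\gamma}$ is admissible in Lemmas \ref{prop1} and \ref{initial-sing}(ii), which is routine for small $\alpha$ since $\nabla_x$ does not interact with this weight.
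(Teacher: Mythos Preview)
Your proposal is correct and follows essentially the same route as the paper: weighted energy estimate with the pure velocity weight $e^{\alpha\langle v\rangle^\gamma}$, the Caflisch--Strain--Guo velocity splitting at the time-dependent threshold $\langle v\rangle\sim t^{1/(2-\gamma)}$, and Sobolev interpolation for the pointwise bound. The paper phrases the splitting by citing Section~5 of \cite{[Strain-Guo]} directly rather than spelling out the Gronwall computation, and handles $j\geq 1$ via Duhamel rather than by carrying the source term through the energy inequality inductively, but these are equivalent bookkeeping choices.

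One place where you do more work than necessary is the pointwise estimate for $t>1$: you propose to rerun the weighted splitting argument for $\nabla_x^2 h^{(j)}$ to obtain sub-exponential decay of the second derivatives as well. The paper avoids this entirely. It simply uses the crude polynomial bound $\Vert\nabla_x^2 h^{(j)}\Vert_{L^2}\lesssim t^j\Vert f_0\Vert_{L^2}$ from Lemma~\ref{second-der}(ii) and interpolates via Sobolev with the sub-exponential $L^2$ bound \eqref{L2-esti-gamma<1}; the $1/4$ power on the $L^2$ factor is exactly what produces the $\tfrac14 c_\gamma$ in the exponent, and the residual polynomial prefactor $t^j$ (versus $t^{j-9/4}$) is absorbed into the sub-exponential for $t>1$. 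Your approach works, but the paper's shortcut is cleaner and explains the specific constant $\tfrac14$ in \eqref{Wave-ptw-gamma<1}.
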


\begin{proof}
We first consider the $L^{2}$ estimate for $h^{\left(  0\right)  }.\ $It is
easy to see that (\ref{Energy-evolu}) is still valid if setting $\mu\left(
t,x,v\right)  =e^{\alpha\left\langle v\right\rangle ^{\gamma}}$, namely,
\begin{equation}
\frac{d}{dt}\left\Vert h^{\left(  0\right)  }\right\Vert _{L^{2}}%
^{2}+C\left\Vert h^{\left(  0\right)  }\right\Vert _{L_{\sigma}^{2}}^{2}\leq0,
\label{3}%
\end{equation}
and
\begin{equation}
\frac{d}{dt}\left\Vert e^{\alpha\left\langle v\right\rangle ^{\gamma}%
}h^{\left(  0\right)  }\right\Vert _{L^{2}}^{2}+C\left\Vert e^{\alpha
\left\langle v\right\rangle ^{\gamma}}h^{\left(  0\right)  }\right\Vert
_{L_{\sigma}^{2}}^{2}\leq0. \label{4}%
\end{equation}
Hence, $\left\Vert h^{\left(  0\right)  }\right\Vert _{L^{2}}\leq\Vert
f_{0}\Vert_{L^{2}}\ $for $t\geq0$ and it suffices to show that for $t\geq1,$
\[
\left\Vert h^{(0)}\right\Vert _{L^{2}}\lesssim e^{-c_{\gamma}\alpha
^{\frac{2\left(  1-\gamma\right)  }{2-\gamma}}t^{\frac{\gamma}{2-\gamma}}%
}\Vert f_{0}\Vert_{L^{2}\left(  e^{\alpha\left\langle v\right\rangle ^{\gamma
}}\right)  }.
\]
As in the work of Caflisch \cite{[Caflisch]}, we consider a time-dependent low
velocity part
\[
E=\{\left\langle v\right\rangle \leq\beta t^{p^{\prime}}\},
\]
and its complementary high velocity part $E^{c}=\{\left\langle v\right\rangle
>\beta t^{p^{\prime}}\},$ where $p^{\prime}>0$ and $\beta>0$ will be
determined later. Following the argument as in Section 5 of
\cite{[Strain-Guo]}, together with (\ref{3}) and (\ref{4}), we obtain%
\[
\left\Vert h^{\left(  0\right)  }\right\Vert _{L^{2}}\lesssim e^{-c_{\gamma
}\alpha^{\frac{2\left(  1-\gamma\right)  }{2-\gamma}}t^{\frac{\gamma}%
{2-\gamma}}}\left\Vert f_{0}\right\Vert _{L^{2}(e^{\alpha\left\langle
v\right\rangle ^{\gamma}})}\,,
\]
for some constant $c_{\gamma}>0,$ after choosing $p^{\prime}=\frac{1}%
{2-\gamma}$ in the Fokker -Planck case and $\beta>0$ sufficiently large. This
completes the $L^{2}$ estimate for $h^{\left(  0\right)  }$.

Through the Duhumel Principle, we immediately obtain (\ref{L2-esti-gamma<1})
for $1\leq j\leq3.$ Furthermore, the Sobolev inequality, together with Lemma
\ref{second-der} and (\ref{L2-esti-gamma<1}), implies the desired pointwise
estimate for the wave part $h^{\left(  j\right)  },\ 0\leq j\leq3$.
\end{proof}

Next, we are concerned with the pointwise behavior of the remainder part in
the time-like region. In virtue of the lack of the spectral analysis for
$0<\gamma<1$, we will instead use the method of the weighted $L^{2}$ estimate
in the Fourier transformed variable and the interpolation argument to deal
with the time decay of the solution $f$ to equation $\left(  \ref{in.1.c}%
\right)  $ in this case. The main idea is to construct the desired weighted
time-frequency Lyapunov functional to capture the total energy dissipation
rate. In the course of the proof we have to take great care to estimate the
microscopic and macroscopic parts for $\left\vert \eta\right\vert \leq1$ and
$\left\vert \eta\right\vert >1$ respectively. Consider $\left(  \ref{in.1.c}%
\right)  ,$ taking the Fourier transform with respect to the $x$ variable
leads to
\begin{equation}
\partial_{t}\widehat{f}+iv\cdot\eta\widehat{f}=L\widehat{f}. \label{FT}%
\end{equation}
We first calculate the $L^{2}$ estimate.

\begin{proposition}
[$L^{2}$ estimate]Let $f$ be the solution to equation $\left(  \ref{in.1.c}%
\right)  $ . Then there exists a time-frequency functional $\mathcal{E}\left(
t,\eta\right)  $ such that
\begin{equation}
\mathcal{E}\left(  t,\eta\right)  \approx\left\vert \widehat{f}\left(
t,\eta,v\right)  \right\vert _{L_{v}^{2}}^{2},
\end{equation}
where for any $t>0$ and $\eta\in\mathbb{R}^{3}$, we have
\begin{equation}
\partial_{t}\mathcal{E}\left(  t,\eta\right)  +\sigma\widehat{\rho}\left(
\eta\right)  \left\vert \widehat{f}\left(  t,\eta,v\right)  \right\vert
_{L_{\gamma-1}^{2}}^{2}\leq0. \label{E-N}%
\end{equation}
Here we use the notation $\widehat{\rho}\left(  \eta\right)  :=\min
\{1,\left\vert \eta\right\vert ^{2}\}.$
\end{proposition}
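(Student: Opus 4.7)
\medskip

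\noindent\textbf{Proof proposal.} The plan is a Fourier-side hypocoercivity argument of Kawashima type: the coercivity of $L$ only dissipates the microscopic part $\mathrm{P}_1\widehat f$, so I would build an auxiliary \emph{interaction functional} that, by mixing the macro and micro components through the transport term $iv\cdot\eta$, produces dissipation for the macroscopic mode $a(t,\eta)=\langle\widehat f,E_D\rangle_v$ as well. First I would apply the Fourier transform in $x$ to obtain \eqref{FT}, decompose $\widehat f=a(t,\eta)E_D+\mathrm{P}_1\widehat f$, and record the conservation-type identity
\[
\partial_t a+i\eta\cdot b=0,\qquad b_j(t,\eta):=\langle\widehat f,v_jE_D\rangle_v=\langle\mathrm{P}_1\widehat f,v_jE_D\rangle_v,
\]
where I used $\int v_jE_D^2\,dv=\int v_j\mathcal M\,dv=0$ by symmetry of $\mathcal M$.

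\medskip

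\noindent The basic energy step: since $\langle iv\cdot\eta\widehat f,\widehat f\rangle_v$ is purely imaginary, the coercivity estimate \eqref{coercivity} gives
\[
\tfrac12\partial_t|\widehat f|_{L^2_v}^2=\mathrm{Re}\,\langle L\widehat f,\widehat f\rangle_v\leq -\nu_0|\mathrm{P}_1\widehat f|_{L_\sigma^2}^2,
\]
which dissipates only the micro part. To produce macro dissipation I would introduce
\[
I(t,\eta):=-\mathrm{Re}\bigl(i\eta\cdot b(t,\eta)\,\overline{a(t,\eta)}\bigr).
\]
Using the equation for $b_j$, namely $\dot b_j=-i\sigma_0\eta_j a-i\eta_k\langle v_jv_k\mathrm{P}_1\widehat f,E_D\rangle_v+\langle\mathrm{P}_1\widehat f,L(v_jE_D)\rangle_v$ with $\sigma_0:=\int v_1^2\mathcal M\,dv>0$ (and $\langle aE_D,L(v_jE_D)\rangle_v=0$ since $LE_D=0$), a direct computation gives
\[
\partial_t I=-\sigma_0|\eta|^2|a|^2+|\eta\cdot b|^2-\mathrm{Re}\!\!\sum_{j,k}\eta_j\eta_k\langle v_jv_k\mathrm{P}_1\widehat f,E_D\rangle_v\overline a-\mathrm{Re}\,i\eta_j\langle\mathrm{P}_1\widehat f,L(v_jE_D)\rangle_v\overline a.
\]

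\medskip

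\noindent Because $E_D=\sqrt{\mathcal M}$ has Gaussian-type decay, the test functions $v_jE_D$, $v_jv_kE_D$ and $L(v_jE_D)$ lie in every weighted $L^2_v$-space. Using Cauchy--Schwarz with the weight $\langle v\rangle^{1-\gamma}/\langle v\rangle^{1-\gamma}$ I get
\[
|b|^2+|\langle v_jv_k\mathrm{P}_1\widehat f,E_D\rangle_v|^2+|\langle\mathrm{P}_1\widehat f,L(v_jE_D)\rangle_v|^2\leq C\,|\mathrm{P}_1\widehat f|_{L^2_{\gamma-1}}^2\leq C\,|\mathrm{P}_1\widehat f|_{L_\sigma^2}^2.
\]
Combined with Young's inequality, all error terms in $\partial_t I$ are bounded by $\varepsilon|\eta|^2|a|^2+C_\varepsilon|\mathrm{P}_1\widehat f|_{L_\sigma^2}^2$ (the apparent $|\eta|^4|a|^2$ from the $\eta_j\eta_k$ term is harmless after dividing by $1+|\eta|^2$ below). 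I then set
\[
\mathcal E(t,\eta):=|\widehat f|_{L^2_v}^2+\kappa\,\frac{I(t,\eta)}{1+|\eta|^2},
\]
with $\kappa>0$ small. Equivalence $\mathcal E\approx|\widehat f|_{L^2_v}^2$ follows from $|I|\leq|\eta||a||b|\leq(1+|\eta|^2)|\widehat f|_{L^2_v}^2$. Summing the two estimates and absorbing errors for $\kappa$ small yields
\[
\partial_t\mathcal E+c\,\frac{|\eta|^2}{1+|\eta|^2}|a|^2+c\,|\mathrm{P}_1\widehat f|_{L_\sigma^2}^2\leq 0.
\]
Finally, $\tfrac{|\eta|^2}{1+|\eta|^2}\approx\widehat\rho(\eta)$, $|\widehat f|_{L^2_{\gamma-1}}^2\lesssim|a|^2+|\mathrm{P}_1\widehat f|_{L^2_{\gamma-1}}^2\lesssim|a|^2+|\mathrm{P}_1\widehat f|_{L_\sigma^2}^2$ and $\widehat\rho(\eta)\leq 1$ together imply $\widehat\rho(\eta)|\widehat f|_{L^2_{\gamma-1}}^2\lesssim\widehat\rho(\eta)|a|^2+|\mathrm{P}_1\widehat f|_{L_\sigma^2}^2$, giving \eqref{E-N}.

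\medskip

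\noindent The main obstacle, and the reason the case $0<\gamma<1$ is delicate, is that the coercive norm $L_\sigma^2$ is \emph{strictly weaker} than $L_v^2$ on the micro part, so the error bounds on $|b|^2$ and on $\langle\mathrm{P}_1\widehat f,L(v_jE_D)\rangle_v$ cannot be closed by naively inserting $|\mathrm{P}_1\widehat f|_{L^2_v}^2$. The fix is to let the rapid decay of the test function $E_D$ absorb the missing weight $\langle v\rangle^{1-\gamma}$, so that $|b|^2$ is controlled by $|\mathrm{P}_1\widehat f|_{L^2_{\gamma-1}}^2$; this is what makes the target dissipation rate carry the weight $\langle v\rangle^{\gamma-1}$ and justifies the exponent in $|\widehat f|_{L^2_{\gamma-1}}^2$ on the right-hand side of \eqref{E-N}.
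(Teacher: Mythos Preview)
Your proposal is correct and follows essentially the same Kawashima-type hypocoercivity argument as the paper: the interaction functional $I(t,\eta)/(1+|\eta|^2)$ you introduce coincides (up to sign convention) with the paper's correction term $\operatorname{Re}\frac{(i\eta\widehat a,\widehat b)}{1+|\eta|^2}$, and your derivation of the $b$-equation, the bound on $|\eta\cdot b|^2$ and the other error terms via the rapid decay of $E_D=\sqrt{\mathcal M}$, and the final absorption for small $\kappa$ all match the paper's steps. Your closing remark that the decay of $E_D$ is what allows one to trade the missing weight $\langle v\rangle^{1-\gamma}$ and land in $|\mathrm{P}_1\widehat f|_{L^2_{\gamma-1}}$ is exactly the point the paper makes when invoking ``the rapid decay of $\mathcal M^{1/2}$''.
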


\begin{proof}
We multiply equation $\left(  \ref{FT}\right)  $ by $\overline{\widehat
{f}\left(  t,\eta,v\right)  }$ and integrate over $v$ to obtain%
\[
\frac{1}{2}\frac{d}{dt}\left\vert \widehat{f}\left(  t,\eta,v\right)
\right\vert _{L_{v}^{2}}^{2}-\operatorname{Re}\left\langle L\widehat
{f},\widehat{f}\right\rangle =0
\]
From the coercivity in Lemma \ref{co}, it follows that%
\begin{equation}
\frac{1}{2}\frac{d}{dt}\left\vert \widehat{f}\left(  t,\eta,v\right)
\right\vert _{L_{v}^{2}}^{2}+\nu_{0}\left\vert \mathrm{P}_{1}\widehat
{f}\right\vert _{L_{\sigma}^{2}}^{2}\leq0. \label{Fourier-energy -wo-weigh}%
\end{equation}
Now, we need the estimate of $\mathrm{P}_{0}\widehat{f}$. In the sequel, we
will apply Strain's argument to estimate the macroscopic dissipation, in the
spirit of Kawashima's work on dissipation of the hyperbolic-parabolic system.
Let $a=\left\langle \mathcal{M}^{1/2},f\right\rangle _{v}$ and $b=\left(
b_{1},b_{2},b_{3}\right)  \ $with $b_{i}=\left\langle v_{i}\mathcal{M}%
^{1/2},f\right\rangle _{v}=\left\langle v_{i}\mathcal{M}^{1/2},\mathrm{P}%
_{1}f\right\rangle _{v}.$ Then $P_{0}f=a\mathcal{M}^{1/2}$ and from $\left(
\ref{in.1.c}\right)  ,$ $a$ and $b$ satisfy the fuild-type system%
\begin{equation}
\left\{
\begin{array}
[c]{l}%
\partial_{t}a+\nabla_{x}\cdot b=0\vspace{3mm}\\
\partial_{t}b+\alpha\nabla_{x}a+\nabla_{x}\cdot\Gamma\left(  \mathrm{P}%
_{1}f\right)  =-\int\left(  \mathcal{M}^{1/2}\nabla_{v}\Phi\right)
\mathrm{P}_{1}fdv,
\end{array}
\right.  \label{Evolu-ab}%
\end{equation}
where%
\[
\alpha=\frac{1}{3}\int\left\vert v\right\vert ^{2}\mathcal{M}dv>0,
\]
and $\Gamma=\left(  \Gamma_{ij}\right)  _{3\times3}$ is the moment function
defined by
\[
\Gamma_{ij}\left(  g\right)  =\left\langle \left(  v_{i}v_{j}-1\right)
\mathcal{M}^{1/2},g\right\rangle _{v},\ \ \ \ 1\leq i,\ j\leq3.
\]
Note by the definition of $P_{0}$ that $\Gamma\left(  \mathrm{P}_{1}f\right)
=\int\left(  v\otimes v\right)  \mathcal{M}^{1/2}\mathrm{P}_{1}fdv.$ Taking
the Fourier transform with respect to $x$ of $\left(  \ref{Evolu-ab}\right)
,$ we have%
\begin{align*}
\left\vert \eta\right\vert ^{2}\left\vert \widehat{a}\right\vert ^{2}  &
=\left(  i\eta\widehat{a},i\eta\widehat{a}\right)  =\frac{1}{\alpha}\left(
i\eta\widehat{a},-\partial_{t}\widehat{b}-i\Gamma\left(  \mathrm{P}%
_{1}\widehat{f}\right)  \eta-\int\left(  \mathcal{M}^{1/2}\nabla_{v}%
\Phi\right)  \mathrm{P}_{1}\widehat{f}dv\right) \\
&  =\frac{1}{\alpha}\left[  -\left(  i\eta\widehat{a},\widehat{b}\right)
_{t}+\left|  \eta\cdot\widehat{b}\right|  ^{2}-\left(  i\eta\widehat
{a},i\Gamma\left(  \mathrm{P}_{1}\widehat{f}\right)  \eta\right)  -\left(
i\eta\widehat{a},\int\left(  \mathcal{M}^{1/2}\nabla_{v}\Phi\right)
\mathrm{P}_{1}\widehat{f}dv\right)  \right]  .
\end{align*}
\newline Invoking on the rapid decay of $\mathcal{M}^{1/2}$ and using the
Cauchy-Schwartz inequality, we have
\[
\left\vert \int\left(  \mathcal{M}^{1/2}\nabla_{v}\Phi\right)  \mathrm{P}%
_{1}\widehat{f}dv\right\vert ^{2}\leq\left\vert \mathcal{M}^{1/2}v\left\langle
v\right\rangle ^{-1}\right\vert _{L_{v}^{2}}^{2}\left\vert \left\langle
v\right\rangle ^{\gamma-1}\mathrm{P}_{1}\widehat{f}\right\vert _{L_{v}^{2}%
}^{2}\leq3\alpha\left\vert \mathrm{P}_{1}\widehat{f}\right\vert _{L_{\gamma
-1}^{2}}^{2},
\]
and
\[
\left\vert \left(  i\eta\widehat{a},i\Gamma\left(  \mathrm{P}_{1}\widehat
{f}\right)  \eta\right)  \right\vert \leq\epsilon\left\vert \eta\right\vert
^{2}\left\vert \widehat{a}\right\vert ^{2}+C_{\epsilon}\left\vert
\eta\right\vert ^{2}\left\vert \mathrm{P}_{1}\widehat{f}\right\vert
_{L_{\gamma-1}^{2}}^{2},
\]
for any small $\epsilon>0.$ Therefore, we can conclude%
\begin{equation}
\partial_{t}\operatorname{Re}\frac{\left(  i\eta\widehat{a},\widehat
{b}\right)  }{1+\left\vert \eta\right\vert ^{2}}+\frac{\sigma\left\vert
\eta\right\vert ^{2}}{1+\left\vert \eta\right\vert ^{2}}\left\vert \widehat
{a}\right\vert ^{2}\leq C\left\vert \mathrm{P}_{1}\widehat{f}\right\vert
_{L_{\gamma-1}^{2}}^{2}, \label{Fourier energy est.}%
\end{equation}
for some $\sigma>0.$ Now, we define
\begin{equation}
\mathcal{E}\left(  t,\eta\right)  =\left\vert \widehat{f}\left(
t,\eta,v\right)  \right\vert _{L_{v}^{2}}^{2}+\kappa_{3}\operatorname{Re}%
\frac{\left(  i\eta\widehat{a},\widehat{b}\right)  }{1+\left\vert
\eta\right\vert ^{2}}, \label{Lyapu 1}%
\end{equation}
for a constant $\kappa_{3}>0$ to be determined later. One can fix $\kappa
_{3}>0$ small enough such that $\mathcal{E}\left(  t,\eta\right)
\approx\left\vert \widehat{f}\left(  t,\eta,v\right)  \right\vert _{L_{v}^{2}%
}^{2}.$ Furthermore, according to Lemma \ref{co} and $\left(
\ref{Fourier energy est.}\right)  ,$ we choose $\kappa_{3}>0$ sufficiently
small such that%
\begin{equation}
\partial_{t}\mathcal{E}\left(  t,\eta\right)  +\sigma\left\vert \mathrm{P}%
_{1}\widehat{f}\right\vert _{L_{\gamma-1}^{2}}^{2}+\frac{2\sigma\left\vert
\eta\right\vert ^{2}}{1+\left\vert \eta\right\vert ^{2}}\left\vert \widehat
{a}\right\vert ^{2}\leq0, \label{E}%
\end{equation}
for some $\sigma>0.$ In conclusion, we now have
\[
\partial_{t}\mathcal{E}\left(  t,\eta\right)  +\sigma\widehat{\rho}\left(
\eta\right)  \left\vert \widehat{f}\left(  t,\eta,v\right)  \right\vert
_{L_{\gamma-1}^{2}}^{2}\leq0.
\]
Here we use the notation $\widehat{\rho}\left(  \eta\right)  :=\min
\{1,\left\vert \eta\right\vert ^{2}\}$.
\end{proof}

Since $\gamma-1<0,$ it is insufficient to gain the time decay of the total
energy of the solution $f.$ Therefore, in order to capture the total energy
dissipation rate, we need to make further energy estimates on the microscopic
part $\mathrm{P}_{1}f$ and the macroscopic part $\mathrm{P}_{0}f$.

\begin{proposition}
Let $f$ be the solution to equation $\left(  \ref{in.1.c}\right)  $ . Then
there exists a weighted time-frequency functional $\widetilde{\mathcal{E}%
}\left(  t,\eta\right)  $ such that
\begin{equation}
\widetilde{\mathcal{E}}\left(  t,\eta\right)  \approx\left\vert e^{\frac
{\alpha}{2}\left\langle v\right\rangle ^{\gamma}}\widehat{f}\left(
t,\eta,v\right)  \right\vert _{L_{v}^{2}}^{2},
\end{equation}
where $0<\alpha\gamma<1/20\ $and for any $t>0$ and $\eta\in\mathbb{R}^{3}$ we
have
\begin{equation}
\partial_{t}\widetilde{\mathcal{E}}\left(  t,\eta\right)  +\sigma\widehat
{\rho}\left(  \eta\right)  \left\vert e^{\frac{\alpha}{2}\left\langle
v\right\rangle ^{\gamma}}\widehat{f}\left(  t,\eta,v\right)  \right\vert
_{L_{\gamma-1}^{2}}^{2}\leq0. \label{EL}%
\end{equation}
Here we use the notation $\widehat{\rho}\left(  \eta\right)  :=\min\{1,$
$\left\vert \eta\right\vert ^{2}\}.$
\end{proposition}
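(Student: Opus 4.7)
The plan is to parallel the proof of the unweighted estimate (\ref{E-N}), inserting the Gaussian weight $w(v):=e^{\frac{\alpha}{2}\langle v\rangle^{\gamma}}$ from the start. The argument decomposes into a weighted microscopic dissipation step, a macroscopic dissipation step (which can be imported essentially unchanged), and a Lyapunov-combination step. The main obstacle will be maintaining coercivity of $L$ in the weighted norm, which is exactly where the smallness constraint $\alpha\gamma<1/20$ enters.

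First I would multiply the Fourier-transformed equation $\partial_{t}\widehat{f}+iv\cdot\eta\widehat{f}=L\widehat{f}$ by $w^{2}\overline{\widehat{f}}$ and integrate in $v$. Since $w$ is real and $v$-independent of $x$, the transport term $iv\cdot\eta\widehat{f}$ contributes a purely imaginary quantity and drops after taking real parts, leaving
$$
\tfrac{1}{2}\tfrac{d}{dt}|w\widehat{f}|_{L_{v}^{2}}^{2}-\operatorname{Re}\langle L\widehat{f},w^{2}\widehat{f}\rangle_{v}=0.
$$
The key is then to establish a weighted coercivity of the form
$$
-\operatorname{Re}\langle L\widehat{f},w^{2}\widehat{f}\rangle_{v}\geq c_{0}|w\mathrm{P}_{1}\widehat{f}|_{L_{\sigma}^{2}}^{2}-C_{0}|\widehat{a}|^{2},
$$
in the spirit of Lemma \ref{co}. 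Writing $\langle -Lf,w^{2}f\rangle_{v}=\int|\nabla_{v}(wf/\sqrt{\mathcal{M}})|^{2}\mathcal{M}\,dv$ plus commutator remainders, the term $[\nabla_{v},w]$ produces an error of size $|\nabla_{v}w/w|\sim\alpha\gamma\langle v\rangle^{\gamma-1}$, yielding an error contribution of order $(\alpha\gamma)^{2}\int\langle v\rangle^{2\gamma-2}|wf|^{2}dv$; this is precisely of the same order as the coercive piece, and the threshold $\alpha\gamma<1/20$ is exactly what allows its absorption.

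Second, for the macroscopic dissipation I would reuse the fluid-type system (\ref{Evolu-ab}) verbatim, since $a$ and $b$ are defined through unweighted inner products against $\mathcal{M}^{1/2}$ and therefore satisfy (\ref{Fourier energy est.}) without modification. The link back to the weighted norm is the identity $\mathrm{P}_{0}\widehat{f}=\widehat{a}\mathcal{M}^{1/2}$, which gives
$$
|w\mathrm{P}_{0}\widehat{f}|_{L_{\gamma-1}^{2}}^{2}=|w\mathcal{M}^{1/2}|_{L_{\gamma-1}^{2}}^{2}|\widehat{a}|^{2}\lesssim|\widehat{a}|^{2},
$$
with the finiteness of $|w\mathcal{M}^{1/2}|_{L_{\gamma-1}^{2}}^{2}$ guaranteed by $\alpha\gamma<1/20\ll 1$.

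Third, I would define the weighted Lyapunov functional
$$
\widetilde{\mathcal{E}}(t,\eta):=|w\widehat{f}|_{L_{v}^{2}}^{2}+\widetilde{\kappa}\operatorname{Re}\frac{(i\eta\widehat{a},\widehat{b})}{1+|\eta|^{2}},
$$
with $\widetilde{\kappa}>0$ small. Cauchy--Schwarz bounds the cross term by $|\widehat{a}|^{2}+|\widehat{b}|^{2}\lesssim|w\widehat{f}|_{L_{v}^{2}}^{2}$, so $\widetilde{\mathcal{E}}\approx|w\widehat{f}|_{L_{v}^{2}}^{2}$ for $\widetilde{\kappa}$ small. Combining the weighted microscopic dissipation with $\widetilde{\kappa}$ times (\ref{Fourier energy est.}) and choosing $\widetilde{\kappa}$ small enough to absorb $C_{0}$ and the $C$ from (\ref{Fourier energy est.}) yields
$$
\partial_{t}\widetilde{\mathcal{E}}+\sigma|w\mathrm{P}_{1}\widehat{f}|_{L_{\gamma-1}^{2}}^{2}+\sigma\tfrac{|\eta|^{2}}{1+|\eta|^{2}}|\widehat{a}|^{2}\leq 0.
$$
Since $\widehat{\rho}(\eta)=\min\{1,|\eta|^{2}\}\lesssim|\eta|^{2}/(1+|\eta|^{2})$ and $|w\mathrm{P}_{0}\widehat{f}|_{L_{\gamma-1}^{2}}^{2}\lesssim|\widehat{a}|^{2}$, the left-hand side dominates $\sigma\widehat{\rho}(\eta)|w\widehat{f}|_{L_{\gamma-1}^{2}}^{2}$, proving (\ref{EL}). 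The principal technical difficulty is the weighted coercivity step: one must track the commutator of $\nabla_{v}$ with $w$, verify that every error has the form $\alpha\gamma\,\cdot\,(\text{coercive quantity})$, and choose $\alpha\gamma$ below the absorption threshold quoted in the statement.
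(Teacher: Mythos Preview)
Your Lyapunov functional cannot close at low frequencies. From your weighted coercivity
\[
-\operatorname{Re}\langle L\widehat{f},w^{2}\widehat{f}\rangle_{v}\geq c_{0}|w\mathrm{P}_{1}\widehat{f}|_{L_{\sigma}^{2}}^{2}-C_{0}|\widehat{a}|^{2},
\]
the error $C_{0}|\widehat{a}|^{2}$ has a positive constant in front, \emph{independent of $\eta$} (it comes from the cross term $\langle L\mathrm{P}_{1}\widehat{f},w^{2}\mathrm{P}_{0}\widehat{f}\rangle_{v}$, which is $O(\alpha)|\widehat{a}|\cdot|\mathrm{P}_{1}\widehat{f}|$ but never zero). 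The only macroscopic dissipation you have available is $\widetilde{\kappa}\sigma\frac{|\eta|^{2}}{1+|\eta|^{2}}|\widehat{a}|^{2}$ from (\ref{Fourier energy est.}), which degenerates as $\eta\to 0$. Taking $\widetilde{\kappa}$ small only makes this worse; taking it large destroys the microscopic absorption. So for $|\eta|\ll 1$ the inequality $\partial_{t}\widetilde{\mathcal{E}}+\sigma\widehat{\rho}(\eta)|w\widehat{f}|_{L_{\gamma-1}^{2}}^{2}\leq 0$ fails.

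The paper avoids this by building $\widetilde{\mathcal{E}}$ differently in the two frequency regimes and, crucially, keeping the \emph{unweighted} functional $\mathcal{E}$ from (\ref{Lyapu 1}) inside $\widetilde{\mathcal{E}}$. For $|\eta|>1$ it adds $\kappa_{5}|w\widehat{f}|_{L_{v}^{2}}^{2}$ and uses the full weighted estimate (\ref{weighted ineq}); since $\frac{|\eta|^{2}}{1+|\eta|^{2}}\geq\tfrac12$ there, the bounded-velocity remainder $C_{\gamma}|\widehat{f}|_{L^{2}(B_{2R})}^{2}$ is absorbed by the dissipation of $\mathcal{E}$. For $|\eta|\leq 1$ it adds $\kappa_{4}|w\mathrm{P}_{1}\widehat{f}|_{L_{v}^{2}}^{2}$ only, and derives a weighted estimate on the \emph{projected} equation for $\mathrm{P}_{1}\widehat{f}$: applying $\mathrm{P}_{1}$ first turns the macroscopic coupling into the commutator terms $\mathrm{P}_{1}(iv\cdot\eta\mathrm{P}_{0}\widehat{f})$ and $\mathrm{P}_{0}(iv\cdot\eta\mathrm{P}_{1}\widehat{f})$, both carrying an explicit factor $|\eta|$, so the right-hand side becomes $C_{\sigma}|\eta|^{2}|\widehat{f}|_{L_{\gamma-1}^{2}}^{2}+C_{\gamma}|\mathrm{P}_{1}\widehat{f}|_{L_{v}^{2}(B_{2R})}^{2}$ (this is (\ref{Micro-weighted ineq})). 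Both error terms are now absorbable by the dissipation of the unweighted $\mathcal{E}$ for $\kappa_{4}$ small. The missing idea in your sketch is precisely this: weight the $\mathrm{P}_{1}$-equation rather than the full equation at low frequency, so that every macroscopic error inherits the factor $|\eta|^{2}$ matching the degeneracy of $\widehat{\rho}(\eta)$.
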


\begin{proof}
Firstly, we shall prove the following Lyapunov inequality with a velocity
weight $e^{\alpha\left\langle v\right\rangle ^{\gamma}},\ 0<\alpha\gamma
<1/20$:%
\begin{equation}
\frac{d}{dt}\left\vert e^{\frac{\alpha}{2}\left\langle v\right\rangle
^{\gamma}}\mathrm{P}_{1}\widehat{f}\left(  t,\eta,v\right)  \right\vert
_{L_{v}^{2}}^{2}+\sigma\left\vert e^{\frac{\alpha}{2}\left\langle
v\right\rangle ^{\gamma}}\mathrm{P}_{1}\widehat{f}\left(  t,\eta,v\right)
\right\vert _{L_{\gamma-1}^{2}}^{2}\leq C_{\sigma}\left\vert \eta\right\vert
^{2}\left\vert \widehat{f}\right\vert _{L_{\gamma-1}^{2}}^{2}+C_{\gamma
}\left\vert \mathrm{P}_{1}\widehat{f}\right\vert _{L_{v}^{2}\left(
B_{2R}\right)  }^{2}, \label{Micro-weighted ineq}%
\end{equation}
where the constants $C_{\gamma}>0$ and $R>0$ are dependent only upon $\gamma.$
We split the solution $f$ into two parts: $f=P_{0}f+\mathrm{P}_{1}f,$ and then
apply $\mathrm{P}_{1}$ to equation $\left(  \ref{FT}\right)  $:%
\[
\partial_{t}\mathrm{P}_{1}\widehat{f}+iv\cdot\eta\mathrm{P}_{1}\widehat
{f}-L\mathrm{P}_{1}\widehat{f}=-\mathrm{P}_{1}\left(  iv\cdot\eta
P_{0}\widehat{f}\right)  +P_{0}\left(  iv\cdot\eta\mathrm{P}_{1}\widehat
{f}\right)  .
\]
Multiply the above equation by $e^{\alpha\left\langle v\right\rangle ^{\gamma
}}\overline{\mathrm{P}_{1}\widehat{f}}$ and integrate with respect to $v$ to
obtain%
\[
\frac{1}{2}\frac{d}{dt}\left\vert e^{\frac{\alpha}{2}\left\langle
v\right\rangle ^{\gamma}}\mathrm{P}_{1}\widehat{f}\left(  t,\eta,v\right)
\right\vert _{L_{v}^{2}}^{2}-\operatorname{Re}\left\langle e^{\alpha
\left\langle v\right\rangle ^{\gamma}}L\mathrm{P}_{1}\widehat{f}%
,\mathrm{P}_{1}\widehat{f}\right\rangle _{v}=\Gamma,
\]
where
\[
\Gamma=-\operatorname{Re}\left\langle \mathrm{P}_{1}\left(  iv\cdot\eta
P_{0}\widehat{f}\right)  ,e^{\alpha\left\langle v\right\rangle ^{\gamma}%
}\mathrm{P}_{1}\widehat{f}\right\rangle +\operatorname{Re}\left\langle
P_{0}\left(  iv\cdot\eta\mathrm{P}_{1}\widehat{f}\right)  ,e^{\alpha
\left\langle v\right\rangle ^{\gamma}}\mathrm{P}_{1}\widehat{f}\right\rangle
.
\]
Owing to the rapid decay of $\mathcal{M}^{1/2},$ we obtain%
\[
\left\vert \Gamma\right\vert \leq\epsilon\left\vert e^{\frac{\alpha}%
{2}\left\langle v\right\rangle ^{\gamma}}\mathrm{P}_{1}\widehat{f}\left(
t,\eta,v\right)  \right\vert _{L_{\gamma-1}^{2}}^{2}+C_{\epsilon}\left\vert
\eta\right\vert ^{2}\left(  \left\vert \mathrm{P}_{1}\widehat{f}\left(
t,\eta,v\right)  \right\vert _{L_{\gamma-1}^{2}}^{2}+\left\vert P_{0}%
\widehat{f}\right\vert _{L_{v}^{2}}^{2}\right)  ,
\]
which holds for any small $\epsilon>0.$ On the other hand, we rewrite
$L=-\Lambda+K,$ $K=\varpi\chi_{R}\left(  \left\vert v\right\vert \right)  ,$
where $R>0$ and $\varpi>0\ $are chosen sufficiently large such that%
\[
\frac{\left\vert v\right\vert ^{2}\left\langle v\right\rangle ^{2\gamma-4}}%
{4}-\frac{3}{2}\left\langle v\right\rangle ^{\gamma-2}-\frac{\left(
\gamma-2\right)  }{2}\left\vert v\right\vert ^{2}\left\langle v\right\rangle
^{\gamma-4}+\varpi\chi_{R}\left(  \left\vert v\right\vert \right)  \geq
\frac{1}{5}\left\langle v\right\rangle ^{2\gamma-2}.
\]
Hence, we have%
\begin{align*}
&  \quad-\operatorname{Re}\left\langle e^{\alpha\left\langle v\right\rangle
^{\gamma}}L\mathrm{P}_{1}\widehat{f},\mathrm{P}_{1}\widehat{f}\right\rangle
_{v}\\
&  =\operatorname{Re}\int e^{\alpha\left\langle v\right\rangle ^{\gamma}%
}\left[  \left(  \Lambda-K\right)  \mathrm{P}_{1}\widehat{f}\right]
\mathrm{P}_{1}\overline{\widehat{f}}dv\\
&  \geq\int e^{\alpha\left\langle v\right\rangle ^{\gamma}}\left\vert
\nabla_{v}\mathrm{P}_{1}\widehat{f}\right\vert ^{2}+\operatorname{Re}%
\int\alpha\gamma\left\langle v\right\rangle ^{\gamma-2}e^{\alpha\left\langle
v\right\rangle ^{\gamma}}\left(  v\cdot\nabla_{v}\mathrm{P}_{1}\widehat
{f}\right)  \mathrm{P}_{1}\overline{\widehat{f}}dv\\
&  \quad+\frac{1}{5}\int\left\langle v\right\rangle ^{2\gamma-2}%
e^{\alpha\left\langle v\right\rangle ^{\gamma}}\left\vert \mathrm{P}%
_{1}\widehat{f}\right\vert ^{2}dv-C^{\prime}\left\vert \mathrm{P}_{1}%
\widehat{f}\right\vert _{L_{v}^{2}\left(  B_{2R}\right)  }^{2},
\end{align*}
where $C^{\prime}=C^{\prime}\left(  \alpha,\gamma,R\right)  .$ Note that
$\alpha\gamma<1/20,$ the Cauchy-Schwartz inequality implies
\begin{align*}
&  \quad\left\vert \operatorname{Re}\int\alpha\gamma\left\langle
v\right\rangle ^{\gamma-2}e^{\alpha\left\langle v\right\rangle ^{\gamma}%
}\left(  v\cdot\nabla_{v}\mathrm{P}_{1}\widehat{f}\right)  \mathrm{P}%
_{1}\overline{\widehat{f}}dv\right\vert \\
&  \leq\int\alpha\gamma\left\langle v\right\rangle ^{\gamma-1}e^{\alpha
\left\langle v\right\rangle ^{\gamma}}\left\vert \nabla_{v}\mathrm{P}%
_{1}\widehat{f}\right\vert \left\vert \mathrm{P}_{1}\widehat{f}\right\vert
dv\\
&  \leq\int e^{\alpha\left\langle v\right\rangle ^{\gamma}}\left\vert
\nabla_{v}\mathrm{P}_{1}\widehat{f}\right\vert ^{2}dv+\frac{1}{80}%
\int\left\langle v\right\rangle ^{2\gamma-2}e^{\alpha\left\langle
v\right\rangle ^{\gamma}}\left\vert \mathrm{P}_{1}\widehat{f}\right\vert
^{2}dv,
\end{align*}
so we deduce
\[
-\operatorname{Re}\left\langle \left\langle v\right\rangle ^{2\ell}%
L\mathrm{P}_{1}\widehat{f},\mathrm{P}_{1}\widehat{f}\right\rangle _{v}%
\geq\frac{1}{6}\int\left\langle v\right\rangle ^{2\gamma-2}e^{\alpha
\left\langle v\right\rangle ^{\gamma}}\left\vert \mathrm{P}_{1}\widehat
{f}\right\vert ^{2}dv-\widetilde{C}\left(  R,\gamma,\alpha\right)  \left\vert
\mathrm{P}_{1}\widehat{f}\right\vert _{L_{v}^{2}\left(  B_{2R}\right)  }^{2},
\]
where $\widetilde{C}\left(  R,\ell,\alpha\right)  >0.\ $Consequently,
\[
\frac{d}{dt}\left\vert e^{\frac{\alpha}{2}\left\langle v\right\rangle
^{\gamma}}\mathrm{P}_{1}\widehat{f}\left(  t,\eta,v\right)  \right\vert
_{L_{v}^{2}}^{2}+\sigma\left\vert e^{\frac{\alpha}{2}\left\langle
v\right\rangle ^{\gamma}}\mathrm{P}_{1}\widehat{f}\left(  t,\eta,v\right)
\right\vert _{L_{\gamma-1}^{2}}^{2}\leq C_{\sigma}\left\vert \eta\right\vert
^{2}\left\vert \widehat{f}\right\vert _{L_{\gamma-1}^{2}}^{2}+C_{\gamma
}\left\vert \mathrm{P}_{1}\widehat{f}\right\vert _{L_{v}^{2}\left(
B_{2R}\right)  }^{2},
\]
for some constant $\sigma>0.$ \bigskip In addition, if we multiply $\left(
\ref{FT}\right)  $ with $e^{\alpha\left\langle v\right\rangle ^{\gamma}%
}\overline{\widehat{f}\left(  t,\eta,v\right)  },$ integrate in $v$ and use
the same procedure as above, we also obtain%
\begin{equation}
\frac{1}{2}\frac{d}{dt}\left\vert e^{\frac{\alpha}{2}\left\langle
v\right\rangle ^{\gamma}}\widehat{f}\left(  t,\eta,v\right)  \right\vert
_{L_{v}^{2}}^{2}+\sigma\left\vert e^{\frac{\alpha}{2}\left\langle
v\right\rangle ^{\gamma}}\widehat{f}\left(  t,\eta,v\right)  \right\vert
_{L_{\gamma-1}^{2}}^{2}\leq C_{\gamma}\left\vert \widehat{f}\right\vert
_{L^{2}\left(  B_{2R}\right)  }^{2}. \label{weighted ineq}%
\end{equation}
{\ }

To do the weighted estimate, we introduce a new energy as follows :
\[
\widetilde{\mathcal{E}}\left(  t,\eta\right)  :=\widetilde{\mathcal{E}}%
^{0}\left(  t,\eta\right)  +\widetilde{\mathcal{E}}^{1}\left(  t,\eta\right)
,
\]
with%
\[
\widetilde{\mathcal{E}}^{0}\left(  t,\eta\right)  =1_{\left\vert
\eta\right\vert \leq1}\left(  \mathcal{E}\left(  t,\eta\right)  +\kappa
_{4}\left\vert e^{\frac{\alpha}{2}\left\langle v\right\rangle ^{\gamma}%
}\mathrm{P}_{1}\widehat{f}\left(  t,\eta,v\right)  \right\vert _{L_{v}^{2}%
}^{2}\right)  ,
\]%
\[
\widetilde{\mathcal{E}}^{1}\left(  t,\eta\right)  =1_{\left\vert
\eta\right\vert >1}\left(  \mathcal{E}\left(  t,\eta\right)  +\kappa
_{5}\left\vert e^{\frac{\alpha}{2}\left\langle v\right\rangle ^{\gamma}%
}\widehat{f}\left(  t,\eta,v\right)  \right\vert _{L_{v}^{2}}^{2}\right)  ,
\]
where $\mathcal{E}\left(  t,\eta\right)  $ is defined as in $\left(
\ref{Lyapu 1}\right)  $ and the constants $\kappa_{4},$ $\kappa_{5}>0$ will be
chosen small enough. Notice further that $\left\vert \widehat{a}\right\vert
^{2}=\left\vert P_{0}\widehat{f}\right\vert _{L_{v}^{2}}^{2}\gtrsim\left\vert
e^{\frac{\alpha}{2}\left\langle v\right\rangle ^{\gamma}}P_{0}\widehat
{f}\right\vert _{L_{v}^{2}}^{2}$ for all $0<\alpha\gamma<1/20,$ and so
$\widetilde{\mathcal{E}}\left(  t,\eta\right)  \approx\left\vert
e^{\frac{\alpha}{2}\left\langle v\right\rangle ^{\gamma}}\widehat
{f}\right\vert _{L_{v}^{2}}^{2}.$

For $\widetilde{\mathcal{E}}^{1}\left(  t,\eta\right)  ,$ we combine $\left(
\ref{E}\right)  $ and $\left(  \ref{weighted ineq}\right)  $ for $\left\vert
\eta\right\vert >1$ to obtain
\begin{equation}
\partial_{t}\widetilde{\mathcal{E}}^{1}\left(  t,\eta\right)  +\sigma
\left\vert e^{\frac{\alpha}{2}\left\langle v\right\rangle ^{\gamma}}%
\widehat{f}\left(  t,\eta,v\right)  \right\vert _{L_{\gamma-1}^{2}}%
^{2}1_{\left\vert \eta\right\vert >1}\leq0, \label{E0}%
\end{equation}
for $\kappa_{5}>0$ small enough, since $\left\vert \eta\right\vert
^{2}/\left(  1+\left\vert \eta\right\vert ^{2}\right)  \geq\frac{1}{2}.$

For $\widetilde{\mathcal{E}}^{0}\left(  t,\eta\right)  ,$ since $\left\vert
\eta\right\vert ^{2}/\left(  1+\left\vert \eta\right\vert ^{2}\right)
\geq\frac{\left\vert \eta\right\vert ^{2}}{2}$ for $\left\vert \eta\right\vert
\leq1$ and $\left\vert \widehat{a}\right\vert ^{2}\gtrsim\left\vert
e^{\frac{\alpha}{2}\left\langle v\right\rangle ^{\gamma}}P_{0}\widehat
{f}\right\vert _{L_{v}^{2}}^{2}$ for all $\alpha\gamma<1/20,$ combining
$\left(  \ref{E}\right)  $ and $\left(  \ref{Micro-weighted ineq}\right)  $
for $\left\vert \eta\right\vert \leq1$ gives
\begin{equation}
\partial_{t}\widetilde{\mathcal{E}}^{0}\left(  t,\eta\right)  +\sigma
\left\vert \eta\right\vert ^{2}\left\vert e^{\frac{\alpha}{2}\left\langle
v\right\rangle ^{\gamma}}\widehat{f}\left(  t,\eta,v\right)  \right\vert
_{L_{\gamma-1}^{2}}^{2}1_{\left\vert \eta\right\vert \leq1}\leq0, \label{E1}%
\end{equation}
for $\kappa_{4}>0$ small enough. This completes the proof.
\end{proof}

Now, it is enough to prove the estimate in the time-like region. We apply the
H\"{o}lder inequality to obtain that for $j\geq1,${
\begin{align*}
\mathcal{E}\left(  t,\eta\right)   &  \lesssim\left\vert \widehat{f}\left(
t,\eta,v\right)  \right\vert _{L_{v}^{2}}^{2}=\int\left(  \left\vert
\widehat{f}(t,\eta,v)\right\vert ^{2}e^{\alpha\left\langle v\right\rangle
^{\gamma}}\right)  ^{\frac{1}{j+1}}\left(  \left\vert \widehat{f}%
(t,\eta,v)\right\vert ^{2}e^{-\frac{\alpha}{j}\left\langle v\right\rangle
^{\gamma}}\right)  ^{\frac{j}{j+1}}dv\\
&  \leq\left(  \int e^{-\frac{\alpha}{j}\left\langle v\right\rangle ^{\gamma}%
}\left\vert \widehat{f}\left(  t,\eta,v\right)  \right\vert ^{2}dv\right)
^{j/\left(  j+1\right)  }\left(  \int e^{\alpha\left\langle v\right\rangle
^{\gamma}}\left\vert \widehat{f}\left(  t,\eta,v\right)  \right\vert
^{2}dv\right)  ^{1/\left(  j+1\right)  }\\
&  \lesssim\left\vert \widehat{f}\left(  t,\eta,v\right)  \right\vert
_{L_{\gamma-1}^{2}}^{2j/\left(  j+1\right)  }\widetilde{\mathcal{E}%
}^{1/\left(  j+1\right)  }\left(  t,\eta\right)  \,.
\end{align*}
} Thus we conclude that
\[
\mathcal{E}^{\left(  j+1\right)  /j}\left(  t,\eta\right)  \lesssim\left\vert
\widehat{f}\left(  t,\eta,v\right)  \right\vert _{L_{\gamma-1}^{2}}%
^{2}\widetilde{\mathcal{E}}^{1/j}\left(  t,\eta\right)  \lesssim\left\vert
\widehat{f}\left(  t,\eta,v\right)  \right\vert _{L_{\gamma-1}^{2}}%
^{2}\widetilde{\mathcal{E}}^{1/j}\left(  0,\eta\right)  .
\]
Now we can rewrite $\left(  \ref{E}\right)  ,$ for any $\eta\in\mathbb{R}%
^{3},$ as
\[
\partial_{t}\mathcal{E}\left(  t,\eta\right)  +\sigma\widehat{\rho}\left(
\eta\right)  \mathcal{E}^{\left(  j+1\right)  /j}\left(  t,\eta\right)
\widetilde{\mathcal{E}}^{-1/j}\left(  0,\eta\right)  \leq0.
\]
Integrating this over time, we obtain%
\[
j\mathcal{E}^{-1/j}\left(  0,\eta\right)  -j\mathcal{E}^{-1/j}\left(
t,\eta\right)  \lesssim-t\widehat{\rho}\left(  \eta\right)  \widetilde
{\mathcal{E}}^{-1/j}\left(  0,\eta\right)  .
\]
As a consequence, for any $j\geq1,$ uniformly in $\eta\in\mathbb{R}^{3},$ we
get
\[
\mathcal{E}\left(  t,\eta\right)  \lesssim\widetilde{\mathcal{E}}\left(
0,\eta\right)  \left(  \frac{t\widehat{\rho}\left(  \eta\right)  }%
{j}+1\right)  ^{-j}.
\]

Recall that the long wave part $f_{L}$ and the short wave part $f_{S}$ of the
solution $f$ are given respectively by
\[
f_{L}=\int_{\left\vert \eta\right\vert \leq1}e^{i\eta\cdot x+\left(
-iv\cdot\eta+L\right)  t}\widehat{f_{0}}\left(  \eta,v\right)  d\eta,
\]%
\[
f_{S}=\int_{\left\vert \eta\right\vert >1}e^{i\eta\cdot x+\left(  -iv\cdot
\eta+L\right)  t}\widehat{f_{0}}\left(  \eta,v\right)  d\eta,
\]
When $\left\vert \eta\right\vert \leq1$ and $k\in{\mathbb{N\cup\{}}0\}$,
since
\[
\int_{\left\vert \eta\right\vert \leq1}|\eta|^{2k}\left(  \frac{t\left\vert
\eta\right\vert ^{2}}{j}+1\right)  ^{-j}d\eta\lesssim(1+t)^{-\frac{3}{2}%
-k}\text{ if }j>\frac{3}{2}+k,
\]
we obtain
\begin{align}
\int_{\left\vert \eta\right\vert \leq1}|\eta|^{2k}\mathcal{E}\left(
t,\eta\right)  d\eta &  \lesssim\int_{\left\vert \eta\right\vert \leq1}%
|\eta|^{2k}\left(  \frac{t\left\vert \eta\right\vert ^{2}}{j}+1\right)
^{-j}\widetilde{\mathcal{E}}\left(  0,\eta\right)  d\eta\label{Elong}\\
&  \lesssim(1+t)^{-\frac{3}{2}-k}\left\Vert e^{\frac{\alpha}{2}\left\langle
v\right\rangle ^{\gamma}}f_{0}\right\Vert _{L_{x}^{1}L_{v}^{2}}^{2},\nonumber
\end{align}
which implies that
\[
\left\Vert \nabla_{x}^{k}f_{L}\right\Vert _{L^{2}}\lesssim(1+t)^{-\frac{3}%
{4}-\frac{k}{2}}\left\Vert e^{\frac{\alpha}{2}\left\langle v\right\rangle
^{\gamma}}f_{0}\right\Vert _{L_{x}^{1}L_{v}^{2}}.
\]
By the Sobolev inequality, we get
\[
\left\Vert f_{L}\right\Vert _{L_{x}^{\infty}L_{v}^{2}}\lesssim\left\Vert
\nabla_{x}^{2}f_{L}\right\Vert _{L^{2}}^{3/4}\left\Vert f_{L}\right\Vert
_{L^{2}}^{1/4}\lesssim(1+t)^{-\frac{3}{2}}\left\Vert f_{0}\right\Vert
_{L_{x}^{1}L_{v}^{2}\left(  e^{\alpha\left\langle v\right\rangle ^{\gamma}%
}\right)  }.
\]

{When $\left\vert \eta\right\vert >1,$ we note that the equations \ref{E-N}
and \ref{EL} for $f_{S}$ are similar to \ref{3} and \ref{4} for $h^{(0)}$.
Then following the similar procedure of the proof,} it implies
\[
\left\Vert f_{S}\right\Vert _{L^{2}}\lesssim e^{-c_{\gamma}\alpha
^{\frac{2\left(  1-\gamma\right)  }{2-\gamma}}t^{\frac{\gamma}{2-\gamma}}%
}\left\Vert f_{0}\right\Vert _{L^{2}(e^{\alpha\left\langle v\right\rangle
^{\gamma}})}\,,\ \ t\geq0,
\]
for some constant $c_{\gamma}>0.$

To sum up, we have the following proposition:

\begin{proposition}
\label{LS-estimate2}Let $0<\gamma<1\ $and let $f$ be the solution of equation
$\left(  \ref{in.1.c}\right)  .$ For any $\alpha>0$ small with $\alpha
\gamma<1/20$, we have \newline\noindent\textrm{$(i)$ (Long wave $f_{L}$)}
\begin{equation}
\left\Vert f_{L}\right\Vert _{L_{x}^{\infty}L_{v}^{2}}\lesssim(1+t)^{-\frac
{3}{2}}\left\Vert f_{0}\right\Vert _{L_{x}^{1}L_{v}^{2}\left(  e^{\alpha
\left\langle v\right\rangle ^{\gamma}}\right)  }. \label{f_S-long}%
\end{equation}

\noindent\textrm{$(ii)$ (Short wave $f_{S}$)} There exists $c_{\gamma}>0$ such
that
\begin{equation}
\left\Vert f_{S}\right\Vert _{L^{2}}\lesssim e^{-c_{\gamma}\alpha
^{\frac{2\left(  1-\gamma\right)  }{2-\gamma}}t^{\frac{\gamma}{2-\gamma}}%
}\left\Vert f_{0}\right\Vert _{L^{2}(e^{\alpha\left\langle v\right\rangle
^{\gamma}})}\,. \label{f_S-soft}%
\end{equation}

\end{proposition}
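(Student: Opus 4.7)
The plan is to derive part (i) from the Lyapunov inequality for $\mathcal{E}(t,\eta)$ via Plancherel and Sobolev embedding, and to derive part (ii) by running a Caflisch-type velocity-splitting on the short-wave energy identities, which closely mirror the analysis already carried out for $h^{(0)}$.

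For part (i), the starting point is the pointwise-in-frequency bound $\mathcal{E}(t,\eta)\lesssim\widetilde{\mathcal{E}}(0,\eta)\bigl(t\widehat{\rho}(\eta)/j+1\bigr)^{-j}$ that has already been obtained by interpolating \eqref{E} against \eqref{EL} through H\"{o}lder. Restricting to $|\eta|\leq 1$, where $\widehat{\rho}(\eta)=|\eta|^{2}$, and using the $x$-compact support of $f_{0}$ together with the velocity weight in the definition of $\widetilde{\mathcal{E}}$, I would bound $\widetilde{\mathcal{E}}(0,\eta)$ uniformly by $\|e^{\frac{\alpha}{2}\langle v\rangle^{\gamma}}f_{0}\|_{L_{x}^{1}L_{v}^{2}}^{2}$. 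Then for each $k\in\{0,1,2\}$, choosing $j>3/2+k$ in the interpolation and applying Plancherel to the weighted Gaussian-type integral
\[
\int_{|\eta|\leq 1}|\eta|^{2k}\Bigl(\tfrac{t|\eta|^{2}}{j}+1\Bigr)^{-j}d\eta\lesssim(1+t)^{-3/2-k}
\]
yields $\|\nabla_{x}^{k}f_{L}\|_{L^{2}}\lesssim(1+t)^{-3/4-k/2}\|f_{0}\|_{L_{x}^{1}L_{v}^{2}(e^{\alpha\langle v\rangle^{\gamma}})}$. Interpolating the cases $k=0$ and $k=2$ through the Sobolev embedding $\|g\|_{L_{x}^{\infty}L_{v}^{2}}\lesssim\|\nabla_{x}^{2}g\|_{L^{2}}^{3/4}\|g\|_{L^{2}}^{1/4}$ quoted earlier gives \eqref{f_S-long}.

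For part (ii), I would observe that on the short-wave region $|\eta|>1$ one has $\widehat{\rho}(\eta)\equiv 1$, so the differential inequalities \eqref{E-N} and \eqref{EL} collapse to the same structural form as the unweighted inequality \eqref{3} and the weighted inequality \eqref{4} used in the proof of Proposition \ref{pointwise-wave part <1} for $h^{(0)}$. Concretely, setting $F(t):=\int_{|\eta|>1}\mathcal{E}(t,\eta)d\eta$ and $G(t):=\int_{|\eta|>1}\widetilde{\mathcal{E}}(t,\eta)d\eta$, the inequalities read $\dot F+\sigma\int_{|\eta|>1}|\widehat{f}|_{L^{2}_{\gamma-1}}^{2}d\eta\leq 0$ and $\dot G+\sigma\int_{|\eta|>1}|e^{\frac{\alpha}{2}\langle v\rangle^{\gamma}}\widehat{f}|_{L^{2}_{\gamma-1}}^{2}d\eta\leq 0$. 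Splitting the velocity integration by the time-dependent threshold $\langle v\rangle\leq\beta t^{1/(2-\gamma)}$ vs.\ its complement, controlling the low-velocity part by the dissipation in $F$ (where $\langle v\rangle^{2\gamma-2}\gtrsim(\beta t^{1/(2-\gamma)})^{2\gamma-2}$) and the high-velocity part by the trivial bound $e^{-\alpha\langle v\rangle^{\gamma}}\leq e^{-\alpha\beta^{\gamma}t^{\gamma/(2-\gamma)}}$ against $G(t)\leq G(0)$, and then optimizing in $\beta$, reproduces exactly the $\exp\bigl(-c_{\gamma}\alpha^{2(1-\gamma)/(2-\gamma)}t^{\gamma/(2-\gamma)}\bigr)$ decay already obtained for $h^{(0)}$, giving \eqref{f_S-soft}.

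The main obstacle in my view is organizing the short-wave argument carefully so that the constant $\alpha$ ends up in the stated power $\alpha^{2(1-\gamma)/(2-\gamma)}$ in the exponent: the low-velocity dissipation yields a factor $\beta^{2(\gamma-1)}t^{(2\gamma-2)/(2-\gamma)}$, whereas the high-velocity control yields $\alpha\beta^{\gamma}t^{\gamma/(2-\gamma)}$, and balancing these in $\beta$ is exactly the step that fixes the exponent in $\alpha$. The long-wave step is technically easier, the only delicate point being to verify that $\widetilde{\mathcal{E}}(0,\eta)$ is uniformly bounded by the $L^{1}_{x}L^{2}_{v}(e^{\alpha\langle v\rangle^{\gamma}})$-norm of $f_{0}$, which follows from Hausdorff--Young since $f_{0}$ is compactly supported in $x$.
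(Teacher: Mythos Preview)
Your proposal is correct and follows essentially the same route as the paper: for part~(i) the paper likewise integrates the interpolated bound $\mathcal{E}(t,\eta)\lesssim\widetilde{\mathcal{E}}(0,\eta)(t\widehat{\rho}(\eta)/j+1)^{-j}$ against $|\eta|^{2k}$ over $|\eta|\le1$ and then applies the Sobolev inequality $\|g\|_{L^\infty_xL^2_v}\lesssim\|\nabla_x^2 g\|_{L^2}^{3/4}\|g\|_{L^2}^{1/4}$; for part~(ii) the paper explicitly notes that on $|\eta|>1$ the inequalities \eqref{E-N}, \eqref{EL} have the same structure as \eqref{3}, \eqref{4} for $h^{(0)}$ and then invokes the Caflisch/Strain--Guo velocity-splitting argument with $p'=1/(2-\gamma)$, exactly as you outline. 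One minor remark: the bound $\widetilde{\mathcal{E}}(0,\eta)\lesssim\|e^{\frac{\alpha}{2}\langle v\rangle^\gamma}f_0\|_{L^1_xL^2_v}^2$ does not actually require the compact $x$-support of $f_0$---it is just Minkowski's inequality $|\widehat{f_0}(\eta,\cdot)|_{L^2_v}\le\|f_0\|_{L^1_xL^2_v}$.
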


Based on the long wave-short wave decomposition and wave -remainder
decomposition, i.e.,
\[
f=f_{L}+f_{S}=W^{\left(  3\right)  }+\mathcal{R}^{(3)},
\]
we now define the tail part as $f_{R}=\mathcal{R}^{(3)}-f_{L}=f_{S}-W^{\left(
3\right)  },$ which leads to that $f$ can be written as $f=W^{\left(
3\right)  }+f_{L}+f_{R}.$ From Lemma \ref{second-der},
\begin{equation}
\left\Vert \mathcal{R}^{(3)}(t)\right\Vert _{H_{x}^{2}L_{v}^{2}}\lesssim
\int_{0}^{t}\left\Vert h^{(3)}(s)\right\Vert _{H_{x}^{2}L_{v}^{2}}%
ds\lesssim\left(  1+t^{4}\right)  \left\Vert f_{0}\right\Vert _{L^{2}},
\label{Prop19-1}%
\end{equation}
and so
\[
\left\Vert f_{R}\right\Vert _{H_{x}^{2}L_{v}^{2}}=\left\Vert \mathcal{R}%
^{(3)}-f_{L}\right\Vert _{H_{x}^{2}L_{v}^{2}}\lesssim\left(  1+t^{4}\right)
\left\Vert f_{0}\right\Vert _{L^{2}},\ \ t>0.
\]
In view of Proposition \ref{pointwise-wave part <1} and Proposition
\ref{LS-estimate2},
\[
\left\Vert f_{R}\right\Vert _{L^{2}}=\left\Vert f_{S}-W^{\left(  3\right)
}\right\Vert _{L^{2}}\lesssim e^{-\frac{c_{\gamma}}{2}\alpha^{\frac{2\left(
1-\gamma\right)  }{2-\gamma}}t^{\frac{\gamma}{2-\gamma}}}\left\Vert
f_{0}\right\Vert _{L^{2}(e^{4\alpha\left\langle v\right\rangle ^{\gamma}}%
)},\ \ t>0.\
\]
The Sobolev inequality implies
\begin{equation}
\left\vert f_{R}\right\vert _{L_{v}^{2}}\leq\left\Vert f_{R}\right\Vert
_{L_{v}^{2}L_{x}^{\infty}}\lesssim\left\Vert f_{R}\right\Vert _{H_{x}^{2}%
L_{v}^{2}}^{3/4}\left\Vert f_{R}\right\Vert _{L^{2}}^{1/4}\lesssim
e^{-\frac{c_{\gamma}}{16}\alpha^{\frac{2\left(  1-\gamma\right)  }{2-\gamma}%
}t^{\frac{\gamma}{2-\gamma}}}\left\Vert f_{0}\right\Vert _{L^{2}%
(e^{4\alpha\left\langle v\right\rangle ^{\gamma}})},\ \ t>0.
\label{fR-gamma<1}%
\end{equation}

Combining (\ref{Wave-ptw-gamma<1}), (\ref{f_S-long}) and (\ref{fR-gamma<1}),
we obtain the pointwise estimate for the solution in the time-like region.

\begin{theorem}
[Time-like region for $0<\gamma<1$]%
\label{time-like region for gamma less than 1} Let $0<\gamma<1$ and let $f$ be
the solution to equation (\ref{in.1.c}). Assume that the initial condition
$f_{0}$ has compact support in the $x$ variable and is bounded in $L_{v}%
^{2}(e^{4\alpha\left\langle v\right\rangle ^{\gamma}})$. Then for $\alpha>0$
is small enough, there exists a positive constant $c_{\gamma}$ such that
\begin{equation}
\left\vert f\right\vert _{L_{v}^{2}}\lesssim\left[  (1+t)^{-3/2}%
+(1+t^{-9/4})e^{-c_{\gamma}\alpha^{\frac{2\left(  1-\gamma\right)  }{2-\gamma
}}t^{\frac{\gamma}{2-\gamma}}}\right]  \Vert f_{0}\Vert_{L_{x}^{\infty}%
L_{v}^{2}(e^{4\alpha\left\langle v\right\rangle ^{\gamma}})}\,.
\end{equation}

\end{theorem}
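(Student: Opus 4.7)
\smallskip

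The plan is to combine the two decompositions that have been set up in the preceding subsection, namely the wave--remainder decomposition $f=W^{(3)}+\mathcal{R}^{(3)}$ and the long wave--short wave decomposition $f=f_{L}+f_{S}$, into the hybrid representation $f=W^{(3)}+f_{L}+f_{R}$, where the tail is defined by $f_{R}:=\mathcal{R}^{(3)}-f_{L}=f_{S}-W^{(3)}$. The three summands play different roles: $W^{(3)}$ carries the short-time singular behaviour (the factor $t^{-9/4}$), $f_{L}$ produces the optimal polynomial decay $(1+t)^{-3/2}$ in the interior of the wave cone, and $f_{R}$ is expected to be subexponentially small. Once each piece is estimated separately, the triangle inequality yields the claimed bound.

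The contribution of $W^{(3)}=\sum_{j=0}^{3}h^{(j)}$ is already explicit from Proposition \ref{pointwise-wave part <1}: summing the $L_{v}^{2}$ bounds for $j=0,1,2,3$ gives
\[
|W^{(3)}(t,x,v)|_{L_{v}^{2}}\lesssim(1+t^{-9/4})e^{-\frac{1}{4}c_{\gamma}\alpha^{\frac{2(1-\gamma)}{2-\gamma}}t^{\frac{\gamma}{2-\gamma}}}\,\Vert f_{0}\Vert_{L^{2}(e^{4\alpha\langle v\rangle^{\gamma}})},
\]
where the compactness of the $x$-support is used to upgrade the $L^{2}$ norm of $f_{0}$ on the right-hand side to $L_{x}^{\infty}L_{v}^{2}$. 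For the long wave, Proposition \ref{LS-estimate2}(i) together with the compact $x$-support of $f_{0}$ gives $\Vert f_{L}\Vert_{L_{x}^{\infty}L_{v}^{2}}\lesssim(1+t)^{-3/2}\Vert f_{0}\Vert_{L^{2}(e^{\alpha\langle v\rangle^{\gamma}})}$, which is already of the desired pointwise form.

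The main obstacle is the tail $f_{R}$, since $f_{S}$ has only an $L^{2}$ bound and $\mathcal{R}^{(3)}$ is at best polynomial in $t$; neither one is directly pointwise. I would overcome this via a Sobolev interpolation, exactly as is done at the end of the $\gamma\ge1$ case. First, Lemma \ref{second-der} (together with the bound $\Vert\mathbb{G}^{t}\Vert_{L^{2}\to L^{2}}\lesssim1$ coming from Lemma \ref{co}) yields $\Vert \mathcal{R}^{(3)}\Vert_{H_{x}^{2}L_{v}^{2}}\lesssim(1+t^{4})\Vert f_{0}\Vert_{L^{2}}$, and combining with the trivial $H_{x}^{2}$-bound on $f_{L}$ gives a polynomial $H_{x}^{2}L_{v}^{2}$ bound on $f_{R}$. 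Second, using $f_{R}=f_{S}-W^{(3)}$, Proposition \ref{LS-estimate2}(ii) and Proposition \ref{pointwise-wave part <1} together deliver
\[
\Vert f_{R}\Vert_{L^{2}}\lesssim e^{-\frac{c_{\gamma}}{2}\alpha^{\frac{2(1-\gamma)}{2-\gamma}}t^{\frac{\gamma}{2-\gamma}}}\Vert f_{0}\Vert_{L^{2}(e^{4\alpha\langle v\rangle^{\gamma}})}.
\]
The Sobolev embedding $\Vert g\Vert_{L_{x}^{\infty}L_{v}^{2}}\lesssim\Vert g\Vert_{H_{x}^{2}L_{v}^{2}}^{3/4}\Vert g\Vert_{L^{2}}^{1/4}$ then interpolates these two estimates; the polynomial prefactor from $H_{x}^{2}$ is absorbed into the subexponential decay, yielding
\[
|f_{R}|_{L_{v}^{2}}\lesssim e^{-\frac{c_{\gamma}}{16}\alpha^{\frac{2(1-\gamma)}{2-\gamma}}t^{\frac{\gamma}{2-\gamma}}}\Vert f_{0}\Vert_{L^{2}(e^{4\alpha\langle v\rangle^{\gamma}})}.
\]

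Summing the three pointwise bounds gives the theorem. The delicate point is ensuring the velocity weights are compatible throughout: the $(j+1)\alpha$ weight appearing in Proposition \ref{pointwise-wave part <1} at $j=3$ forces the final weight $e^{4\alpha\langle v\rangle^{\gamma}}$ on $f_{0}$, and the smallness constraint $\alpha\gamma<1/20$ required in the weighted energy estimate of Proposition \ref{LS-estimate2} is what is meant by the phrase ``$\alpha>0$ sufficiently small''. Once those are in order, no further estimates are needed.
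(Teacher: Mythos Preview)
Your proposal is correct and follows essentially the same approach as the paper: the hybrid decomposition $f=W^{(3)}+f_{L}+f_{R}$ with $f_{R}=\mathcal{R}^{(3)}-f_{L}=f_{S}-W^{(3)}$, the pointwise bounds on $W^{(3)}$ and $f_{L}$ from Propositions \ref{pointwise-wave part <1} and \ref{LS-estimate2}, and the Sobolev interpolation between the polynomial $H_{x}^{2}L_{v}^{2}$ bound on $\mathcal{R}^{(3)}-f_{L}$ and the subexponential $L^{2}$ bound on $f_{S}-W^{(3)}$ are exactly what the paper does, down to the constants $c_{\gamma}/16$ and the weight $e^{4\alpha\langle v\rangle^{\gamma}}$.
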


\section{In the space-like region}

\label{layer} We have finished the estimate of solution inside the time-like
region. To have the global picture of the space-time structure of solution, we
still need to investigate the solution in the space-like region. To this end,
we shall estimate the wave part $W^{(3)}$ and the remainder part
$\mathcal{R}^{(3)}$ separately. Here, the weighted energy estimate plays a
decisive role.


\subsection{The case $\gamma\geq3/2$: Exponential decay}

\begin{proposition}
\label{weig_1} Consider the weight functions
\[
w(x,t)=e^{\frac{\left\langle x\right\rangle -Mt}{2D}}\,,\quad\mu
(x)=e^{\frac{\left\langle x\right\rangle }{D}}\,,
\]
where $D$ and $M$ are chosen sufficiently large. Then for $0\leq j\leq3$, we
have%
\begin{equation}
\Vert wh^{(j)}\Vert_{H_{x}^{2}L_{v}^{2}}\lesssim t^{-3+j}\Vert f_{0}%
\Vert_{L^{2}(\mu)},\ \ \ \ \ 0<t\leq1, \label{energy 1}%
\end{equation}%
\begin{equation}
\Vert wh^{(j)}\Vert_{H_{x}^{2}L_{v}^{2}}\lesssim e^{-C t}\Vert f_{0}%
\Vert_{L^{2}(\mu)},\ \ \ \ \ t>1, \label{energy 2}%
\end{equation}
and
\begin{equation}
\Vert w\mathcal{R}^{(3)}\Vert_{H_{x}^{2}L_{v}^{2}}\lesssim\Vert f_{0}%
\Vert_{L^{2}(\mu)},\ \ \ \ \ \ t>0. \label{energy 3}%
\end{equation}

\end{proposition}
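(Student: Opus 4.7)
The plan is to reduce all three bounds to a single weighted $L^2$ energy identity with the time-dependent weight $w^2(x,t)=e^{(\langle x\rangle-Mt)/D}$ and then lift to $H_x^2$ by differentiating the equations in $x$. The decisive algebraic point is
\[
\pa_t(w^2)+v\cdot\nabla_x(w^2) \;=\; \frac{w^2}{D}\Bigl(-M+\frac{v\cdot x}{\langle x\rangle}\Bigr),
\]
so transporting $w^2$ along characteristics produces a favorable $-\frac{M}{D}w^2$ together with a bad error of size $\leq\frac{|v|}{D}w^2$. Since $\ga\geq 3/2$ gives $2\ga-2\geq 1$, the coercivity of $\Lambda$ in Lemma \ref{prop1} dissipates $\|wg\|_{L_\sigma^2}^2\gtrsim \int w^2\langle v\rangle|g|^2\,dxdv$, so the error is absorbed for $D$ large and the $-M/D$ term remains as exponential in-time damping. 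Because $\Lambda$, $K$ and $v\cdot\nabla_x$ are $x$-independent, every $\pa_x^\al h^{(j)}$ solves an equation of the same form as $h^{(j)}$ and can be treated identically.

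\textbf{Large-time bound \eqref{energy 2}.} Applied to $\pa_t h^{(0)}+v\cdot\nabla_x h^{(0)}+\Lambda h^{(0)}=0$, the above identity yields
\[
\tfrac{d}{dt}\|wh^{(0)}\|_{L^2}^2+c_0\|wh^{(0)}\|_{L_\sigma^2}^2\;\leq\;-\tfrac{M}{2D}\|wh^{(0)}\|_{L^2}^2,
\]
and the initial value $\|w(\cdot,0)f_0\|_{L^2}^2=\|f_0\|_{L^2(\mu)}^2$ gives exponential decay. For $j\geq 1$ the source $Kh^{(j-1)}$ is $v$-supported in $|v|\leq 2R$, so $\|wKh^{(j-1)}\|_{L^2}\lesssim \|wh^{(j-1)}\|_{L^2}$; a Duhamel/Gronwall iteration produces $\|wh^{(j)}\|_{L^2}\lesssim t^je^{-Ct}\|f_0\|_{L^2(\mu)}$. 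Running the same estimate at the level of $\pa_x^\al h^{(j)}$ for $|\al|\leq 2$ completes \eqref{energy 2}.

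\textbf{Small-time bound \eqref{energy 1}.} This is the main obstacle, as it requires combining the hypocoercive regularization with the weight $w$. I would adapt the functional from Lemma \ref{regularization},
\[
\mathcal F(t,g):=A\|g\|_{L^2(w^2)}^2+at\|\nabla_v g\|_{L^2(w^2)}^2+2ct^2\langle\nabla_x g,\nabla_v g\rangle_{L^2(w^2)}+bt^3\|\nabla_x g\|_{L^2(w^2)}^2,
\]
replacing $\mu$ by $w^2$. Since $\nabla_v(w^2)=0$ the commutator structure from that proof survives verbatim; the only new contributions come from $\pa_t(w^2)=-\frac{M}{D}w^2$ (good sign) and from terms of the form $\int w^2(v\cdot x/\langle x\rangle)|\pa_x^k g|^2$, which are absorbed into the $\langle v\rangle^{2\ga-2}$ coercivity for $D$ large, exactly as in the base energy identity. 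With the same choices $a=\eps$, $4b=c=\eps^{3/2}$, $A$ large, monotonicity of $\mathcal F(t,h^{(0)}_t)$ on $(0,1]$ gives $\|\nabla_x^2 h^{(0)}\|_{L^2(w^2)}\lesssim t^{-3/2}\|f_0\|_{L^2(\mu)}$. Iterating in $j$ through Duhamel's formula and splitting time intervals exactly as in the proof of Lemma \ref{second-der} (using the weighted semigroup bound $\|e^{t\mathcal L}h_0\|_{L^2(w^2)}\lesssim\|h_0\|_{L^2(\mu)}$ established in the previous paragraph) produces the singular rate $t^{-3+j}$ for each $0\leq j\leq 3$.

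\textbf{Remainder bound \eqref{energy 3}.} The remainder solves $\pa_t\mathcal R^{(3)}+v\cdot\nabla_x\mathcal R^{(3)}+\Lambda\mathcal R^{(3)}=K\mathcal R^{(3)}+Kh^{(3)}$ with zero data. The weighted identity, with $\int w^2 K\mathcal R^{(3)}\cdot\mathcal R^{(3)}\leq \varpi\|w\mathcal R^{(3)}\|_{L^2}^2$ absorbed by the $-M/D$ damping once $M$ is chosen $\gg\varpi$, yields
\[
\tfrac{d}{dt}\|w\mathcal R^{(3)}\|_{L^2}^2+c\|w\mathcal R^{(3)}\|_{L_\sigma^2}^2\;\lesssim\;\|wh^{(3)}\|_{L^2}^2.
\]
By \eqref{energy 1}--\eqref{energy 2} applied with $j=3$, the right-hand side is integrable in $t$ on $(0,\infty)$, so $\|w\mathcal R^{(3)}\|_{L^2}\lesssim\|f_0\|_{L^2(\mu)}$, and repeating the argument with $\pa_x^\al$ ($|\al|\leq 2$) upgrades to the $H_x^2$ bound. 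The principal technical burden throughout is bookkeeping the mixed-derivative terms produced by $\nabla_x(w^2)$ inside $\mathcal F$, and checking that the $\ga\geq 3/2$ coercivity still dominates them after $D$ is chosen large.
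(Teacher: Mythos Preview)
Your argument is correct, but for \eqref{energy 1} and \eqref{energy 2} you work much harder than the paper does. The paper's observation is simply that $w(x,t)$ is non-increasing in $t$ and $w(x,0)^2=\mu(x)$, hence $w(x,t)^2\leq\mu(x)$ pointwise for all $t\geq 0$. This immediately gives
\[
\|wh^{(j)}(t)\|_{H_x^2L_v^2}\;\leq\;\|h^{(j)}(t)\|_{H_x^2L_v^2(\mu)},
\]
and the right-hand side has already been estimated in Lemma~\ref{second-der} (with weight $\mu=e^{\langle x\rangle/D}$), yielding \eqref{energy 1} and \eqref{energy 2} in one line. You instead rerun the hypocoercive regularization functional $\mathcal F$ with the time-dependent weight $w^2$ in place of $\mu$; this does go through (as you note, $\nabla_v w^2=0$ and the extra $\pa_t w^2$ term has the right sign, while the $v\cdot\nabla_x w^2$ error is absorbed by the $\langle v\rangle^{2\ga-2}$ coercivity when $\ga\geq 3/2$), but it duplicates the entire machinery of Lemmas~\ref{regularization} and~\ref{second-der} unnecessarily.

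For \eqref{energy 3} your argument is essentially the same as the paper's: both set $u=w\mathcal R^{(3)}$, perform the energy estimate, absorb the transport error $\tfrac{1}{D}\tfrac{x\cdot v}{\langle x\rangle}|u|^2$ into the dissipation using $|v|\leq\langle v\rangle^{2\ga-2}$, and use the $-\tfrac{M}{2D}$ damping to kill the leftover zeroth-order terms. The paper keeps the operator as $L$ rather than splitting off $K\mathcal R^{(3)}$, but this is cosmetic. One small caveat in your presentation: the damping coefficient is $M/(2D)$, not $M$, so when you write ``$M\gg\varpi$'' you should really be tracking $M/D$; since $D$ is fixed first (large enough to absorb the transport error into coercivity) and then $M$ is chosen, this is fine, but the order of quantifiers matters.
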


\begin{proof}
In view of that $w(x,t)$ is non-increasing in $t$, it is not hard to verify
that
\[
\left\Vert wg(t)\right\Vert _{H_{x}^{2}L_{v}^{2}}\lesssim\left\Vert
g(t)\right\Vert _{H_{x}^{2}L_{v}^{2}(\mu)}\,.
\]
Then the weighted energy inequalities (\ref{energy 1}) and (\ref{energy 2})
follow from Lemma \ref{second-der} directly.

It remains to show the weighted energy estimates for the remainder part
$\mathcal{R}^{(3)}$, $t>0$. We shall demonstrate that
\[
\Vert w\mathcal{R}^{(3)}\Vert_{H_{x}^{2}L_{v}^{2}}\lesssim\left(  1+t\right)
\Vert f_{0}\Vert_{L^{2}(\mu)},\ \ \ \ \ \ t>0.
\]
To see this, let $u=w\mathcal{R}^{(3)}$and then $\partial_{x}^{\beta}u,$ where
$\beta$ is a multi-index, solves the equation
\begin{align*}
\partial_{t}\left(  \partial_{x}^{\beta}u\right)   &  =-v\cdot\nabla
_{x}\left(  \partial_{x}^{\beta}u\right)  -\frac{1}{2D}\left(  M-\frac{x\cdot
v}{\left\langle x\right\rangle }\right)  \partial_{x}^{\beta}u+L\partial
_{x}^{\beta}u+K\partial_{x}^{\beta}\left(  wh^{(3)}\right)  \,\\
&  \quad+\frac{1}{2D}\sum_{\substack{\beta_{1}+\beta_{2}=\beta\\\left\vert
\beta_{1}\right\vert \geq1}}\binom{\beta}{\beta_{1}\,\beta_{2}}\partial
_{x}^{\beta_{1}}\left(  \frac{x}{\left\langle x\right\rangle }\right)  \cdot
v\partial_{x}^{\beta_{2}}u.
\end{align*}
The energy estimate gives
\begin{align*}
\frac{1}{2}\partial_{t}\left\Vert \partial_{x}^{\beta}u\right\Vert _{L^{2}%
}^{2}  &  =-\frac{1}{2D}\int\left(  M-\frac{x\cdot v}{\left\langle
x\right\rangle }\right)  \left\vert \partial_{x}^{\beta}u\right\vert
^{2}dxdv+\int\left(  L\partial_{x}^{\beta}u\right)  \partial_{x}^{\beta
}udxdv\\
&  \quad+\frac{1}{2D}\int\sum_{\substack{\beta_{1}+\beta_{2}=\beta\\\left\vert
\beta_{1}\right\vert \geq1}}\binom{\beta}{\beta_{1}\,\beta_{2}}\partial
_{x}^{\beta_{1}}\left(  \frac{x}{\left\langle x\right\rangle }\right)  \cdot
v\partial_{x}^{\beta_{2}}u\partial_{x}^{\beta}udxdv\\
&  \quad+\int\partial_{x}^{\beta}uK\partial_{x}^{\beta}\left(  wh^{(3)}%
\right)  dxdv.
\end{align*}
Note that $2\gamma-2\geq1$ if $\gamma\geq3/2,$ and recall that $\Lambda=-L+K,$
hence%
\begin{align*}
\left\vert \int\frac{x\cdot v}{\left\langle x\right\rangle }\left\vert
\partial_{x}^{\beta}u\right\vert ^{2}dxdv\right\vert  &  \leq\int\left\langle
v\right\rangle ^{2\gamma-2}\left\vert \partial_{x}^{\beta}u\right\vert
^{2}dxdv\lesssim\int\left(  \Lambda\partial_{x}^{\beta}u\right)  \partial
_{x}^{\beta}udxdv\\
&  \lesssim-\int\left(  L\partial_{x}^{\beta}u\right)  \partial_{x}^{\beta
}udxdv+\int\left\vert \partial_{x}^{\beta}u\right\vert ^{2}dxdv,
\end{align*}
and
\begin{align*}
&  \quad\left\vert \int\sum_{\substack{\beta_{1}+\beta_{2}=\beta\\\left\vert
\beta_{1}\right\vert \geq1}}\binom{\beta}{\beta_{1}\,\beta_{2}}\partial
_{x}^{\beta_{1}}\left(  \frac{x}{\left\langle x\right\rangle }\right)  \cdot
v\partial_{x}^{\beta_{2}}u\partial_{x}^{\beta}udxdv\right\vert \\
&  \lesssim\int\sum_{\substack{\beta_{1}+\beta_{2}=\beta\\\left\vert \beta
_{1}\right\vert \geq1}}\left\langle v\right\rangle ^{2\gamma-2}\left\vert
\partial_{x}^{\beta_{2}}u\partial_{x}^{\beta}u\right\vert dxdv\\
&  \lesssim\int\sum_{\substack{\beta_{1}+\beta_{2}=\beta\\\left\vert \beta
_{1}\right\vert \geq1}}\left(  \left(  -L\partial_{x}^{\beta_{2}}u\right)
\partial_{x}^{\beta_{2}}u+\left(  -L\partial_{x}^{\beta}u\right)  \partial
_{x}^{\beta}u\right)  dxdv+\int\sum_{\substack{\beta_{1}+\beta_{2}%
=\beta\\\left\vert \beta_{1}\right\vert \geq1}}\left(  \left\vert \partial
_{x}^{\beta_{2}}u\right\vert ^{2}+\left\vert \partial_{x}^{\beta}u\right\vert
^{2}\right)  dxdv.
\end{align*}
Also,
\[
\left\vert \int\partial_{x}^{\beta}uK\partial_{x}^{\beta}\left(
wh^{(3)}\right)  dxdv\right\vert \lesssim\Vert\partial_{x}^{\beta}%
u\Vert_{_{L^{2}}}\Vert\partial_{x}^{\beta}\left(  wh^{(3)}\right)
\Vert_{_{L^{2}}}.
\]
After choosing $D$ and $M$ large enough, we have
\[
\frac{d}{dt}\Vert u\Vert_{H_{x}^{2}L_{v}^{2}}\lesssim\Vert wh^{(3)}%
\Vert_{_{H_{x}^{2}L_{v}^{2}}}\,.
\]
Hence, it follows from (\ref{energy 1}) and (\ref{energy 2}) that
\[
\Vert u\Vert_{H_{x}^{2}L_{v}^{2}}\left(  t\right)  \lesssim\int_{0}^{t}\Vert
wh^{(3)}\Vert_{_{H_{x}^{2}L_{v}^{2}}}\,\left(  s\right)  ds\lesssim\Vert
f_{0}\Vert_{L^{2}(\mu)}\,.
\]

\end{proof}

Note that $w(x,t)\geq e^{\frac{\left\langle x\right\rangle +2Mt}{8D}}$ if
$\left\langle x\right\rangle \geq2Mt$, hence for $\gamma\geq3/2$, the Sobolev
inequality implies%
\begin{align*}
e^{\frac{\left\langle x\right\rangle +2Mt}{8D}}|f|_{L_{v}^{2}}  &  \leq
\sum_{j=0}^{3}\left\vert wh^{\left(  j\right)  }\right\vert _{L_{v}^{2}%
}+\left\vert w\mathcal{R}^{(3)}\right\vert _{L_{v}^{2}}\\
&  \lesssim\sum_{j=0}^{3}\left\Vert wh^{\left(  j\right)  }\right\Vert
_{H_{x}^{2}L_{v}^{2}}^{3/4}\left\Vert wh^{\left(  j\right)  }\right\Vert
_{L^{2}}^{1/4}+\left\Vert w\mathcal{R}^{\left(  3\right)  }\right\Vert
_{H_{x}^{2}L_{v}^{2}}^{3/4}\left\Vert w\mathcal{R}^{\left(  3\right)
}\right\Vert _{L^{2}}^{1/4}\\
&  \lesssim\left(  t^{-9/4}+1\right)  \Vert f_{0}\Vert_{L^{2}(\mu)}\\
&  \lesssim\left(  t^{-9/4}+1\right)  \Vert f_{0}\Vert_{L^{2}}\,.
\end{align*}
The last inequality is due to the compact support assumption of the initial data.

\begin{theorem}
[Space-like region for $\gamma\geq3/2$]%
\label{space-like region for gamma greater or equal to 3/2} Let $\gamma
\geq3/2$ and let $f$ be the solution to equation (\ref{in.1.c}). Assume that
the initial condition $f_{0}$ has compact support in the $x$ variable and is
bounded in $L_{v}^{2}$. Then there exists a large positive constant $M$ such
that if $\left\langle x\right\rangle >2Mt$, we have
\[
\left\vert f\right\vert _{L_{v}^{2}}\lesssim(1+t^{-9/4})e^{-C\left(
\left\langle x\right\rangle +t\right)  }\Vert f_{0}\Vert_{L^{2}}\,,
\]
here $C=C(M)$ is a positive constant.
\end{theorem}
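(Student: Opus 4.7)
My plan is to combine the wave--remainder decomposition $f=\sum_{j=0}^{3}h^{(j)}+\mathcal{R}^{(3)}$ with the weighted $H_{x}^{2}L_{v}^{2}$ estimates of Proposition \ref{weig_1}, and to convert these into a pointwise bound via Sobolev embedding in the spatial variable. The crucial elementary observation is that when $\left\langle x\right\rangle \geq 2Mt$, the time--space weight $w(x,t)=e^{(\left\langle x\right\rangle -Mt)/(2D)}$ satisfies
\[
w(x,t)\geq e^{(\left\langle x\right\rangle +2Mt)/(8D)},
\]
because in this region $\left\langle x\right\rangle -Mt\geq\tfrac{1}{4}\left\langle x\right\rangle +\tfrac{1}{2}Mt$. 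This lower bound is precisely the mechanism that converts a uniform $L_{x}^{\infty}L_{v}^{2}$ bound on $wf$ into the desired exponential pointwise decay $e^{-C(\left\langle x\right\rangle +t)}$ with $C=1/(8D)$.

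First I would estimate each wave component $wh^{(j)}$, $j=0,1,2,3$, by interpolating with the Sobolev inequality $\|g\|_{L_{x}^{\infty}L_{v}^{2}}\lesssim\|g\|_{H_{x}^{2}L_{v}^{2}}^{3/4}\|g\|_{L^{2}}^{1/4}$. Proposition \ref{weig_1} already supplies the $H_{x}^{2}L_{v}^{2}$ side, namely $\|wh^{(j)}\|_{H_{x}^{2}L_{v}^{2}}\lesssim t^{-3+j}$ for $0<t\leq 1$ and $\lesssim e^{-Ct}$ for $t>1$. The missing $L^{2}$ side follows from Lemma \ref{initial-sing} together with the pointwise bound $w(x,t)\leq\mu(x)^{1/2}$ and the compact-support hypothesis $\|f_{0}\|_{L^{2}(\mu)}\lesssim\|f_{0}\|_{L^{2}}$. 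Interpolating yields
\[
\|wh^{(j)}\|_{L_{x}^{\infty}L_{v}^{2}}\lesssim\bigl(t^{-9/4+j}+e^{-Ct}\bigr)\|f_{0}\|_{L^{2}},
\]
and the dominant small-time singularity $t^{-9/4}$ in the theorem comes precisely from $j=0$.

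Next, Proposition \ref{weig_1} supplies the uniform bound $\|w\mathcal{R}^{(3)}\|_{H_{x}^{2}L_{v}^{2}}\lesssim\|f_{0}\|_{L^{2}}$, and the companion $L^{2}$ estimate $\|w\mathcal{R}^{(3)}\|_{L^{2}}\lesssim\|f_{0}\|_{L^{2}}$ follows from the same weighted energy argument without the $\partial_{x}^{\beta}$ derivatives. Sobolev interpolation then produces a uniform $L_{x}^{\infty}L_{v}^{2}$ bound for $w\mathcal{R}^{(3)}$. Summing over the decomposition $wf=\sum_{j}wh^{(j)}+w\mathcal{R}^{(3)}$ and using the lower bound on $w$ in the space-like region gives
\[
|f(t,x)|_{L_{v}^{2}}\leq e^{-(\left\langle x\right\rangle +2Mt)/(8D)}\,\|wf\|_{L_{x}^{\infty}L_{v}^{2}}\lesssim (1+t^{-9/4})\,e^{-C(\left\langle x\right\rangle +t)}\|f_{0}\|_{L^{2}},
\]
with $C=1/(8D)$, which is the claim.

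I do not anticipate any serious obstacle: the heart of the argument --- designing the weight $w$ and proving its weighted energy coercivity --- has already been carried out in Proposition \ref{weig_1}. The only care point is the plain-$L^{2}$ interpolation piece for $\mathcal{R}^{(3)}$, for which one rerun the proof of Proposition \ref{weig_1} with $H_{x}^{2}$ replaced by $L^{2}$; this is a strict simplification of the argument already given and introduces no new difficulty.
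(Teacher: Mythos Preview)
Your proposal is correct and follows essentially the same route as the paper: the wave--remainder decomposition, the lower bound $w(x,t)\geq e^{(\langle x\rangle+2Mt)/(8D)}$ in the region $\langle x\rangle\geq 2Mt$, the Sobolev interpolation $\|g\|_{L_x^\infty L_v^2}\lesssim\|g\|_{H_x^2 L_v^2}^{3/4}\|g\|_{L^2}^{1/4}$, and Proposition~\ref{weig_1} for the weighted $H_x^2L_v^2$ bounds are exactly what the paper uses. One small simplification: you do not need to rerun any energy argument for the plain $L^2$ piece of $w\mathcal{R}^{(3)}$, because in this paper $\|\cdot\|_{H_x^2L_v^2}=\sum_{|\alpha|\leq 2}\|\partial_x^\alpha\cdot\|_{L^2}$ already contains the $L^2$ norm, so $\|w\mathcal{R}^{(3)}\|_{L^2}\leq\|w\mathcal{R}^{(3)}\|_{H_x^2L_v^2}\lesssim\|f_0\|_{L^2(\mu)}$ is immediate from Proposition~\ref{weig_1}.
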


\subsection{The case $0<\gamma<3/2$: Subexponential decay}

If $0<\gamma<3/2$, we consider the weight functions
\[
w(t,x,v)=e^{\frac{\alpha\rho(t,x,v)}{2}}\,,\quad\mu(x,v)=e^{\alpha c(x,v)}\,,
\]
where
\begin{align*}
\rho(t,x,v)  &  =5\left(  \delta(\left\langle x\right\rangle -Mt)\right)
^{\frac{\gamma}{3-\gamma}}\left(  1-\chi\left(  \delta\left(  \left\langle
x\right\rangle -Mt\right)  \left\langle v\right\rangle ^{\gamma-3}\right)
\right) \\
&  \quad+\left[  \left(  1-\chi\left(  \delta\left(  \left\langle
x\right\rangle -Mt\right)  \left\langle v\right\rangle ^{\gamma-3}\right)
\right)  \delta(\left\langle x\right\rangle -Mt)\left\langle v\right\rangle
^{2\gamma-3}+3\left\langle v\right\rangle ^{\gamma}\right]  \chi\left(
\delta\left(  \left\langle x\right\rangle -Mt\right)  \left\langle
v\right\rangle ^{\gamma-3}\right)  \,,
\end{align*}
and
\begin{align*}
c(x,v)  &  =5\left(  \delta\left\langle x\right\rangle \right)  ^{\frac
{\gamma}{3-\gamma}}\left(  1-\chi\left(  \delta\left\langle x\right\rangle
\left\langle v\right\rangle ^{\gamma-3}\right)  \right) \\
&  \quad+\left[  \left(  1-\chi\left(  \delta\left\langle x\right\rangle
\left\langle v\right\rangle ^{\gamma-3}\right)  \right)  \delta\left\langle
x\right\rangle \left\langle v\right\rangle ^{2\gamma-3}+3\left\langle
v\right\rangle ^{\gamma}\right]  \chi\left(  \delta\left\langle x\right\rangle
\left\langle v\right\rangle ^{\gamma-3}\right)  \,.
\end{align*}
Here $M$ is a large positive constant, $\delta,$ $\alpha$ are small positive
constants; all of them will be chosen later. We introduce the following
space-velocity decomposition:
\[
H_{+}=\{(x,v):[\delta(\left\langle x\right\rangle -Mt)]\geq2\left\langle
v\right\rangle ^{3-\gamma}\}\,,
\]%
\[
H_{0}=\{(x,v):\left\langle v\right\rangle ^{3-\gamma}<[\delta(\left\langle
x\right\rangle -Mt)]<2\left\langle v\right\rangle ^{3-\gamma}\}\,,
\]
and
\[
H_{-}=\{(x,v):[\delta(\left\langle x\right\rangle -Mt)]\leq\left\langle
v\right\rangle ^{3-\gamma}\}\,.
\]

\begin{proposition}
\label{weig_2}Consider the weight functions
\[
w(t,x,v)=e^{\frac{\alpha\rho(t,x,v)}{2}}\,\ \text{\ \ and}\ \ \ \ \mu
(x,v)=e^{\alpha c(x,v)}\,,
\]
where $\alpha>0$ is sufficiently small with $\alpha\gamma<1/20.$ Then

\noindent\textrm{(i) }For $0\leq j\leq3,$\textrm{ }%
\[
\Vert wh^{(j)}\Vert_{H_{x}^{2}L_{v}^{2}}\lesssim t^{-3+j}\Vert f_{0}%
\Vert_{L^{2}(\mu)},\ \ \ \ \ 0<t\leq1,
\]
and%
\[
\Vert wh^{(j)}\Vert_{H_{x}^{2}L_{v}^{2}}\lesssim\left(  1+t\right)  ^{j}\Vert
f_{0}\Vert_{L^{2}(\mu)},\ \ \ \ \ t>1.
\]

\noindent\textrm{(ii)} \textrm{ }For $1\leq\gamma<3/2$,
\[
\Vert w\mathcal{R}^{(3)}\Vert_{H_{x}^{2}L_{v}^{2}}\lesssim t(1+t)\Vert
f_{0}\Vert_{L^{2}(\mu)}\,,\,\ \ \ t>0,
\]
and for $0<\gamma<1$,
\[
\Vert w\mathcal{R}^{(3)}\Vert_{H_{x}^{2}L_{v}^{2}}\lesssim t(1+t^{4})\Vert
f_{0}\Vert_{L^{2}(\mu)}\,,\,\ \ \ t>0.
\]

\end{proposition}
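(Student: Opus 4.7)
The proof follows the template of Proposition~\ref{weig_1}, with two complications: the weight $w$ here depends on $v$ as well as $(t,x)$, and subexponential decay is more delicate than pure exponential. For Part~(i), I would first establish the pointwise comparison $w(t,x,v) \leq \mu(x,v)^{1/2}$ for all $t \geq 0$. At $t=0$ this is an identity since $\rho(0,x,v) = c(x,v)$; for $t > 0$ it follows from a case analysis in the three regions $H_\pm$, $H_0$ (using monotonicity of the cut-off $\chi$ and of $s \mapsto s^{\gamma/(3-\gamma)}$). A direct computation then yields $|\partial_x^k w| \leq C_k w$ for $k \leq 2$ (the condition $\gamma < 3/2$ is essential to keep $|\nabla_x \rho|$ bounded by a power $\langle v\rangle^{2\gamma-3} \leq 1$). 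Combining these gives $\|wh^{(j)}\|_{H_x^2 L_v^2} \lesssim \|h^{(j)}\|_{H_x^2 L_v^2(\mu)}$, and the claimed bounds follow directly from Lemma~\ref{second-der} applied with weight $\mu$.

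For Part~(ii), I would perform a weighted energy estimate on $u := w\mathcal{R}^{(3)}$. A direct computation, using $Lg = \Delta_v g - V(v) g$, shows that $u$ satisfies
\begin{equation*}
\partial_t u + v\cdot\nabla_x u - Lu = -\frac{2\nabla_v w}{w}\cdot \nabla_v u + G_0\, u + w Kh^{(3)},
\end{equation*}
where $G_0$ collects terms built from $w_t$, $v\cdot\nabla_x w$, $\Delta_v w$, and $|\nabla_v w|^2$. Multiplying by $\partial_x^\beta u$ for $|\beta|\leq 2$, integrating in $(x,v)$, and performing one integration by parts on the first-order drift term (which cancels the $\Delta_v w$ contribution in $G_0$) yields
\begin{equation*}
\tfrac{1}{2}\frac{d}{dt}\|\partial_x^\beta u\|_{L^2}^2 + c\|\mathrm{P}_1\partial_x^\beta u\|_{L_\sigma^2}^2 \leq \int \widetilde{G}\,|\partial_x^\beta u|^2\, dxdv + \|\partial_x^\beta u\|_{L^2}\|wKh^{(3)}\|_{H_x^{|\beta|}L_v^2},
\end{equation*}
where $\widetilde{G} = \tfrac{\alpha}{2}(\rho_t + v\cdot\nabla_x\rho) + \tfrac{\alpha^2}{4}|\nabla_v\rho|^2$. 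Lower-order commutators from $[\partial_x^\beta, w]$ are absorbed using $|\partial_x^k w|\lesssim w$, and the coercivity of $L$ (Lemma~\ref{co}) provides the microscopic dissipation.

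The heart of the estimate is showing $\int \widetilde{G}|\partial_x^\beta u|^2\,dxdv \leq \tfrac{c}{2}\|\mathrm{P}_1\partial_x^\beta u\|_{L_\sigma^2}^2 + C\|\partial_x^\beta u\|_{L^2}^2$, obtained via the three-region decomposition. On $H_+$, $\rho$ is $v$-independent, so $\nabla_v\rho = 0$ and explicit differentiation gives $\rho_t + v\cdot\nabla_x\rho = \tfrac{5\gamma\delta}{3-\gamma}(\delta(\langle x\rangle - Mt))^{(2\gamma-3)/(3-\gamma)}\bigl(\tfrac{v\cdot x}{\langle x\rangle} - M\bigr)$; the defining inequality $\delta(\langle x\rangle - Mt) \geq 2\langle v\rangle^{3-\gamma}$ of $H_+$ bounds this by $C\alpha\delta\langle v\rangle^{2\gamma-2}$, and it is pointwise non-positive when $|v|\leq M$. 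On $H_-$, $\rho = 3\langle v\rangle^\gamma$ yields $\rho_t + v\cdot\nabla_x\rho = 0$ while $|\nabla_v\rho|^2 \lesssim \langle v\rangle^{2(\gamma-1)}$, of the same magnitude. The transition zone $H_0$ is handled via the compact support of $\chi'$. Choosing $\alpha,\delta$ small and $M$ large guarantees the absorption; the residual $\|\partial_x^\beta u\|_{L^2}^2$ accounts for the macroscopic part $\mathrm{P}_0 u$, where $L$ provides no dissipation but the Gaussian factor $\mathcal{M}$ gives bounded $v$-integration.

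Finally, a Gronwall argument together with $u(0)=0$ yields $\|u(t)\|_{H_x^2 L_v^2} \lesssim \int_0^t \|wKh^{(3)}(s)\|_{H_x^2 L_v^2}\,ds$, and substituting the Part~(i) bounds on $wh^{(3)}$ (with $|Kg|\lesssim |g|$) produces the claimed polynomial-in-$t$ growth rates, with the extra factors $(1+t)$ vs.\ $(1+t^4)$ reflecting the $\gamma\geq 1$ vs.\ $0<\gamma<1$ cases of Lemma~\ref{second-der}. The principal obstacle throughout is the weight-commutator analysis on $\widetilde{G}$: the coercivity of $L$ provides only $\langle v\rangle^{2(\gamma-1)}$-weighted velocity dissipation, which exactly matches the worst-case magnitude of $\widetilde{G}$ in the critical regions, so the exponent $\gamma/(3-\gamma)$ in the definition of $\rho$ is precisely chosen to make the borderline absorption succeed.
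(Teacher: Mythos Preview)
Your Part~(i) matches the paper exactly: the monotonicity $\rho(t,x,v)\le\rho(0,x,v)=c(x,v)$ gives $w\le\mu^{1/2}$, the bound $|\partial_x^{\beta}w|\lesssim w$ handles the product rule, and Lemma~\ref{second-der} finishes.

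For Part~(ii) there is a genuine gap. Your energy inequality leaves a residual $C\|\partial_x^\beta u\|_{L^2}^2$ from the macroscopic part $\mathrm{P}_0\partial_x^\beta u$, and you then assert that Gronwall yields $\|u(t)\|\lesssim\int_0^t\|wKh^{(3)}\|\,ds$. These two statements are incompatible: an inequality $\tfrac{d}{dt}\|u\|^2\le C\|u\|^2+\|u\|\|wKh^{(3)}\|$ produces a factor $e^{Ct}$ after Gronwall, and since $\gamma/(3-\gamma)<1$ here, that exponential growth cannot be compensated by the sub-exponential weight; you would not obtain the polynomial bounds $t(1+t)$ or $t(1+t^4)$.

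The paper closes this gap by two devices you are missing. First, on $H_+$ it does not merely bound $|\rho_t+v\cdot\nabla_x\rho|$ but keeps the \emph{sign}: $\partial_t\rho=-\tfrac{5\delta M\gamma}{3-\gamma}[\delta(\langle x\rangle-Mt)]^{(2\gamma-3)/(3-\gamma)}$ is strictly negative with coefficient proportional to $M$, while the $\mathrm{P}_0$-contribution of $v\cdot\nabla_x\rho$ on $H_+$ (after integrating $|v|\mathcal{M}$ in $v$, which is $M$-independent) carries only a factor $\delta$. Choosing $M$ large makes the net $H_+$ macroscopic contribution non-positive, so it is discarded rather than bounded by $\|u\|^2$. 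Second, the only surviving $\mathrm{P}_0$ terms live on $H_0\cup H_-$, where $\rho\lesssim\langle v\rangle^\gamma$ carries \emph{no $x$-growth}; this allows those terms to be estimated by the \emph{unweighted} norm, giving $\int_{H_0\cup H_-}|\mathrm{P}_0\partial_x^\beta u|^2\lesssim\|u\|_{H_x^2L_v^2}\|\mathcal{R}^{(3)}\|_{H_x^2L_v^2}$. The resulting inequality $\tfrac{d}{dt}\|u\|\lesssim\|h^{(3)}\|_{H_x^2L_v^2(\mu)}+\|\mathcal{R}^{(3)}\|_{H_x^2L_v^2}$ is \emph{linear} in $\|u\|$, and integrating it against the known bounds \eqref{R3} (for $1\le\gamma<3/2$) or \eqref{Prop19-1} (for $0<\gamma<1$) yields exactly the stated polynomial growth. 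Your sketch lacks precisely this mechanism for passing from the weighted $u$ back to the unweighted $\mathcal{R}^{(3)}$ on the region where the weight is harmless.
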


\begin{proof}
It is similar to Proposition \ref{weig_1} that the weighted energy estimate of
the wave parts $h^{(j)}$ is a consequence of Lemma \ref{second-der} in virtue
of $\rho(t,x,v)$ being non-increasing in $t$ and $\rho(0,x,v)=c(x,v)$.

We shall focus on the weighted energy estimate for the remainder part
$\mathcal{R}^{(3)}$, $t>0$. We want to show that for $1\leq\gamma<3/2$,
\[
\Vert w\mathcal{R}^{(3)}\Vert_{H_{x}^{2}L_{v}^{2}}\lesssim t(1+t)\Vert
f_{0}\Vert_{L^{2}(\mu)}\,,\,\ \ \ t>0,
\]
and for $0<\gamma<1$,
\[
\Vert w\mathcal{R}^{(3)}\Vert_{H_{x}^{2}L_{v}^{2}}\lesssim t(1+t^{4}))\Vert
f_{0}\Vert_{L^{2}(\mu)}\,,\,\ \ \ t>0.
\]

Let $u=w\mathcal{R}^{(3)}=e^{\frac{\alpha\rho}{2}}\mathcal{R}^{(3)}$, and then
$\partial_{x}^{\beta}u,$ where $\beta$ is a multi-index, solves the equation%
\begin{align*}
\partial_{t}\left(  \partial_{x}^{\beta}u\right)   &  =-v\cdot\nabla
_{x}\left(  \partial_{x}^{\beta}u\right)  +\frac{\alpha}{2}(\partial_{t}%
\rho+v\cdot\nabla_{x}\rho)\partial_{x}^{\beta}u+e^{\frac{\alpha\rho}{2}%
}L\left(  e^{-\frac{\alpha\rho}{2}}\partial_{x}^{\beta}u\right) \\
&  \quad+\frac{\alpha}{2}\sum_{\substack{\beta_{1}+\beta_{2}=\beta\\\left\vert
\beta_{1}\right\vert \geq1}}\binom{\beta}{\beta_{1}\,\beta_{2}}(\partial
_{t}\partial_{x}^{\beta_{1}}\rho+v\cdot\nabla_{x}\partial_{x}^{\beta_{1}}%
\rho)\partial_{x}^{\beta_{2}}u\\
&  \quad+\sum_{\substack{\beta_{1}+\beta_{2}+\beta_{3}=\beta\\\left\vert
\beta_{3}\right\vert <\left\vert \beta\right\vert }}\binom{\beta}{\beta
_{1}\,\beta_{2}\,\beta_{3}}(\partial_{x}^{\beta_{1}}e^{\frac{\alpha\rho}{2}%
})L\left(  (\partial_{x}^{\beta_{2}}e^{-\frac{\alpha\rho}{2}})\partial
_{x}^{\beta_{3}}u\right) \\
&  \quad+K\partial_{x}^{\beta}\left(  wh^{(3)}\right)  .
\end{align*}

The energy estimate gives
\[
\begin{aligned} & \quad \frac{1}{2}\frac{d}{dt}\left\Vert \partial_{x}^{\beta}u\right\Vert _{L^{2}}^{2}\\ & =\int_{{\mathbb{R}}^{3}}\left\langle e^{\frac{\alpha \rho}{2}}L\left( e^{-\frac{\alpha\rho}{2}}\partial_{x}^{\beta}u\right) ,\partial_{x}^{\beta}u\right\rangle _{v}dx+\frac{\alpha}{2}\int_{{\mathbb{R}}^{3}}\left\langle (\partial_{t}\rho+v\cdot\nabla_{x}\rho)\partial_{x}^{\beta}u,\partial_{x}^{\beta }u\right\rangle _{v}dx\\ & \quad+\frac{\alpha}{2}\sum_{\substack{\beta_{1}+\beta_{2} =\beta \\ \left\vert \beta_{1}\right\vert \geq1} }\binom{\beta}{\beta_{1} \,\beta_{2}}\int_{{\mathbb{R}}^{3}}\left\langle (\partial_{t}\partial_{x}^{\beta_{1}}\rho+v\cdot \nabla_{x}\partial_{x}^{\beta_{1}}\rho)\partial_{x}^{\beta_{2}}u,\partial _{x}^{\beta}u\right\rangle _{v}dx\\ & \quad +\sum_{\substack{\beta_{1}+\beta_{2}+\beta_{3} =\beta \\ \left\vert \beta_{3}\right\vert <|\beta|} }\binom{\beta}{\beta_{1} \,\beta_{2}\, \beta_{3}}\int_{{\mathbb{R}}^{3}}\left\langle (\partial _{x}^{\beta_{1}}e^{\frac{\alpha\rho}{2}})L\left( (\partial_{x}^{\beta_{2}}e^{-\frac{\alpha\rho}{2}})\partial_{x}^{\beta_{3}}u\right),\partial _{x}^{\beta}u\right\rangle _{v}dx\\ & \quad + \int_{{\mathbb{R}}^{3}}\left\langle K\partial_{x}^{\beta}\left( wh^{(3)}\right) ,\partial_{x}^{\beta}u\right\rangle _{v}dx\\ &\quad := (I_1)+(I_2)+(I_3)+(I_4)+(I_5)\,. \end{aligned}
\]
We shall estimate $(I_{i})\,,\ i=1,\ldots,5,$ term by term. \medskip

For $(I_{1})$, it is easy to see that
\[
\left\langle g,e^{\frac{\alpha\rho}{2}}L\left(  e^{-\frac{\alpha\rho}{2}%
}g\right)  \right\rangle _{v}=\left\langle g,e^{-\frac{\alpha\rho}{2}}L\left(
e^{\frac{\alpha\rho}{2}}g\right)  \right\rangle _{v}=\left\langle
g,Lg\right\rangle _{v}+\frac{\alpha^{2}}{4}\left\langle g^{2},\left\vert
\nabla_{v}\rho\right\vert ^{2}\right\rangle _{v}\,.
\]
In addition, direct calculation gives
\begin{align*}
\nabla_{v}\rho &  =\left[  (\gamma-3)(1-2\chi)\delta(\left\langle
x\right\rangle -Mt)\left\langle v\right\rangle ^{2\gamma-3}+3(\gamma
-3)\left\langle v\right\rangle ^{\gamma}-5(\gamma-3)\left(  \delta
(\left\langle x\right\rangle -Mt)\right)  ^{\frac{\gamma}{3-\gamma}}\right] \\
&  \quad\times\left[  \delta(\left\langle x\right\rangle -Mt)\left\langle
v\right\rangle ^{\gamma-4}\right]  \frac{v}{\left\langle v\right\rangle }%
\chi^{\prime}\\
&  \quad+\left[  (2\gamma-3)\delta(\left\langle x\right\rangle
-Mt)\left\langle v\right\rangle ^{2\gamma-4}\right]  \frac{v}{\left\langle
v\right\rangle }(1-\chi)\chi+3\gamma\left\langle v\right\rangle ^{\gamma
-1}\frac{v}{\left\langle v\right\rangle }\chi\,.
\end{align*}
This implies
\[
|\nabla_{v}\rho|\lesssim\left\langle v\right\rangle ^{\gamma-1}\quad
\text{on}\quad H_{0}\,\cup H_{-}\,,
\]
and
\[
\nabla_{v}\rho=0\quad\text{on}\quad H_{+}\,.
\]
Therefore,
\begin{equation}
\label{formula-I1}\begin{aligned} (I_1) &=\int_{{\mathbb{R}}^{3}}\left\langle e^{\frac{\alpha\rho}{2}}L\left( e^{-\frac{\alpha\rho}{2}}\partial_{x}^{\beta}u\right) ,\partial_{x}^{\beta }u\right\rangle _{v}dx\\ &\leq-\left( \nu_{0}-\frac{\alpha^{2}C}{4}\right) \int_{{\mathbb{R}}^{3}}\left\vert \mathrm{P}_{1}\partial_{x}^{\beta }u\right\vert _{L_{\sigma}^{2}}^{2}dx+\frac{\alpha^{2}C}{4}\int_{H_{0}\cup H_{-}}\left\vert \mathrm{P}_{0}\partial_{x}^{\beta}u\right\vert ^{2}dxdv, \end{aligned}
\end{equation}
for some constant $\nu_{0}>0.$ \medskip

For $(I_{2})$ and $(I_{3})$, we need the estimates of derivatives of
$\rho(t,x,v)$. Direct computation gives%
\begin{align*}
\partial_{t}\rho &  =-\delta M\left\langle v\right\rangle ^{2\gamma-3}\left(
\frac{5\gamma}{3-\gamma}\left[  \delta(\left\langle x\right\rangle
-Mt)\left\langle v\right\rangle ^{\gamma-3}\right]  ^{\frac{2\gamma
-3}{3-\gamma}}\left(  1-\chi\right)  +\chi(1-\chi)\right) \\
&  \quad+\delta M\left(  5\left[  \delta(\left\langle x\right\rangle
-Mt)\left\langle v\right\rangle ^{\gamma-3}\right]  ^{\frac{\gamma}{3-\gamma}%
}-(1-2\chi)\left[  \delta(\left\langle x\right\rangle -Mt)\left\langle
v\right\rangle ^{\gamma-3}\right]  -3\right)  \left\langle v\right\rangle
^{2\gamma-3}\chi^{\prime}\,\leq0,
\end{align*}
(the constants $5$ and $3$ are chosen artificially such that the quantity in
the latter bracket is nonnegative on $H_{0}$) and,
\begin{align*}
\nabla_{x}\rho &  =\delta\left(  \nabla_{x}\left\langle x\right\rangle
\right)  \left\langle v\right\rangle ^{2\gamma-3}\left(  \frac{5\gamma
}{3-\gamma}\left[  \delta(\left\langle x\right\rangle -Mt)\left\langle
v\right\rangle ^{\gamma-3}\right]  ^{\frac{2\gamma-3}{3-\gamma}}\left(
1-\chi\right)  +\chi(1-\chi)\right) \\
&  \quad-\delta\left(  \nabla_{x}\left\langle x\right\rangle \right)  \left(
5\left[  \delta(\left\langle x\right\rangle -Mt)\left\langle v\right\rangle
^{\gamma-3}\right]  ^{\frac{\gamma}{3-\gamma}}-(1-2\chi)\left[  \delta
(\left\langle x\right\rangle -Mt)\left\langle v\right\rangle ^{\gamma
-3}\right]  -3\right)  \left\langle v\right\rangle ^{2\gamma-3}\chi^{\prime
}\,,
\end{align*}
so
\[
\partial_{t}\rho=v\cdot\nabla_{x}\rho=0\quad\text{on}\ H_{-}\,\text{,}%
\]%
\[
|\partial_{t}\rho|\lesssim\delta M\left\langle v\right\rangle ^{2\gamma
-3},\ \ \ \ \ |v\cdot\nabla_{x}\rho|\lesssim\delta\left\langle v\right\rangle
^{2\gamma-2}\ \quad\text{on\ }H_{0}\text{\thinspace,}\ \ \
\]%
\[
\partial_{t}\rho=-\frac{5\delta M\gamma}{3-\gamma}\left[  \delta(\left\langle
x\right\rangle -Mt)\right]  ^{\frac{2\gamma-3}{3-\gamma}},\quad v\cdot
\nabla_{x}\rho=\frac{5\delta\gamma}{3-\gamma}\frac{v\cdot x}{\left\langle
x\right\rangle }\left[  \delta(\left\langle x\right\rangle -Mt)\right]
^{\frac{2\gamma-3}{3-\gamma}}\ \ \ \text{on\ }H_{+}\text{\thinspace.}%
\]
Furthermore, we can also obtain that for $\left\vert \beta_{1}\right\vert
\geq1,$
\[
\partial_{t}\partial_{x}^{\beta_{1}}\rho=\nabla_{x}\partial_{x}^{\beta_{1}%
}\rho=0\ \ \ \ \text{on\ }H_{-}\text{,}%
\]%
\[
\left\vert \partial_{t}\partial_{x}^{\beta_{1}}\rho\right\vert \lesssim
\delta^{2}M\left\langle v\right\rangle ^{\gamma+\left(  \left\vert \beta
_{1}\right\vert +1\right)  \left(  \gamma-3\right)  },\ \ \ \ \left\vert
\nabla_{x}\partial_{x}^{\beta_{1}}\rho\right\vert \lesssim\delta
^{2}\left\langle v\right\rangle ^{\gamma+\left(  \left\vert \beta
_{1}\right\vert +1\right)  \left(  \gamma-3\right)  }\ \ \ \ \text{on\ }%
H_{0}\cup H_{+}\text{.}%
\]
From these, there exist constants $C>0$ and $C^{\prime}>0$ such that
\begin{equation}
\label{formula-I2-1}\begin{aligned} \alpha\left\vert \int_{{\mathbb{R}}^{3}}\left\langle v\cdot\nabla_{x}\rho\partial_{x}^{\beta}u,\partial_{x}^{\beta}u\right\rangle _{v}dx\right\vert & \leq \alpha\delta C\left( \int_{\mathbb{R}^{3}}|\left\langle v\right\rangle ^{\gamma-1}\mathrm{P}_{1}\partial_{x}^{\beta}u|_{L_{v}^{2}}^{2}dx+\int_{H_{0}}|\mathrm{P}_{0}\partial_{x}^{\beta}u|^{2}dxdv\right. \\ & \qquad\left.+\int_{H_{+}}\left[ \delta(\left\langle x\right\rangle -Mt)\right] ^{\frac{2\gamma-3}{3-\gamma}}|\mathrm{P}_{0}\partial_{x}^{\beta}u|^{2}dxdv\right)\, , \end{aligned}
\end{equation}
\begin{equation}
\label{formula-I2-2}\begin{aligned} \alpha\int_{\mathbb{R}^{3}}\left\langle (\partial_{t}\rho)\partial _{x}^{\beta}u,\partial_{x}^{\beta}u\right\rangle _{v}dx & \leq -\alpha\delta MC^{\prime}\int_{H_{+}}\left[ \delta(\left\langle x\right\rangle -Mt)\right] ^{\frac{2\gamma-3}{3-\gamma}}|\mathrm{P}_{0}\partial_{x}^{\beta}u|^{2}dxdv\\ & \quad + \alpha\delta MC\left( \int_{\mathbb{R}^{3}}|\left\langle v\right\rangle ^{\gamma-1}\mathrm{P}_{1}\partial_{x}^{\beta}u|_{L_{v}^{2}}^{2}dx+\int_{H_{0}}|\mathrm{P}_{0}\partial_{x}^{\beta}u|^{2}dxdv\right)\, , \end{aligned}
\end{equation}
and for $\left\vert \beta_{1}\right\vert \geq1,$%
\begin{equation}
\label{formula-I3}\begin{aligned} &\quad \frac{\alpha}{2}\left\vert \int_{\mathbb{R}^{3}}\left\langle (\partial _{t}\partial_{x}^{\beta_{1}}\rho+v\cdot\nabla_{x}\partial_{x}^{\beta_{1}}\rho)\partial_{x}^{\beta_{2}}u,\partial_{x}^{\beta}u\right\rangle _{v}dx\right\vert \\ & \leq\frac{\alpha\delta^{2}M}{2}C\left[ \int_{\mathbb{R}^{3}} \left( \left|\left\langle v\right\rangle ^{\gamma-1}\mathrm{P}_{1}\partial_{x}^{\beta_{2}}u\right|_{L_{v}^{2}}^{2}+|\left\langle v\right\rangle ^{\gamma-1}\mathrm{P}_{1}\partial _{x}^{\beta}u|_{L_{v}^{2}}^{2}\right) dx \right. \\ &\qquad +\int_{H_{0}}\left( \left\vert \mathrm{P}_{0}\partial_{x}^{\beta_{2}}u\right\vert ^{2}+\left\vert \mathrm{P}_{0}\partial_{x}^{\beta}u\right\vert ^{2}\right) dxdv\\ & \qquad\left. +\int_{H_{+}}\left[ \delta(\left\langle x\right\rangle -Mt)\right] ^{\frac{2\gamma-3}{3-\gamma}}\left( \left\vert \mathrm{P}_{0}\partial _{x}^{\beta_{2}}u\right\vert ^{2}+\left\vert \mathrm{P}_{0}\partial_{x}^{\beta}u\right\vert ^{2}\right) dxdv \right] \,. \end{aligned}
\end{equation}

As for $(I_{4})$, $|\beta_{1}|+|\beta_{2}|\geq1$, we have
\begin{equation}
\label{formula-I4}\begin{aligned} &\quad \left| \int_{{\mathbb{R}}^{3}}\left\langle (\partial _{x}^{\beta_{1}}e^{\frac{\alpha\rho}{2}})L\left( (\partial_{x}^{\beta_{2}}e^{-\frac{\alpha\rho}{2}})\partial_{x}^{\beta_{3}}u\right),\partial _{x}^{\beta}u\right\rangle _{v}dx \right|\\ & \leq\frac{\alpha\delta C}{2}\left[ \int_{\mathbb{R}^{3}}\left( \left\vert \mathrm{P}_{1}\partial_{x}^{\beta_{3}}u\right\vert _{L_{\sigma}^{2}}^{2}+\left\vert \mathrm{P}_{1}\partial_{x}^{\beta}u\right\vert _{L_{\sigma }^{2}}^{2}\right) dx+\int_{H_{0}}\left\vert \mathrm{P}_{0}\partial_{x}^{\beta_{3}}u\right\vert ^{2}+\left\vert \mathrm{P}_{0}\partial_{x}^{\beta }u\right\vert ^{2}dxdv\right. \\ & \qquad\left. +\int_{H_{+}}\left[ \delta(\left\langle x\right\rangle -Mt)\right] ^{\frac{2\gamma-3}{3-\gamma}}\left( \left\vert \mathrm{P}_{0}\partial_{x}^{\beta_{3}}u\right\vert ^{2}+\left\vert \mathrm{P}_{0}\partial_{x}^{\beta}u\right\vert ^{2}\right) dxdv\right] . \end{aligned}
\end{equation}

Lastly,
\begin{equation}
(I_{5})\leq\left\vert \int_{{\mathbb{R}}^{3}}\left\langle K\partial_{x}%
^{\beta}\left(  wh^{(3)}\right)  ,\partial_{x}^{\beta}u\right\rangle
_{v}dx\right\vert \lesssim\left\Vert \partial_{x}^{\beta}u\right\Vert _{L^{2}%
}\left\Vert \partial_{x}^{\beta}\left(  wh^{(3)}\right)  \right\Vert _{L^{2}}.
\label{formula-I5}%
\end{equation}

Gathering the terms (\ref{formula-I1})--(\ref{formula-I5}), we find%
\begin{align*}
\frac{d}{dt}\Vert u\Vert_{H_{x}^{2}L_{v}^{2}}^{2}  &  \lesssim\Vert
u\Vert_{H_{x}^{2}L_{v}^{2}}\Vert wh^{(3)}\Vert_{H_{x}^{2}L_{v}^{2}}%
\,+\int_{H_{0}\cup H_{-}}\left(  \left\vert \mathrm{P}_{0}\nabla_{x}%
^{2}u\right\vert ^{2}+\left\vert \mathrm{P}_{0}\nabla_{x}u\right\vert
^{2}+\left\vert \mathrm{P}_{0}u\right\vert ^{2}\right)  dxdv\\
&  \lesssim\Vert u\Vert_{H_{x}^{2}L_{v}^{2}}\Vert wh^{(3)}\Vert_{H_{x}%
^{2}L_{v}^{2}}+\Vert u\Vert_{H_{x}^{2}L_{v}^{2}}\Vert\mathcal{R}^{(3)}%
\Vert_{H_{x}^{2}L_{v}^{2}}\\
&  \lesssim\Vert u\Vert_{H_{x}^{2}L_{v}^{2}}\left(  \Vert h^{(3)}\Vert
_{H_{x}^{2}L_{v}^{2}\left(  \mu\right)  }+\Vert\mathcal{R}^{(3)}\Vert
_{H_{x}^{2}L_{v}^{2}}\right)  ,
\end{align*}
after choosing $\delta$, $\alpha>0$ small and $M$ large enough with
$\alpha\gamma<1/20$. Hence, it follows from Lemmas \ref{initial-sing} and
\ref{second-der} that for $1\leq\gamma<3/2$,
\[
\Vert w\mathcal{R}^{(3)}\Vert_{H_{x}^{2}L_{v}^{2}}=\Vert u\Vert_{H_{x}%
^{2}L_{v}^{2}}\lesssim t(1+t)\left\Vert f_{0}\right\Vert _{L^{2}\left(
\mu\right)  },
\]
and for $0<\gamma<1,$%
\[
\Vert w\mathcal{R}^{(3)}\Vert_{H_{x}^{2}L_{v}^{2}}=\Vert u\Vert_{H_{x}%
^{2}L_{v}^{2}}\lesssim t(1+t^{4})\left\Vert f_{0}\right\Vert _{L^{2}\left(
\mu\right)  }.
\]
This completes the proof of the proposition.
\end{proof}

Observe that for $\left\langle x\right\rangle >2Mt$,
\[
\rho(t,x,v)\gtrsim\left(  \delta(\left\langle x\right\rangle -Mt)\right)
^{\frac{\gamma}{3-\gamma}}.
\]
and
\[
\left\langle x\right\rangle -Mt>\frac{\left\langle x\right\rangle }{3}%
+\frac{Mt}{3}\,.
\]
The Sobolev inequality immediately gets

\begin{theorem}
[Space-like region for $0<\gamma<3/2$]%
\label{space-like region for 0<gamma<3/2} Let $0<\gamma<3/2$ and let $f$ be
the solution to equation (\ref{in.1.c}). Assume that the initial condition
$f_{0}$ has compact support in the $x$ variable and is bounded in $L_{v}%
^{2}(e^{4\alpha\left\langle v\right\rangle ^{\gamma}})$ for $\alpha>0$ small
enough. Then there exists a positive constant $C=C(M)$ such that for
$\left\langle x\right\rangle \geq2Mt$,
\[
\left\vert f\right\vert _{L_{v}^{2}}\lesssim(1+t^{-9/4})e^{-C(\left\langle
x\right\rangle +t)^{\frac{\gamma}{3-\gamma}}}\Vert f_{0}\Vert_{L^{2}%
(e^{4\alpha\left\langle v\right\rangle ^{\gamma}})}\,.
\]

\end{theorem}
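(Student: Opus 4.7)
The plan is to combine the wave-remainder decomposition with the weighted energy estimates of Proposition \ref{weig_2} and the Sobolev embedding $\|\cdot\|_{L^\infty_x L^2_v}\lesssim\|\cdot\|_{H^2_x L^2_v}^{3/4}\|\cdot\|_{L^2}^{1/4}$, and then to extract the desired sub-exponential factor from a pointwise lower bound on the weight $w(t,x,v)$ inside the region $\langle x\rangle\geq 2Mt$. Writing $f=\sum_{j=0}^{3}h^{(j)}+\mathcal{R}^{(3)}$, I would apply Sobolev separately to $wh^{(j)}$ and to $w\mathcal{R}^{(3)}$. The $H^2_x L^2_v$ factors are controlled directly by Proposition \ref{weig_2}: on $(0,1]$ one has $\|wh^{(j)}\|_{H^2_x L^2_v}\lesssim t^{j-3}\|f_0\|_{L^2(\mu)}$, and the estimates for $t>1$ together with those for $\|w\mathcal{R}^{(3)}\|_{H^2_x L^2_v}$ grow only polynomially in $t$. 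For the $L^2$ factors in the interpolation, I use that $\rho(t,\cdot,\cdot)$ is non-increasing in $t$ with $\rho(0,\cdot,\cdot)=c$, so $w(t,x,v)\leq\mu(x,v)$ pointwise and $\|wh^{(j)}\|_{L^2}\leq\|h^{(j)}\|_{L^2(\mu)}$ is controlled by Lemma \ref{initial-sing}, while $\|w\mathcal{R}^{(3)}\|_{L^2}$ is dominated by its $H^2_x L^2_v$ norm. Interpolation on $h^{(0)}$ thus produces the small-time singularity $t^{-9/4}$, whereas the remaining polynomial-in-$t$ factors are harmless once multiplied by the sub-exponential.

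The second ingredient is the pointwise lower bound
\[
w(t,x,v)=e^{\alpha\rho(t,x,v)/2}\geq e^{C(\langle x\rangle+t)^{\gamma/(3-\gamma)}}\qquad\text{for all }v,\ \text{whenever }\langle x\rangle\geq 2Mt.
\]
On $H_{+}$ the cutoff $\chi$ vanishes, so by definition $\rho=5\bigl(\delta(\langle x\rangle-Mt)\bigr)^{\gamma/(3-\gamma)}$; on $H_{-}$ one has $\chi=1$, hence $\rho\geq 3\langle v\rangle^{\gamma}\geq 3\bigl(\delta(\langle x\rangle-Mt)\bigr)^{\gamma/(3-\gamma)}$; the transition region $H_{0}$ is handled by adding the two contributions, the numerical constants $5$ and $3$ in the definition of $\rho$ having been chosen precisely so that the sum stays bounded below by $\bigl(\delta(\langle x\rangle-Mt)\bigr)^{\gamma/(3-\gamma)}$ throughout. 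The elementary inequality $\langle x\rangle-Mt\geq (\langle x\rangle+Mt)/3\gtrsim\langle x\rangle+t$, valid on $\{\langle x\rangle\geq 2Mt\}$, then yields the stated bound.

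Putting the two pieces together, for $\langle x\rangle\geq 2Mt$ one has
\[
|f(t,x)|_{L^2_v}\leq e^{-C(\langle x\rangle+t)^{\gamma/(3-\gamma)}}\,|wf(t,x)|_{L^2_v}\leq e^{-C(\langle x\rangle+t)^{\gamma/(3-\gamma)}}\|wf\|_{L^\infty_x L^2_v},
\]
and the Sobolev interpolation described above gives $\|wf\|_{L^\infty_x L^2_v}\lesssim(1+t^{-9/4})P(t)\|f_0\|_{L^2(\mu)}$ for some polynomial $P$. Absorbing $P(t)$ into the sub-exponential by slightly shrinking $C$ yields the claimed estimate. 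The compact support hypothesis $f_0\equiv 0$ for $|x|\geq 1$ forces $c(x,v)\lesssim 1+\langle v\rangle^{\gamma}$ on the support of $f_0$, so $\|f_0\|_{L^2(\mu)}\lesssim\|f_0\|_{L^2(e^{4\alpha\langle v\rangle^{\gamma}})}$ for $\alpha$ small.

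The main delicate point I anticipate is the verification of the uniform-in-$v$ lower bound for $\rho$ on the transition region $H_{0}$, where both $(1-\chi)$ and $\chi$ contributions to $\rho$ are nontrivial and the balance between the $5(\delta(\langle x\rangle-Mt))^{\gamma/(3-\gamma)}(1-\chi)$ term and the $3\langle v\rangle^{\gamma}\chi$ term has to be checked carefully. Everything else is a routine application of the already-established weighted estimates.
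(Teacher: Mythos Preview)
Your proposal is correct and follows essentially the same route as the paper: one combines the weighted $H_x^2L_v^2$ bounds for $wh^{(j)}$ and $w\mathcal{R}^{(3)}$ from Proposition~\ref{weig_2} with the Sobolev interpolation, then uses the uniform-in-$v$ lower bound $\rho(t,x,v)\gtrsim(\delta(\langle x\rangle-Mt))^{\gamma/(3-\gamma)}$ together with $\langle x\rangle-Mt>\tfrac{1}{3}(\langle x\rangle+Mt)$ on $\{\langle x\rangle\geq 2Mt\}$, absorbs the polynomial-in-$t$ growth into the sub-exponential, and finally invokes the compact $x$-support of $f_0$ to replace $\|f_0\|_{L^2(\mu)}$ by $\|f_0\|_{L^2(e^{4\alpha\langle v\rangle^{\gamma}})}$. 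The paper simply states the lower bound on $\rho$ and the inequality for $\langle x\rangle-Mt$ as observations and then asserts that Sobolev gives the result, while you have written out the case analysis on $H_{\pm},H_0$ and the polynomial absorption explicitly; the content is the same.
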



\end{document}